\definecolor{warning_bgcol}{RGB}{252,248,229}
\definecolor{warning_textcol}{RGB}{111,89,54}
\definecolor{warning_linecol}{RGB}{248,235,207}
\definecolor{danger_bgcol}{RGB}{239,223,222}
\definecolor{danger_textcol}{RGB}{128,60,57}
\definecolor{danger_linecol}{RGB}{230,205,209}
\definecolor{success_bgcol}{RGB}{224,237,216}
\definecolor{success_textcol}{RGB}{68,104,60}
\definecolor{success_linecol}{RGB}{218,232,201}
\definecolor{info_bgcol}{RGB}{220,237,246}
\definecolor{info_textcol}{RGB}{58,100,126}
\definecolor{info_linecol}{RGB}{196,231,240}
\newmdenv[style=warning_style]{warning}
\newmdenv[style=info_style]{info}
\newmdenv[style=danger_style]{danger}
\newmdenv[style=success_style]{success}
\theoremstyle{remark}
\newtheoremstyle{mythmstyle}%
{0pt}
{0pt}
{}
{}
{\bf}
{.}
{.5em}
{}
\theoremstyle{mythmstyle}
\newtheorem{thm}{Theorem}
\newtheorem{lem}[thm]{Lemma}
\newcommand{\skproof}{Sketch of proof}
\renewenvironment{proof}[1][\proofname]{\par
\pushQED{\qed}%
\normalfont \topsep6\p@\@plus6\p@\relax
\trivlist
\item\relax
{\bf
#1\@addpunct{.}}\hspace\labelsep\ignorespaces
}{%
\popQED\endtrivlist\@endpefalse
}
\begin{document}

\title{Transversal gates and error propagation in 3D topological codes}
\author{H\'ector Bomb\'in}
\affiliation{PsiQuantum, Palo Alto}

\begin{abstract} 
I study the interplay of errors with transversal gates in 3D color codes, and introduce some new such gates. Two features of the transversal T gate stand out: (i) it naturally defines a set of correctable errors, and (ii) it exhibits a `linking charge' phenomenon that is of interest for a wide class of 3D topologically ordered systems.
\end{abstract}

\maketitle

\section{Introduction}

In order to perform quantum computations in a fault-tolerant manner information has to stay encoded during the computation~\cite{lidar:2013:quantum}. The encoding introduces redundancy that makes it possible to regularly extract the entropy created by noise. Encoding information, however, creates a new difficulty: the gates that comprise the computation have to be encoded too, \emph{i.e.} they have to map encoded states to encoded states, but at the same time they need to preserve the structure of noise, so that error correction can still be successfully carried out. Typically noise is expected to have a local structure, meaning that errors are more likely the less qubits they affect. This is expected due to the locality of physical interactions.

A key method to compute with encoded states is using transversal gates~\cite{lidar:2013:quantum}. These are circuits that operate separately on blocks of a few qubits each, typically one for each logical qubit involved in the gate. Their great advantage is that they do not propagate errors between blocks, thus preserving the locality of noise.

In some scenarios, however, noise can naturally adopt a non-local structure and yet be correctable. This is the case for thermal noise in known self-correcting systems~\cite{dennis:2002:tqm, bombin:2013:self}, and also under a form of single-shot error correction that is inspired by self-correcting systems~\cite{bombin:2015:single-shot}. When transversal gates are used in such contexts, it is not a priori clear whether the propagated noise (through the action of the gate) is compatible with error correction or, in the case of self-correction, whether the propagated excitations typically dissipate without giving rise to logical errors.

Tetrahedral codes~\cite{bombin:2007:3dcc, bombin:2015:gauge}, which are part of the color code family~\cite{bombin:2006:2dcc, bombin:2007:branyons, bombin:2007:3dcc, bombin:2015:gauge}, are a class of three-dimensional topological stabilizer codes prominent for their transversal gates. In particular the T gate, a key gate that enables universal computation when supplemented with Clifford operations, is transversal in tetrahedral codes.
This paper (i) introduces a new set of transversal gates for tetrahedral codes, and (ii) analyzes different aspects of error propagation for all their transversal gates, with an emphasis on the T gate since it has the richest structure.

The enlarged set of gates presented here is essential for colorful quantum computation, a set of quantum computation schemes based on tetrahedral codes~\cite{bombin:2018:colorful} that is both inspired by and well-tailored to photonic quantum computing~\cite{rudolph:2017:optimistic}.
In its simplest form, colorful quantum computation is a topological fault-tolerant scheme for three-dimensional architectures in which all logical operations are transversal%
\footnote{
Including measurements supplemented with global classical computation.
} 
~\cite{bombin:2018:colorful}. This scheme relies on a form of single-shot error correction, in particular of the kind based on self-correction. One of the results of this paper is that, for all transversal gates of tetrahedral codes, the propagation of the non-local noise created by single-shot error correction only gives rise to local noise, and is thus compatible with error correction (section~\ref{sec:propagation}). This ensures the fault-tolerance of this form of colorful quantum computation, and also of other previous schemes~\cite{bombin:2016:dimensional}.

Remarkably, colorful quantum computation, despite being based on three-dimensional codes, also comprises an scalable scheme for two-dimensional architectures. The scheme relies on `just-in-time' (JIT) decoding for single-shot error correction. In JIT decoding information is decoded as it becomes available, in contrast with the conventional approach to the same problem, where it is decoded as a whole. The need for JIT decoding stems from the emergence of causality constraints in mapping a 3D code to a 2D architecture, because time plays the role of one of the spatial dimensions. Since JIT decoding is limited in this way by causality, but conventional decoding is not, it should perform worse. One of the outcomes of the detailed study of error propagation under the transversal T gate is that the errors caused by using JIT decoding, rather than conventional decoding, can essentially be transformed into erasure errors (section~\ref{sec:erasure}). 

Another outcome of the study of the transversal T gate is that the propagation of errors has an interesting algebraic structure that gives rise to a natural set of correctable errors, \emph{i.e.} one that is not based on any arbitrary choices but rather is fixed by the structure of the gate. This is true of any code that implements the T gate in the same way as tetrahedral codes (section~\ref{sec:T}).

Last, but not least, the T gate also hides some new physics. Tetrahedral codes, as any other family of topological codes, have a condensed matter model counterpart. For tetrahedral codes the corresponding physical system can be regarded as both a string-net and a membrane-net condensate~\cite{bombin:2007:branyons} in which excitations carry topological charge / flux. The transversal T gate turns out to exhibit a  `linking charge' phenomenon by which flux excitations exchange charge according to how they are linked, in a topological sense (section~\ref{sec:linking}). This phenomenon is relevant to a large class of systems (appendix~\ref{sec:linking_general}). 

\begin{danger}
\center {\bf Notation} is listed in appendix~\ref{sec:notation}.
\end{danger}

\section{Transversal T gate}\label{sec:T}

The transversal implementation of the logical T gate in 3D color codes~\cite{bombin:2007:3dcc, bombin:2015:gauge} is a generalization of the approach introduced in~\cite{knill:1996:threshold}. This section explores the propagation of noise for  such transversal T gates. The problem turns out to have some remarkable algebraic structure that naturally defines a set of correctable errors. 

The proofs of the lemmas in this section are in appendix~\ref{sec:Tproof}. Section~\ref{sec:propagation_T} and appendix~\ref{sec:noise_T} specialize the results of this section to 3D color codes.

\subsection{Setting}\label{sec:setting}

Throughout this section, unless explicitly indicated otherwise, $S$ is the stabilizer group of a fixed stabilizer code. We assume the following properties%
\footnote{These axioms have not been chosen to be minimal, but simply to offer a practical abstraction of tetrahedral codes.}%
, modeled after tetrahedral color codes~\cite{bombin:2007:3dcc, bombin:2015:gauge}.

\begin{warning}
\begin{enumerate}
\item $S$ is a CSS code, i.e.\begin{equation}
S=S_XS_Z, \qquad
S_X\subseteq P_X, 
S_Z\subseteq P_Z,
\end{equation}

\item the code subspace is invariant under a transversal gate of the form\begin{equation}
U = \bigotimes_q T^{b_q},
\qquad b_q\equiv 1 \mod 2,
\end{equation}

\item there is a single logical qubit, and

\item $X$ and $Z$ undetectable errors are related as follows:\begin{equation}
\mathcal Z_Z(S)= \{ Z_{\alpha\cap\beta} \,|\, X_\alpha, X_\beta\in\mathcal Z_X(S)\}.
\label{axiom_code}
\end{equation}

\end{enumerate}
\end{warning}

\noindent For readability, below `logical operator' stands for 'non-trivial logical Pauli operator', i.e. an element of\begin{equation}
\mathcal Z(S)-S.
\end{equation}

\subsection{Invariants}

The following objects will play an important role below.

\begin{warning}

Given $X_\alpha$ we define the Pauli groups
\begin{align}
G_\alpha &:= S_Z\cdot\{Z_{\alpha\cap\beta} \,|\, X_\beta\in \mathcal Z(S)\},
\\
H_\alpha &:= S_Z \cdot \{ Z_{\alpha\cap\beta} \,|\, X_\beta\in S \},
\end{align}
the integer\begin{equation}
g(\alpha):= \sum_{q\in \alpha} b_q,
\end{equation}
and the set of Pauli operators (a coset of the quotient $P_Z/G_\alpha$)
\begin{multline}
E_\alpha :=
 \{z\in P_Z
\,|\,
\forall X_\gamma\in \mathcal Z(G_\alpha)
\\
(z,X_\gamma)=(-1)^{g(\gamma\cap \alpha)/2}
\}.
\end{multline}

\end{warning}

\noindent They are invariant in the following sense%
\footnote{The function $g$ is also invariant, see the proof of lemma~\ref{lem:invariance}.}.

\begin{lem}\label{lem:invariance}
For any $X_\alpha$, any $X_\beta\in S$ and any $X_\gamma\in \mathcal Z(S)$
\begin{align}
G_{\alpha+\beta}&=G_\alpha,
\\
H_{\alpha+\gamma}&=H_\alpha,
\\
E_{\alpha+\beta}&=E_\alpha.
\end{align}
\end{lem}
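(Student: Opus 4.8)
The statement has three parts; I would handle them one at a time, and the key observation throughout is that intersection of supports interacts linearly (mod 2) with symmetric difference: $(\alpha+\beta)\cap\gamma = (\alpha\cap\gamma)+(\beta\cap\gamma)$ as sets mod 2, hence $Z_{(\alpha+\beta)\cap\gamma} = Z_{\alpha\cap\gamma}\cdot Z_{\beta\cap\gamma}$. This is the engine that moves a perturbation of $\alpha$ through all the defining data.

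\medskip

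\noindent\textbf{Step 1: $G_{\alpha+\beta}=G_\alpha$ for $X_\beta\in S$.} Writing $G_\alpha = S_Z\cdot\{Z_{\alpha\cap\delta}\mid X_\delta\in\mathcal Z(S)\}$, I would use the identity above to get $Z_{(\alpha+\beta)\cap\delta}=Z_{\alpha\cap\delta}\cdot Z_{\beta\cap\delta}$, so it suffices to show $Z_{\beta\cap\delta}\in S_Z$ whenever $X_\beta\in S$ and $X_\delta\in\mathcal Z(S)$. But this is exactly an instance of axiom~\eqref{axiom_code}: since $X_\beta\in S\subseteq\mathcal Z(S)$ and $X_\delta\in\mathcal Z(S)$, we have $Z_{\beta\cap\delta}\in\mathcal Z_Z(S)$; I then need the stronger claim that it actually lies in $S_Z$, which should follow because $X_\beta\in S$ means $X_\beta$ has trivial syndrome, and pairing $Z_{\beta\cap\delta}$ against $X$-stabilizers reduces (via the same intersection identity) to pairings of $X_\beta$ against things in $\mathcal Z(S)$, all trivial. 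Symmetrically $Z_{\alpha\cap\delta}=Z_{(\alpha+\beta)\cap\delta}\cdot Z_{\beta\cap\delta}$, giving the reverse inclusion, hence equality.

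\medskip

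\noindent\textbf{Step 2: $H_{\alpha+\gamma}=H_\alpha$ for $X_\gamma\in\mathcal Z(S)$.} Same mechanism: $H_\alpha=S_Z\cdot\{Z_{\alpha\cap\delta}\mid X_\delta\in S\}$, and $Z_{(\alpha+\gamma)\cap\delta}=Z_{\alpha\cap\delta}\cdot Z_{\gamma\cap\delta}$ with $X_\delta\in S$. Now I need $Z_{\gamma\cap\delta}\in S_Z$ when $X_\gamma\in\mathcal Z(S)$, $X_\delta\in S$ — which is the \emph{same} fact as in Step~1 with the roles of the two sets swapped, so nothing new is required. Equality follows as before.

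\medskip

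\noindent\textbf{Step 3: $E_{\alpha+\beta}=E_\alpha$ for $X_\beta\in S$, and invariance of $g$.} Here I first need $g(\alpha+\beta)\equiv g(\alpha)$ in the sense relevant to $E$. Since $g(\alpha+\beta)=\sum_{q\in\alpha+\beta}b_q$ and $b_q$ is odd, modulo $4$ one computes $g(\alpha+\beta)=g(\alpha)+g(\beta)-2g(\alpha\cap\beta)$; the exponent appearing in $E$ is $g(\gamma\cap\alpha)/2$, so what I actually need is $g(\gamma\cap(\alpha+\beta))\equiv g(\gamma\cap\alpha)\pmod 4$ for all relevant $\gamma$, i.e.\ $g(\gamma\cap\beta)\equiv 0$ and $g(\gamma\cap\alpha\cap\beta)\equiv 0\pmod 2$. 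Since $\gamma\cap\beta$ is the support of $Z_{\gamma\cap\beta}$, and by the argument of Step~1 this operator is in $S_Z$ (using $X_\beta\in S$, $X_\gamma\in\mathcal Z(S)$), invariance of the code under $U$ (axiom~2) forces $g$ of a stabilizer support to vanish mod the appropriate modulus — this is presumably the content of the footnote and the key lemma I would invoke or prove here. Then $\mathcal Z(G_{\alpha+\beta})=\mathcal Z(G_\alpha)$ by Step~1, and for $X_\gamma$ in that common group the defining condition $(z,X_\gamma)=(-1)^{g(\gamma\cap(\alpha+\beta))/2}=(-1)^{g(\gamma\cap\alpha)/2}$ is literally unchanged, so $E_{\alpha+\beta}=E_\alpha$ as cosets.

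\medskip

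\noindent\textbf{Main obstacle.} The routine part is the intersection-linearity bookkeeping. The real work is the supporting lemma that $Z_{\mu\cap\nu}\in S_Z$ whenever $X_\mu\in S$ and $X_\nu\in\mathcal Z(S)$ (used in Steps~1--3), together with the claim that $g$ vanishes (mod $4$, or mod $2$ after halving) on supports of $Z$-stabilizers. The first needs axiom~\eqref{axiom_code} plus a syndrome argument showing the candidate operator is not merely undetectable but genuinely in $S_Z$; the second is where transversality of $U$ (axiom~2) must be pinned down quantitatively — that $U$ fixes the codespace should translate into a mod-$4$ constraint on $g$ evaluated on stabilizer supports, and getting that translation precise is the crux.
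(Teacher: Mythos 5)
Your overall route is the paper's route. The first two identities reduce, via $Z_{(\alpha+\beta)\cap\delta}=Z_{\alpha\cap\delta}\,Z_{\beta\cap\delta}$, to the supporting fact that $Z_{\mu\cap\nu}\in S_Z$ whenever $X_\mu\in S$ and $X_\nu\in\mathcal Z(S)$; this is precisely the paper's lemma~\ref{lem:A0}, proved there by the same swap you invoke, $(Z_{\mu\cap\nu},X_{\gamma'})=(Z_{\nu\cap\gamma'},X_\mu)$, together with axiom~\eqref{axiom_code}. One slip in your Step~1: to upgrade $Z_{\beta\cap\delta}$ from $\mathcal Z_Z(S)$ to $S_Z$ you must check commutation with \emph{all} of $\mathcal Z_X(S)$ --- in particular with the logical $X$ operators --- not just with the $X$-stabilizers; commuting with $S_X$ only re-establishes membership in $\mathcal Z_Z(S)$, which you already had. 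The swap identity handles the logical operators equally well, so this is cosmetic.

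The genuine gap is in Step~3. You correctly isolate the two congruences needed, $g(\gamma\cap\beta)\equiv 0\pmod 4$ and $g(\gamma\cap\alpha\cap\beta)\equiv 0\pmod 2$, but prove neither. The second is quick but must be said: $Z_{\alpha\cap\beta}\in H_\alpha\subseteq G_\alpha$ while $X_\gamma\in\mathcal Z(G_\alpha)$, so $|\gamma\cap\alpha\cap\beta|$ is even, and $g\equiv|\cdot|\pmod 2$ because every $b_q$ is odd. The first is the crux, and your proposed route to it --- ``invariance of the code under $U$ forces $g$ of a ($Z$-)stabilizer support to vanish mod 4'' --- is not something codespace invariance hands you directly: a diagonal gate constrains $g$ on supports of \emph{$X$-type} operators, and there is no one-step passage from $Z_{\gamma\cap\beta}\in S_Z$ to a congruence on $g(\gamma\cap\beta)$. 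The paper instead extracts two constraints and subtracts them. Since $U$ multiplies the computational basis vector $X_\beta|0\cdots 0\rangle$ by $e^{i\pi g(\beta)/4}$ and $|0\cdots0\rangle$ and $X_\beta|0\cdots0\rangle$ both appear in the expansion of the same codeword, invariance forces $g(\beta)\equiv 0\pmod 8$; comparing $X_\gamma|0\cdots0\rangle$ with $X_{\gamma+\beta}|0\cdots0\rangle$ inside the codeword $X_\gamma|\bar 0\rangle$ likewise forces $g(\gamma+\beta)\equiv g(\gamma)$, i.e.\ $g(\beta)-2g(\gamma\cap\beta)\equiv 0\pmod 8$. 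Subtracting gives $g(\gamma\cap\beta)\equiv 0\pmod 4$. Without this two-constraint comparison (or an equivalent), your Step~3 does not close.
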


\subsection{Tolerable errors}

The following class of errors plays an essential role in the main result.

\begin{warning}

$X_\alpha$ is {\bf tolerable} (respect to $S$) if $G_\alpha$ contains no logical operators.

\end{warning}

\noindent Here is a useful characterization.

\begin{lem}\label{lem:tolerability}
$X_\alpha$ is tolerable if and only if any of the following holds:
\begin{enumerate}
\item
$H_\alpha=G_\alpha$,
\item
there exists a logical $X_\lambda$ such that $Z_{\alpha\cap\lambda}\in S,$
\item
there exists a logical $X_\lambda\in\mathcal Z(G_\alpha)$,
\item
$X_{\alpha+\lambda}$ is \emph{not} tolerable, given any logical $X_\lambda$.
\item
$X_{\alpha+\beta}$ is tolerable, given any stabilizer $X_\beta$.
\end{enumerate}
\end{lem}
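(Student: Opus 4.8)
The plan is to translate everything into a single symmetric bilinear form. Writing $X_{\lambda_0}$ for a fixed representative of the logical $X$, the single-qubit assumption gives $\mathcal Z_X(S)=S_X\oplus\langle\lambda_0\rangle$ as an $\mathbb F_2$-space, and I introduce
\[
B_\alpha(\gamma,\delta):=|\alpha\cap\gamma\cap\delta|\bmod 2,\qquad X_\gamma,X_\delta\in\mathcal Z_X(S),
\]
which is symmetric and bilinear because the triple intersection is symmetric and intersection distributes over the symmetric difference. The map $\delta\mapsto Z_{\alpha\cap\delta}$ then identifies $G_\alpha/S_Z$ with the image of $\mathcal Z_X(S)$ in $P_Z/S_Z$, and under this dictionary $Z_{\alpha\cap\delta}$ lies in the centralizer iff $B_\alpha(\delta,S_X)=0$, is a stabilizer iff moreover $B_\alpha(\delta,\lambda_0)=0$, and is a logical $Z$ iff $B_\alpha(\delta,S_X)=0$ and $B_\alpha(\delta,\lambda_0)=1$. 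Hence $X_\alpha$ is tolerable iff no $\delta\in S_X^{\perp}$ (perp taken inside $\mathcal Z_X(S)$ with respect to $B_\alpha$) satisfies $B_\alpha(\delta,\lambda_0)=1$, i.e. iff $\lambda_0\in(S_X^{\perp})^{\perp}$.

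Next I would dispatch the equivalences (1)--(3) and (5). For (1)$\Leftrightarrow$(2) I use only coset algebra: since $\alpha\cap(\lambda_0+\beta)=(\alpha\cap\lambda_0)+(\alpha\cap\beta)$, one has $G_\alpha=H_\alpha\cdot\langle Z_{\alpha\cap\lambda_0}\rangle$, so $H_\alpha=G_\alpha$ exactly when $Z_{\alpha\cap\lambda_0}\in H_\alpha$, which rearranges to the existence of a logical $X_\lambda=X_{\lambda_0}X_\beta$ with $Z_{\alpha\cap\lambda}\in S_Z\subseteq S$. For tolerable$\Leftrightarrow$(3) I observe that a logical $X_\lambda$ lies in $\mathcal Z(G_\alpha)$ iff $B_\alpha(\lambda,\cdot)$ vanishes on $\mathcal Z_X(S)$, i.e. iff $\lambda\in\operatorname{rad}B_\alpha$; so (3) says $\lambda_0\in S_X+\operatorname{rad}B_\alpha$, and the standard identity $(S_X^{\perp})^{\perp}=S_X+\operatorname{rad}B_\alpha$ matches this with the reformulation of tolerability above, simultaneously yielding (2)$\Leftrightarrow$(3). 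Condition (5) is immediate from Lemma~\ref{lem:invariance}: $G_{\alpha+\beta}=G_\alpha$ for $X_\beta\in S$, so tolerability is unchanged by adding a stabilizer, and the $\beta=0$ case recovers $X_\alpha$.

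The remaining equivalence (4) is where the arithmetic must meet the structural axiom \eqref{axiom_code}. By (5) the tolerability of $X_{\alpha+\lambda}$ depends only on the logical class of $\lambda$, so (4) reduces to the single assertion that tolerability toggles under $\alpha\mapsto\alpha+\lambda_0$. The crucial input is that \eqref{axiom_code} forces $Z_{\gamma\cap\delta}$ to be a logical $Z$ precisely when both $X_\gamma,X_\delta$ are logical; in particular $Z_{\lambda_0}$ is itself a logical $Z$, and $B_{\lambda_0}(\gamma,\delta)=\ell(\gamma)\ell(\delta)$, where $\ell$ is the linear functional detecting the logical class. The forward direction is then clean: if $X_\alpha$ and $X_{\alpha+\lambda}$ were both tolerable, (2) would supply logical $X_\mu,X_{\mu'}$ with $Z_{\alpha\cap\mu},Z_{(\alpha+\lambda)\cap\mu'}\in S$; expanding the second and cancelling the logical operator $Z_{\lambda\cap\mu'}$ exhibits $Z_{\alpha\cap\mu'}\in G_\alpha$ as logical, contradicting tolerability of $X_\alpha$.

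I expect the reverse direction of (4) to be the main obstacle. It amounts to showing that whenever $G_\alpha$ contains a logical $Z$ it contains one of the form $Z_{\alpha\cap\delta}$ with $X_\delta$ \emph{logical} rather than merely a stabilizer: given such a logical-class witness $\delta$, choosing $\mu'=\delta$ makes $Z_{(\alpha+\lambda)\cap\delta}=Z_{\alpha\cap\delta}\,Z_{\lambda\cap\delta}$ a product of two logical $Z$ operators, hence a stabilizer, so $X_{\alpha+\lambda}$ is tolerable by (2). The delicate point is to rule out logical operators supported entirely through $H_\alpha$ (stabilizer-class witnesses only); this is exactly where the full strength of \eqref{axiom_code}, and, I anticipate, the transversal-gate axiom controlling $g$, is needed, and where I would concentrate the effort. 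Once that representability statement is secured, all five equivalences close up.
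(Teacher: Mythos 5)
Most of your proposal is sound and is essentially a bilinear-form repackaging of the paper's machinery: your dictionary is the paper's lemma~\ref{lem:A1}, your coset computation for (1)$\Leftrightarrow$(2) is the paper's argument for (ii), and your treatment of (5) via lemma~\ref{lem:invariance} is fine (the paper instead derives (v) from (iv), but your route is shorter). Two things are missing, however, and one of them is fatal. The minor one: your ``crucial input'' (that $Z_{\gamma\cap\delta}$ is logical precisely when both $X_\gamma,X_\delta$ are logical) is true but not free; it is the content of the paper's lemma~\ref{lem:A0}, proved from \eqref{axiom_code} via the identity $(Z_{\gamma\cap\delta},X_\mu)=(Z_{\gamma\cap\mu},X_\delta)$, and your forward direction of (4) silently rests on it. The major one is the gap you flag yourself: the reverse direction of (4). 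The representability statement you need --- whenever $G_\alpha$ contains a logical operator it contains one of the form $Z_{\alpha\cap\delta}$ with $X_\delta$ \emph{logical} --- is exactly the paper's lemma~\ref{lem:A2}: $H_\alpha$ contains no logical operators for \emph{arbitrary} $\alpha$. This cannot come out of $B_\alpha$ alone, because for $\alpha\notin\mathcal Z_X(S)$ the dictionary places no constraint ruling out a stabilizer-class witness $\delta\in S_X$ with $Z_{\alpha\cap\delta}$ logical. And the point is not cosmetic: if such a witness existed, then by your own expansion $Z_{(\alpha+\lambda)\cap\delta}=Z_{\alpha\cap\delta}Z_{\lambda\cap\delta}$ with $Z_{\lambda\cap\delta}\in S$, both $X_\alpha$ and $X_{\alpha+\lambda}$ would be non-tolerable, so the equivalence (4)$\Rightarrow$tolerable would actually be \emph{false}. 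The lemma stands or falls with lemma~\ref{lem:A2}, so your proof is incomplete until it is supplied.

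For comparison, the paper closes this with a short dedicated argument: if $Z_\gamma$ with $\gamma=\alpha\cap\beta$, $X_\beta\in S$, were logical, then since $\mathcal Z_Z(S)=\langle Z_\lambda\rangle S_Z$ (lemma~\ref{lem:A0}) one gets $Z_\gamma Z_\lambda\in S_Z$, hence $H_\gamma=H_\lambda=S_Z$ by lemmas~\ref{lem:invariance} and~\ref{lem:A0}; but $\gamma\cap\beta=\gamma$, so $Z_\gamma=Z_{\gamma\cap\beta}\in H_\gamma=S_Z$, contradicting that $Z_\gamma$ is logical. Note also that your anticipation that the transversal-gate axiom controlling $g$ is needed here is misplaced: lemma~\ref{lem:A2} uses only \eqref{axiom_code} (through lemma~\ref{lem:A0}) and the first two identities of lemma~\ref{lem:invariance}, which are themselves consequences of lemma~\ref{lem:A0}; the $g$-axiom enters only in the invariance of $E_\alpha$ and in theorem~\ref{thm:T} itself, not anywhere in the proof of the present lemma.
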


\noindent For every syndrome there exists exactly one class (up to stabilizers) of tolerable errors with that syndrome. Thus tolerable errors form a set of correctable errors. Section~\ref{sec:tolerable_cc} compares, in the context of tetrahedral codes, this set of correctable errors with another one defined in purely topological terms.

\subsection{Error syndromes}\label{sec:T:syndromes}

An error syndrome is an irrep of a stabilizer group $S$. That is, a syndrome is a morphism
\begin{equation}
S \rightarrow \{1, -1\},
\end{equation}
e.g. an eigenvalue assignment. Here $S$ is a CSS code and it is convenient to consider separately $S_X$ and $S_Z$ syndromes. In this section only $X$-error syndromes are of interest, i.e. $S_Z$ syndromes. In particular, due to lemmas~\ref{lem:invariance} and~\ref{lem:tolerability} the following objects are well defined.

\begin{warning}

Given an $X$-error syndrome $\phi$ let, for any tolerable $X_\alpha$ with syndrome $\phi$,
\begin{align}
H(\phi)&:= H_\alpha,
\\
E(\phi)&:= E_\alpha.
\end{align}

\end{warning}

\subsection{Error propagation}

The stage is finally set to describe the propagation of Pauli errors under a transversal T gate. 
The transversal $T$ gate commutes with $Z$ errors, so that it suffices to consider how it propagates $X$ errors. Because the $T$ gate is non-Clifford, it will \emph{not} map $X$ errors to Pauli errors. A depolarization operation is required to ensure that a Pauli error propagates to a distribution of Pauli errors. We consider $\mathcal D_{S_X}$, which amounts to apply randomly an element of $S_X$, or equivalently to measure the check operators in $S_X$ and forget the readout.
This depolarization is rather natural, in the sense that it becomes immaterial if followed by an ideal syndrome extraction.

\begin{thm}\label{thm:T}
For any $x\in P_X$ with syndrome $\phi$ and any encoded state $\rho$ 
\begin{equation}\label{eq:main}
(\mathcal D_{S_X}\circ \hat U\circ \hat x) (\rho) = 
(\hat x \circ \mathcal D_{E(\phi)}\circ \hat U \circ \hat w)(\rho),
\end{equation}
where $w =\mathbf 1$ if $x$ is tolerable, and otherwise $w$ is an encoded

\begin{equation}
U_0^\dagger X U_0 X
\end{equation}
gate, given that $U$ is an encoded $U_0$ gate.
\end{thm}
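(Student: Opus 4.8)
The plan is to work out the propagation of a single $T$ gate on one qubit first, then assemble the transversal action, and finally reduce everything modulo $S_X$ using the depolarization $\mathcal D_{S_X}$. On a single qubit $T X T^\dagger = e^{i\pi/4} \tfrac{1}{\sqrt 2}(X - iY) = e^{i\pi/4} S^\dagger X$ up to phase, so $\hat U \circ \hat x$ applied to the transversal $T$ gate produces, for each qubit $q$ in the support $\alpha$ of $x$, a factor $S^{\pm b_q}$ (with the sign depending on $b_q \bmod 4$, but $b_q$ is odd so only $\pm$ matters) times $X_q$, times $U$. Collecting these: $\hat U \circ \hat x = \hat x \circ (\text{Clifford correction } C_\alpha) \circ \hat U$ where $C_\alpha$ is a product of single-qubit $S^{\pm 1}$ operators on $\alpha$. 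The essential point is then to understand $C_\alpha$ as an operator on the codespace, i.e. to project it through $\mathcal D_{S_X}$.

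The second step is the key computation: I would show that, after conjugating by $\hat U$ and averaging by $\mathcal D_{S_X}$, the diagonal Clifford $C_\alpha$ acts exactly as a (random) $Z$-type Pauli drawn from the coset $E_\alpha$ — precisely, $\mathcal D_{S_X} \circ \hat C_\alpha \circ \hat U = \mathcal D_{E_\alpha} \circ \hat U$ on encoded states, when $\alpha$ is tolerable. The reason is that $C_\alpha$, being diagonal, is determined on the stabilizer code by how it commutes with the $X$-type logical/stabilizer operators, and $\prod_{q\in\alpha} S_q^{\epsilon_q}$ conjugates $X_\gamma$ to $(\pm i)^{|\gamma\cap\alpha|} Z_{\gamma\cap\alpha} X_\gamma$; the phase $(-1)^{g(\gamma\cap\alpha)/2}$ in the definition of $E_\alpha$ is exactly what records this. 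Here axiom~\eqref{axiom_code} is what guarantees that $Z_{\gamma\cap\alpha}$ lands in the $Z$-sector generated by the $G_\alpha$ data, and lemma~\ref{lem:invariance} ensures the coset $E_\alpha$ depends only on the syndrome. Averaging over $S_X$ kills precisely the ambiguity quotiented out in $P_Z/G_\alpha$, since $\mathcal D_{S_X}$ followed by $\hat U$ symmetrizes over the $S_X$-orbit, which translates (through the $T$-conjugation identity again, applied to stabilizers $X_\beta \in S_X$) into symmetrization over $G_\alpha$ acting by multiplication — giving a genuine coset $E_\alpha = E(\phi)$, hence the depolarization $\mathcal D_{E(\phi)}$.

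The third step handles the non-tolerable case and pins down $w$. If $x$ is not tolerable, $G_\alpha$ contains a logical $Z$ operator; by lemma~\ref{lem:tolerability}(4) one can write $x = x' \cdot (\text{logical } X_\lambda$-type correction$)$ with $x'$ tolerable, or rather exhibit that the diagonal Clifford $C_\alpha$ now contains, besides the $E(\phi)$-part, a genuine \emph{logical} diagonal Clifford. Since on a single logical qubit the only diagonal Clifford arising this way is $S$ (up to Pauli and phase), and since $U$ implements $U_0$ on the logical qubit, this leftover is exactly the encoded $U_0^\dagger X U_0 X$ — this is the statement that the logical action differs from the tolerable case by the discrepancy between $U_0$ conjugating $X$ and $X$ itself. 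I would verify this by tracking the logical commutation data: $x$ tolerable means it commutes (mod $S$) appropriately with logical $X$, and non-tolerability flips exactly one sign, which is the signature of the extra logical $S$-like factor.

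The main obstacle I expect is the bookkeeping in step two: cleanly establishing that the conjugated-and-averaged diagonal Clifford is \emph{exactly} a Pauli depolarization over the coset $E_\alpha$, with no residual phase and no residual non-Pauli part on the codespace. This requires carefully using that $\hat U$ preserves the codespace (axiom 2) so that the non-Pauliness of $T$-conjugation is invisible there, that CSS structure (axiom 1) decouples the $X$- and $Z$-sectors, and that axiom~\eqref{axiom_code} makes the relevant $Z$-operators $Z_{\alpha\cap\beta}$ exhaust the right group. Getting the phase convention $(-1)^{g(\gamma\cap\alpha)/2}$ to match — in particular checking $g(\gamma\cap\alpha)$ is even whenever $X_\gamma \in \mathcal Z(G_\alpha)$, so the exponent makes sense — is the delicate point, and is presumably where lemma~\ref{lem:invariance}'s remark about $g$ being invariant gets used.
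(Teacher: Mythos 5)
Your plan is essentially the paper's own proof: you commute $x=X_\alpha$ past $U$ to pick up the diagonal Clifford $C_\alpha=\prod_{q\in\alpha}S_q^{\pm b_q}$ (the paper's $A_\alpha^\dagger$), argue that after $\mathcal D_{S_X}$ it acts on encoded states as a uniform draw from the coset $E(\phi)$, and reduce the non-tolerable case to the tolerable one by splitting off a logical $X_\lambda$ via lemma~\ref{lem:tolerability}(iv), yielding $w=U^\dagger X_\lambda U X_\lambda$. The ``main obstacle'' you flag in step two is precisely what the paper isolates into lemmas~\ref{lem:A3}--\ref{lem:A5} (membership of $A_\alpha^\dagger P_0$ in $\mathcal L(E_\alpha)$, the expectation values $\langle A_\alpha X_\beta A_\alpha^\dagger\rangle_\rho$ where tolerability kills the $X_\beta\notin\mathcal Z(G_\alpha)$ terms, and a character-orthogonality argument giving uniform weights over $G_\alpha$), and your reading of where axiom~\eqref{axiom_code}, lemma~\ref{lem:invariance}, and the parity of $g(\gamma\cap\alpha)$ enter is accurate.
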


\noindent For tolerable errors $x$, the final state is subject, on top of the original error $x$, to a random error from the set $E(\phi)\subseteq P_Z$. Since the set $E(\phi)$ is a coset of the group $H(\phi)$, the error is only random up to an element of $H(\phi)$.

The central relation \eqref{eq:main} reveals that the notion of tolerability emanates from the transversal gate $U$. Remarkably, \emph{such a transversal gate naturally defines a set of correctable errors}.

\subsection{Error factorization}\label{sec:factorization}

Theorem~\ref{thm:T} is sometimes more useful if the error $x$ is factorized in the right way. We use additive notation for the abelian group of $X$-error syndromes.

\begin{warning}

The $X$-error syndromes $\phi_i$ are {\bf separated} (respect to $S$) if\begin{equation}
H\left(\sum_i \phi_i\right) = \prod_{i} H(\phi_i).
\end{equation}

\end{warning}

\begin{lem}\label{lem:factorization}

Given tolerable $X_{\alpha_i}$ with separated syndromes $\phi_i$, and\begin{equation}
\alpha =\sum_i \alpha_i,\qquad \phi=\sum_i \phi_i,
\end{equation}
\begin{enumerate}
\item
$X_\alpha$ is tolerable, and
\item
for any $z_i\in E(\phi_i)$
\begin{equation}
\prod_i z_i \prod_{i\neq j} Z_{\alpha_i\cap\alpha_j}\in E(\phi).
\end{equation}
\end{enumerate}
\end{lem}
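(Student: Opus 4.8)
The plan is to reduce everything to the definitions of $G_\alpha$, $H_\alpha$, $g$ and $E_\alpha$, using Lemma~\ref{lem:invariance} to move freely between representatives, and to handle the two claims separately. For claim (1), I would use characterization (3) of Lemma~\ref{lem:tolerability}: $X_{\alpha_i}$ tolerable means there is a logical $X_{\lambda_i}\in\mathcal Z(G_{\alpha_i})$. The issue is that the $\lambda_i$ need not coincide, so one cannot directly exhibit a single logical operator commuting with $G_\alpha$. Instead I would argue at the level of syndromes: since each $\phi_i$ is the syndrome of a tolerable error, $H(\phi_i)$ is well defined, and the separation hypothesis says $H(\phi)=\prod_i H(\phi_i)$. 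I would then show that $G_\alpha\subseteq \prod_i G_{\alpha_i}$ (this follows because $\alpha\cap\beta \subseteq \bigcup_i(\alpha_i\cap\beta)$ up to the symmetric-difference bookkeeping that is exactly the content of the cross terms $Z_{\alpha_i\cap\alpha_j}$), and that each $G_{\alpha_i}$ contains no logical operator; the separation condition is what rules out a logical operator arising as a product across blocks, i.e. it forces $\mathcal Z(G_\alpha)$ to still contain a logical $X_\lambda$, giving tolerability of $X_\alpha$ via characterization (3) again.

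For claim (2), the strategy is to verify directly that the candidate operator $\prod_i z_i \prod_{i\neq j} Z_{\alpha_i\cap\alpha_j}$ satisfies the defining commutation relations of $E(\phi)=E_\alpha$. So I would take an arbitrary $X_\gamma\in\mathcal Z(G_\alpha)$ and compute the commutator of the candidate with $X_\gamma$. The key arithmetic identity is the inclusion–exclusion / parity statement
\begin{equation}
g(\gamma\cap\alpha) \equiv \sum_i g(\gamma\cap\alpha_i) \pmod{?},
\end{equation}
but since only $g(\cdot)/2 \bmod 2$ matters one must be careful: the correct statement is that $g(\gamma\cap\alpha)/2$ and $\sum_i g(\gamma\cap\alpha_i)/2$ differ by a term controlled by the pairwise overlaps $g(\gamma\cap\alpha_i\cap\alpha_j)$, and that this discrepancy is precisely cancelled by the commutator contributions of the cross terms $Z_{\alpha_i\cap\alpha_j}$ with $X_\gamma$, namely $(-1)^{g(\gamma\cap\alpha_i\cap\alpha_j)}$ or rather the parity $|\gamma\cap\alpha_i\cap\alpha_j|$. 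This is where axiom 2 ($b_q$ odd, so $g(\cdot)\equiv|\cdot|\bmod 2$) and the CSS structure get used to match a $\mod 4$ statement about $g$ against a $\mod 2$ statement about overlaps. I also need each $z_i\in E(\phi_i)=E_{\alpha_i}$ to make sense against $X_\gamma$, which requires $X_\gamma\in\mathcal Z(G_{\alpha_i})$; that holds because $G_{\alpha_i}\subseteq G_\alpha$ is false in general, so instead I would restrict $\gamma$ appropriately or use that $\mathcal Z(G_\alpha)\subseteq\mathcal Z(G_{\alpha_i})$ fails too — the honest route is to use the separation hypothesis to decompose the test against $H(\phi)=\prod H(\phi_i)$ and check the coset condition componentwise.

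The main obstacle I expect is the bookkeeping in claim (2): getting the $\mathrm{mod}\,4$ arithmetic of $g$ to line up with the $\mathrm{mod}\,2$ arithmetic of the pairwise intersections $\alpha_i\cap\alpha_j$, and confirming that the double product $\prod_{i\neq j}Z_{\alpha_i\cap\alpha_j}$ (which, read as a Pauli operator, is $\prod_{i<j} Z_{\alpha_i\cap\alpha_j}^2 = \mathbf 1$ only if one is careless — one must instead treat $\alpha_i\cap\alpha_j$ as a $\mathbb{Z}_2$ chain and note the product over ordered pairs equals the product over unordered pairs squared, hence one should read the statement with the convention that fixes this) contributes exactly the missing phase. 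Concretely, I would first pin down the intended convention for $\prod_{i\neq j}Z_{\alpha_i\cap\alpha_j}$ (symmetric-difference sum over ordered pairs, so each unordered pair appears twice and cancels — meaning the real content must be that the $z_i$ already absorb the single-pair terms, or that the product is over $i<j$), then prove the inclusion–exclusion parity lemma $\lvert\gamma\cap\alpha\rvert \equiv \sum_i\lvert\gamma\cap\alpha_i\rvert + \sum_{i<j}\lvert\gamma\cap\alpha_i\cap\alpha_j\rvert \cdot 2 \pmod{\text{appropriate modulus}}$ at the level needed, and finally assemble the commutators. Claim (1) should then follow quickly, essentially as a corollary of the syndrome-level separation identity together with Lemma~\ref{lem:tolerability}.
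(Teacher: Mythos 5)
Your overall strategy for claim (2) -- test the candidate operator against an arbitrary $X_\gamma\in\mathcal Z(G_\alpha)$ and match the $\bmod\ 4$ arithmetic of $g$ against the $\bmod\ 2$ parities of the pairwise intersections via $g(\beta)\equiv|\beta|\bmod 2$ -- is exactly the paper's, and your reading of $\prod_{i\neq j}$ as a product over unordered pairs is the right way to make the cross terms nontrivial. But there is a genuine gap at the point you flag yourself: you assert that $\mathcal Z(G_\alpha)\subseteq\mathcal Z(G_{\alpha_i})$ ``fails too'' and retreat to a vague plan of ``decomposing the test against $H(\phi)$.'' In fact that inclusion \emph{holds} under the lemma's hypotheses, and it is the linchpin of the whole proof. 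The chain is short: $G_\alpha\subseteq\prod_i G_{\alpha_i}$ is trivial (intersection distributes over symmetric difference, so $Z_{\alpha\cap\beta}=\prod_i Z_{\alpha_i\cap\beta}$); tolerability of each $X_{\alpha_i}$ gives $G_{\alpha_i}=H_{\alpha_i}$ by characterization (i) of Lemma~\ref{lem:tolerability}; and the separation hypothesis gives $\prod_i H_{\alpha_i}=H_\alpha\subseteq G_\alpha$. Hence $G_\alpha=\prod_i G_{\alpha_i}=H_\alpha$. This single identity delivers claim (1) immediately (again via characterization (i), not characterization (3) -- your proposed route through ``ruling out logical operators arising as products across blocks'' is not an argument and would need the same identity anyway), and it delivers $\mathcal Z(G_\alpha)=\bigcap_i\mathcal Z(G_{\alpha_i})$, which is precisely what legitimizes evaluating each $z_i\in E_{\alpha_i}$ against the same $X_\gamma$ in claim (2).

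So the missing idea is not a workaround for a failed inclusion but the recognition that part (1) is not a separate corollary to be ``assembled at the end'': its proof produces the group identity $G_\alpha=\prod_i G_{\alpha_i}$ that part (2) consumes. With that in hand, your commutator computation closes as you describe, using
\begin{equation}
\frac{g(\gamma\cap\alpha)}{2}\equiv\sum_i\frac{g(\gamma\cap\alpha_i)}{2}+\sum_{i<j}\left|\gamma\cap\alpha_i\cap\alpha_j\right|\pmod 2 ,
\end{equation}
where the higher-order intersection terms drop out modulo $2$.
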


\noindent The following criteria is useful in practice.

\begin{lem}\label{lem:separation}
Given $X_{\alpha_i}$, if $S_X$ has generators $X_{\beta_j }$ such that for each $j$ there exists at most a single $i$ satisfying
\begin{equation}
Z_{\alpha_i\cap\beta_j}\not\in S,
\end{equation}
then the operators $X_{\alpha_i}$ have separated syndromes.
\end{lem}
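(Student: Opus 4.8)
The plan is to strip the definition of ``separated'' down to a single identity between subgroups of $P_Z$ and then prove its one nontrivial inclusion using the hypothesis. Write $\phi_i$ for the syndrome of $X_{\alpha_i}$ and put $\alpha=\sum_i\alpha_i$, $\phi=\sum_i\phi_i$. Since $\delta\mapsto(\text{syndrome of }X_\delta)$ is additive (the commutation sign of $X_\delta$ with a fixed $Z_\gamma$ depends only on $|\delta\cap\gamma|\bmod 2$, which is additive in $\delta$), $\phi$ is the syndrome of $X_\alpha$; and by lemma~\ref{lem:invariance} the group $H(\psi)$ depends only on the syndrome $\psi$, so $H(\phi)=H_\alpha$ and $H(\phi_i)=H_{\alpha_i}$ regardless of which representatives we pick. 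Hence the statement to prove is just
\begin{equation}
H_\alpha = \prod_i H_{\alpha_i}.
\end{equation}

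Next I would rewrite both sides in terms of the given generators $X_{\beta_j}$ of $S_X$. The elementary fact driving everything is that intersection of supports distributes over symmetric difference, so both $\delta\mapsto Z_{\alpha\cap\delta}$ and $\delta\mapsto Z_{\delta\cap\beta}$ are group homomorphisms; in particular $Z_{\delta\cap\beta}=\prod_{j\in J}Z_{\delta\cap\beta_j}$ whenever $X_\beta=\prod_{j\in J}X_{\beta_j}\in S_X$. Because the code is CSS, $X_\beta\in S$ is equivalent to $X_\beta\in S_X$, so this identity gives that $H_{\alpha_i}$ is the group generated by $S_Z$ together with the operators $Z_{\alpha_i\cap\beta_j}$ over all $j$, that $H_\alpha$ is generated by $S_Z$ together with the operators $Z_{\alpha\cap\beta_j}=\prod_i Z_{\alpha_i\cap\beta_j}$, and that $\prod_i H_{\alpha_i}$ is generated by $S_Z$ together with all $Z_{\alpha_i\cap\beta_j}$ over all $i$ and $j$. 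The inclusion $H_\alpha\subseteq\prod_i H_{\alpha_i}$ is then immediate, since each generator $\prod_i Z_{\alpha_i\cap\beta_j}$ of $H_\alpha$ is a product of generators of $\prod_i H_{\alpha_i}$; this direction uses nothing about the hypothesis.

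For the reverse inclusion I would invoke the hypothesis index by index. Fix $j$. By assumption there is at most one index $i(j)$ with $Z_{\alpha_{i(j)}\cap\beta_j}\notin S_Z$; for every other $i$ we have $Z_{\alpha_i\cap\beta_j}\in S_Z\subseteq H_\alpha$, so those generators of $\prod_i H_{\alpha_i}$ already lie in $H_\alpha$. For the remaining one, rearranging $Z_{\alpha\cap\beta_j}=\prod_i Z_{\alpha_i\cap\beta_j}$ yields
\begin{equation}
Z_{\alpha_{i(j)}\cap\beta_j}= Z_{\alpha\cap\beta_j}\cdot\!\!\prod_{i\neq i(j)}\!\! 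Z_{\alpha_i\cap\beta_j},
\end{equation}
a product of $Z_{\alpha\cap\beta_j}\in H_\alpha$ with elements of $S_Z\subseteq H_\alpha$; hence $Z_{\alpha_{i(j)}\cap\beta_j}\in H_\alpha$ too (and if no such $i(j)$ exists there is nothing to check for this $j$). So all generators of $\prod_i H_{\alpha_i}$ lie in $H_\alpha$, giving $\prod_i H_{\alpha_i}\subseteq H_\alpha$, hence equality, which is the definition of the $\phi_i$ being separated.

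I do not expect a real obstacle here; the whole content is the two linearity observations — that $Z_{\alpha\cap(\cdot)}$ is additive, so each $H$ is spanned over $S_Z$ by the finitely many $Z_{\alpha\cap\beta_j}$, and that the ``at most one $i$'' hypothesis is exactly what lets one invert the relation $Z_{\alpha\cap\beta_j}=\prod_i Z_{\alpha_i\cap\beta_j}$ modulo $S_Z$. The only points needing a little care are justifying that $H(\phi)$ may be evaluated on the non-canonical representative $\alpha=\sum_i\alpha_i$ (lemma~\ref{lem:invariance}) and handling the degenerate case in which, for some $j$, every $Z_{\alpha_i\cap\beta_j}$ is already a stabilizer.
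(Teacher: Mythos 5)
Your proof is correct and takes essentially the same route as the paper's: both reduce the claim to the identity $H_\alpha=\prod_i H_{\alpha_i}$, note that the inclusion $H_\alpha\subseteq\prod_i H_{\alpha_i}$ is trivial, and obtain the reverse inclusion by inverting $Z_{\alpha\cap\beta_j}=\prod_i Z_{\alpha_i\cap\beta_j}$ modulo $S_Z$ using the ``at most one $i$'' hypothesis. The only (harmless) difference is that you spell out the appeal to lemma~\ref{lem:invariance} to evaluate $H$ on the possibly non-tolerable representative $\alpha=\sum_i\alpha_i$, which the paper leaves implicit.
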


\subsection{Check operator erasure}\label{sec:erasure}

Suppose that the transversal gate $U$ and the depolarization operator $\mathcal D_{S_X}$ are applied to an encoded state subject to a \emph{known} tolerable error $x\in P_X$ with syndrome $\phi$. Unlike in the case of transversal Clifford gates, the error $x$ propagates to a random Pauli error. In particular, the propagated error is the combination of a known Pauli error and a random error in $H(\phi)$.

In the absence of further errors the randomness is not a problem: $H(\phi)$ contains no logical operators. In a fault-tolerant scenario, however, there will be further errors, and the randomness has an effect similar to the erasure of physical qubits in the code: some information is known to disappear. Check operators not commuting with $H(\phi)$ should be ignored.

To be more specific, suppose that the original error $X_{\alpha+\omega}$ is unknown but a \emph{good guess} $X_\alpha$ exists for it%
\footnote{
This scenario is realized in~\cite{bombin:2018:colorful}, where the error introduced in the preparation of encoded $X$ eigenstates using JIT decoding is guessed via conventional decoding at a later stage.
}. Assume that both $X_\alpha$ and $X_{\alpha+\omega}$ are tolerable. Since the final state is expected to be subject to a random error in $E_\alpha$, the natural approach to correct it is to
\begin{enumerate}
\item
apply some element of $E_\alpha$ to the final state,
\item
perform error correction utilizing only the syndrome of the subgroup
\begin{equation}
\mathcal Z_S(H_\alpha),
\end{equation}
\item 
using the full syndrome of $S$, apply any element of $H_\alpha$ that takes the resulting state back to an encoded state.
\end{enumerate}
Assuming this procedure, what is the effective noise? By lemma~\ref{lem:A6} below, on step 2 the effective error takes the form
\begin{equation}
Z_{\alpha\cap\omega}e_\omega, \qquad e_\omega\in E_\omega,
\end{equation}
up to an irrelevant element of $H_\alpha$.

\begin{lem}\label{lem:A6}
For any $X_\alpha, X_\omega$, and any $e_\sigma\in E_\sigma$, where $\sigma
\in\{\alpha,\omega,\alpha+\omega\}$, in coset notation,
\begin{equation}
e_{\alpha}e_{\alpha+\omega} G_\alpha G_{\alpha+\omega} = e_\omega Z_{\alpha\cap\omega}G_\alpha G_{\omega}.
\end{equation}
\end{lem}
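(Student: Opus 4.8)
The plan is to read the claimed identity as an equality of two cosets of a single group, and to verify it in two stages: first that the two groups that appear, $G_\alpha G_{\alpha+\omega}$ and $G_\alpha G_\omega$, actually coincide, and second that the chosen representatives agree modulo that common group. Throughout I would work at the level of the underlying $\mathbb{F}_2$ supports (equivalently modulo global phases), which is the level at which $E_\sigma$ is defined through commutation; there all $Z$-type operators are involutions and products are sign-free, $Z_aZ_b=Z_{a+b}$.

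For the first stage I would use that intersection distributes over symmetric difference, $(\alpha+\omega)\cap\beta=(\alpha\cap\beta)+(\omega\cap\beta)$, so that $Z_{(\alpha+\omega)\cap\beta}=Z_{\alpha\cap\beta}Z_{\omega\cap\beta}$ for every $X_\beta\in\mathcal Z(S)$. Since $G_\sigma$ is generated by $S_Z$ together with the $Z_{\sigma\cap\beta}$, this shows that each generator of $G_{\alpha+\omega}$ lies in $G_\alpha G_\omega$ and, conversely, each generator of $G_\omega$ lies in $G_\alpha G_{\alpha+\omega}$. Hence $G:=G_\alpha G_\omega=G_\alpha G_{\alpha+\omega}$, and both sides of the lemma are cosets of the same group $G$.

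For the second stage, two $Z$-type operators lie in the same coset of $G$ iff their ratio lies in $G$, and by $\mathbb{F}_2$-orthogonality (double centralizer) this holds iff they have the same commutator with every $X_\gamma\in\mathcal Z(G)$. Because $G=G_\alpha G_\omega=G_\alpha G_{\alpha+\omega}$, any such $\gamma$ satisfies $X_\gamma\in\mathcal Z(G_\alpha)\cap\mathcal Z(G_\omega)\cap\mathcal Z(G_{\alpha+\omega})$, so the defining relation of $E_\sigma$ applies for all three labels. Using bilinearity of the commutator together with $(Z_{\alpha\cap\omega},X_\gamma)=(-1)^{|\alpha\cap\omega\cap\gamma|}$, the equality of commutators reduces to the numerical identity
\[
\tfrac12\big(g(\gamma\cap\alpha)+g(\gamma\cap(\alpha+\omega))\big)\equiv \tfrac12\, g(\gamma\cap\omega)+|\alpha\cap\omega\cap\gamma|\pmod 2 .
\]

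Finally I would settle this identity with the inclusion--exclusion rule $g(A+B)=g(A)+g(B)-2g(A\cap B)$ applied to $A=\gamma\cap\alpha$ and $B=\gamma\cap\omega$, noting $\gamma\cap(\alpha+\omega)=A+B$ and $A\cap B=\alpha\cap\omega\cap\gamma$. The left-hand side then becomes $g(A)+\tfrac12 g(B)-g(A\cap B)$, so the two sides differ by $g(A)-g(A\cap B)-|A\cap B|$. Since every $b_q$ is odd, $g(A\cap B)\equiv|A\cap B|\pmod 2$, which cancels the last two terms, while $g(\gamma\cap\alpha)=g(A)$ is even because $X_\gamma\in\mathcal Z(G_\alpha)$ (this evenness is precisely what makes $E_\alpha$ well defined). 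Thus the difference vanishes mod $2$, the commutators agree, and the coset identity follows. I expect the only real friction to be the bookkeeping of the $g$-terms modulo $4$ versus modulo $2$ and the clean isolation of the leftover $g(\gamma\cap\alpha)$, whose vanishing rests on the consistency of the definition of $E_\alpha$ rather than on any fresh computation.
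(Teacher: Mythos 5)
Your proof is correct and follows essentially the same route as the paper's own sketch: both reduce the coset identity to comparing commutators with every $X_\gamma\in\mathcal Z(G_\alpha G_{\alpha+\omega})$ and settle it with the same mod-2 arithmetic on the $g$-terms. The details you supply --- the explicit check that $G_\alpha G_{\alpha+\omega}=G_\alpha G_\omega$, the double-centralizer reduction of coset equality to commutator equality, and the evenness of $g(\gamma\cap\alpha)$ --- are precisely the steps the paper's sketch leaves implicit.
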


\section{Confined error syndromes}\label{sec:confined}

This section discusses non-local error distributions that appear as residual noise when performing single-shot error correction in 3D topological codes~\cite{bombin:2015:single-shot}. The propagation of such errors under transversal gates is the subject of section~\ref{sec:propagation}, in the specific case of 3D color codes.

\subsection{Local noise}

In the analysis of fault tolerance it is useful to consider stochastic noise models in which our target computational state is afflicted by a distribution of errors. Typically, as the computation proceeds, the error distribution stays within a given family  of error distributions with high probability.

Local noise is the most important class of such families of distributions. 
\begin{warning}
A distribution of (error) operators, each with support on a set of qubits $A$, is {\bf local} with error rate $0\leq p<1$ if for any set of qubits $B$
\begin{equation}\label{eq:local}
\text{prob}(B\subseteq A)\leq p^{|B|}.
\end{equation}
\end{warning}

\subsection{Single-shot error correction}

There is no reason to consider only local noise. Indeed, going beyond local noise can lead to interesting new fault tolerant schemes. This is exemplified by single-shot error correction~\cite{bombin:2015:single-shot}, which in some cases relies on non-local forms of noise.

The essential idea behind single-shot error correction can be summarized in the equation
\begin{equation}
\partial^2=0.
\end{equation}  
Here the symbol $\partial$ represents syndrome operators. In particular, two different syndrome operators. For a stabilizer code, the first maps Pauli errors to their syndrome and the second maps noisy stabilizer outcomes to their syndrome (stemming from the redundancy of the measurements). Thus the relation simply says that a noiseless syndrome of a Pauli error has trivial syndrome. 

Single-shot error correction makes it possible to correct the errors of a noisy syndrome (to a certain degree). In some cases the price to pay is an unconventional form of residual noise.

\subsection{Syndrome confinement in 3D topological codes}

Consider 3D topological stabilizer codes with a membrane condensate picture, such as 3D toric codes or 3D color codes. In these codes there are two very different kind of error syndromes, which can be point-like or string-like (e.g. the syndrome of $Z$ or $X$ Pauli errors, respectively, in 3D color codes). Errors with a string-like syndrome can be corrected in a single-shot fashion, at a cost: residual errors no longer follows a local distribution, but rather the syndromes do, as explained next~\cite{bombin:2015:single-shot}.

The motivation for the new class of error distributions comes from the Hamiltonian picture, where syndromes are excitations: below a critical temperature strings are confined and the system is partially self-correcting. Inspired by this phenomena, in the fault-tolerant picture we similarly require that string-like syndromes follow a local distribution, given some set of localized check operators that generate the stabilizer group.

\begin{warning}
A distribution of error syndromes, each with negative eigenvalue for a set $A$ of check operators, is {\bf local} with `syndrome rate' $0\leq p<1$ if for any set of check operators $B$ the inequality~\eqref{eq:local} holds.
\end{warning}
\noindent
Below a critical syndrome rate the strings are confined: the probability to find strings of a given length at a given location decreases exponentially with length. This is the regime of interest.

\begin{figure}
\includegraphics[width=\columnwidth]{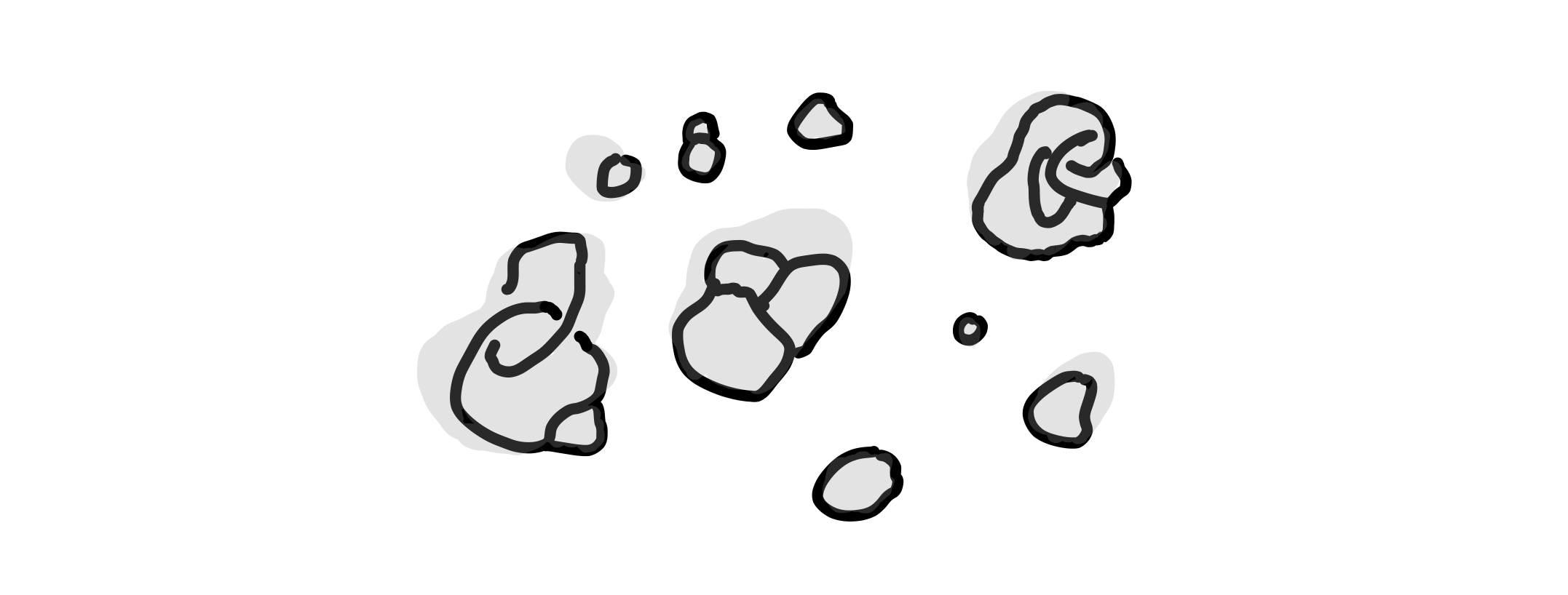}
\caption{The support of a correctable error (grey) and its string-like syndrome (black). The error is a product of correctable errors such that their syndromes are mutually disconnected. Each `elementary' correctable error is as local as possible, given the corresponding syndrome.}
\label{fig:typical_error}
\end{figure}

Naturally, characterizing the syndrome distribution is not enough to characterize a distribution of Pauli errors: errors with the same syndrome can differ in their logical effect. However, characterizing the syndrome is enough if we consider distributions of correctable Pauli errors%
\footnote{
Alternatively, an extra parameter may indicate the likelihood of uncorrectable errors~\cite{bombin:2015:single-shot}.
}.
For this kind of 3D codes, there exist a choice of correctable errors that is well suited to confined string-like syndromes~\cite{bombin:2015:single-shot}:
\begin{itemize}
\item
There exist a natural notion of connectedness for string-like syndromes, such that each connected component of a syndrome is a syndrome%
\footnote{
This is not always true for (the unlikely) syndromes of size comparable to the system~\cite{bombin:2015:single-shot}.
}.
\item
For connected syndromes the 'most local' class of errors is correctable%
\footnote{
Again, for syndromes of size comparable to the system there might be no clear choice.
}.
\item
A product of such `elementary' correctable errors with mutually disconnected syndromes is correctable.%
\end{itemize}
Figure~\ref{fig:typical_error} is a cartoon of a correctable error and its syndrome.

If the resulting family of correctable error distributions is to be useful, it has to be compatible with computation. Specifically,  if computation involves transversal  gates, their compatibility has to be ensured, \emph{i.e.} that the gates preserve the structure of the family. This problem is addressed in section~\ref{sec:propagation}%
\footnote{
Specifically for 3D color codes, but the results apply more broadly.
}.

\section{3D color codes}

This section reviews basic aspects of 3D color codes, making along the way a few new observations.

\subsection{3-Colexes}

\begin{warning}

A {\bf 3-colex} is a 3D lattice with~\cite{bombin:2007:branyons, bombin:2015:single-shot}
\begin{itemize}
\item
{\bf (3-)cells} labeled with 4 colors, 
\item
a (2D) boundary that is divided in {\bf facets} (connected sets of faces, called \emph{regions} in~\cite{bombin:2015:single-shot}) labeled with the same 4 colors,
\item
{\bf vertices} that belong to exactly a cell or facet of each color.
\end{itemize}
If two cells, or a cell and a facet, share a qubit, they meet at a single face. 

\end{warning}

\noindent Notice that no adjacent cells/facets have the same color. Each facet and cell forms a 2-colex, which is defined analogously, with 3-colored faces and (1D) boundaries. 

Given two different colors $\kappa_i$, $\kappa_j$, it is convenient to write 
\begin{equation}
\kappa_i\kappa_j
\end{equation} 
to represent the unordered pair of colors $\{\kappa_i, \kappa_j\}$. Edges and faces are attached colors as follows:
\begin{itemize}
\item
An edge $e$ has color $\kappa$ if the three cells/facets of which $e$ is part have colors different from $\kappa$.
\item
A face $f$ has label $\kappa_1\kappa_2$ if the two cells/facets of which $f$ is part have colors different from $\kappa_i$.
\end{itemize}

Below we only consider tetrahedral colexes, \emph{i.e.} 3-colexes with the topologogy of a tetrahedron, with each triangular facet labeled with a different color, see figure~\ref{fig:tetrahedral}. For an interesting family of tetrahedral colexes, see~\cite{bombin:2015:gauge}.

\begin{figure}
\includegraphics[width=\columnwidth]{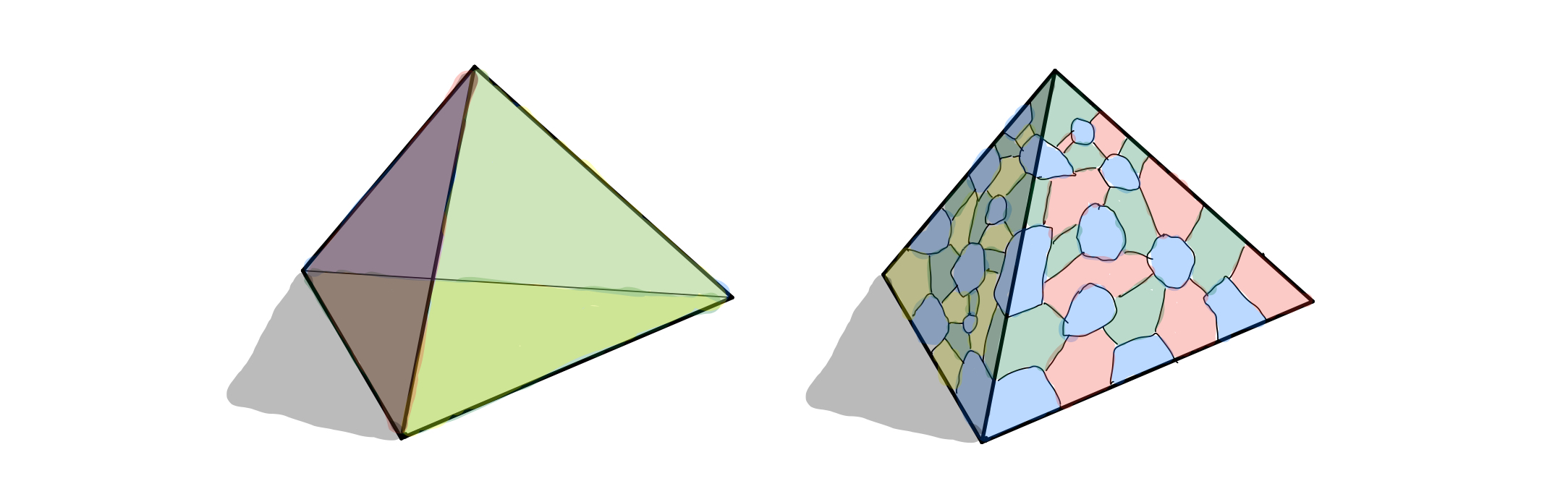}
\caption{Abstract topology (left) and lattice geometry (right) of a tetrahedral colex.}
\label{fig:tetrahedral}
\end{figure}

\subsection{Stabilizer group}

Color codes are CSS stabilizer codes that inherit the geometry of a colex. \emph{We identify cells, faces and facets with their sets of qubits}.

\begin{warning}

Given a  given a 3-colex, the corresponding {\bf 3D color code} is a stabilizer code with,
\begin{itemize}
\item
a qubit per vertex,
\item
a stabilizer generator $X_c$ per cell $c$,
\item
a stabilizer generator $Z_g$ per face $f$.
\end{itemize}

\end{warning}

\noindent That is, the stabilizer group is
\begin{equation}
S = S_XS_Z, \qquad S_X = \langle X_c\rangle, \qquad S_Z= \langle Z_f \rangle.
\end{equation}
where the indexation over the sets of cells and faces is implicit. Among the elements of $\mathcal Z(S)$ a central role is played by facet operators $X_r, Z_r$, with $r$ a facet~\cite{bombin:2015:single-shot}. 

\begin{warning}

{\bf Tetrahedral codes} are 3D color codes obtained from a tetrahedral 3-colex. 

\end{warning}
\noindent Tetrahedral codes have a single logical qubit. The logical $X$ and $Z$ operators can be chosen to be the $X_r$ and $Z_r$ facet operators for any of the triangular facets $r$~\cite{bombin:2015:single-shot}.

\subsection{Transversal gates}

Tetrahedral codes have a exceptional set of transversal gates: 

\begin{success}
\begin{itemize}[leftmargin=\baselineskip]
\item
The logical {\bf T gate} is performed by applying $T^{\pm b}$ to each physical qubit, for some odd $b$ and a certain sign pattern.
\item
As in any CSS code, the {\bf CNot gate} is transversal. It can be performed on any two codes with identical geometry, by applying CNot gates on the corresponding pairs of physical qubits.
\item
The controlled phase or {\bf CP gate} is transversal. It can be performed on any two codes with a pair of facets with identical geometry, by applying $CP$ gates on the corresponding pairs of qubits located on those facets.
\item
The {\bf P gate} can be performed on membrane-like subsets of physical qubits. In particular any facet suffices, as in the case of CP gates.
\end{itemize}
\end{success}

\noindent Appendix~\ref{sec:tetrahedral} provides the details. The transversal T and CNot gates were introduced in~\cite{bombin:2007:3dcc, bombin:2015:gauge}. It is worth pointing out that \emph{CNot gates are undesirable when locality matters}. If gates are local, the CNot gate requires the two codes to share the same space, unlike the $CP$ gate that only requires them to be side by side. 
The diagram of figure \ref{fig:gates} compares the four gates. 

\begin{figure}
\includegraphics[width=\columnwidth]{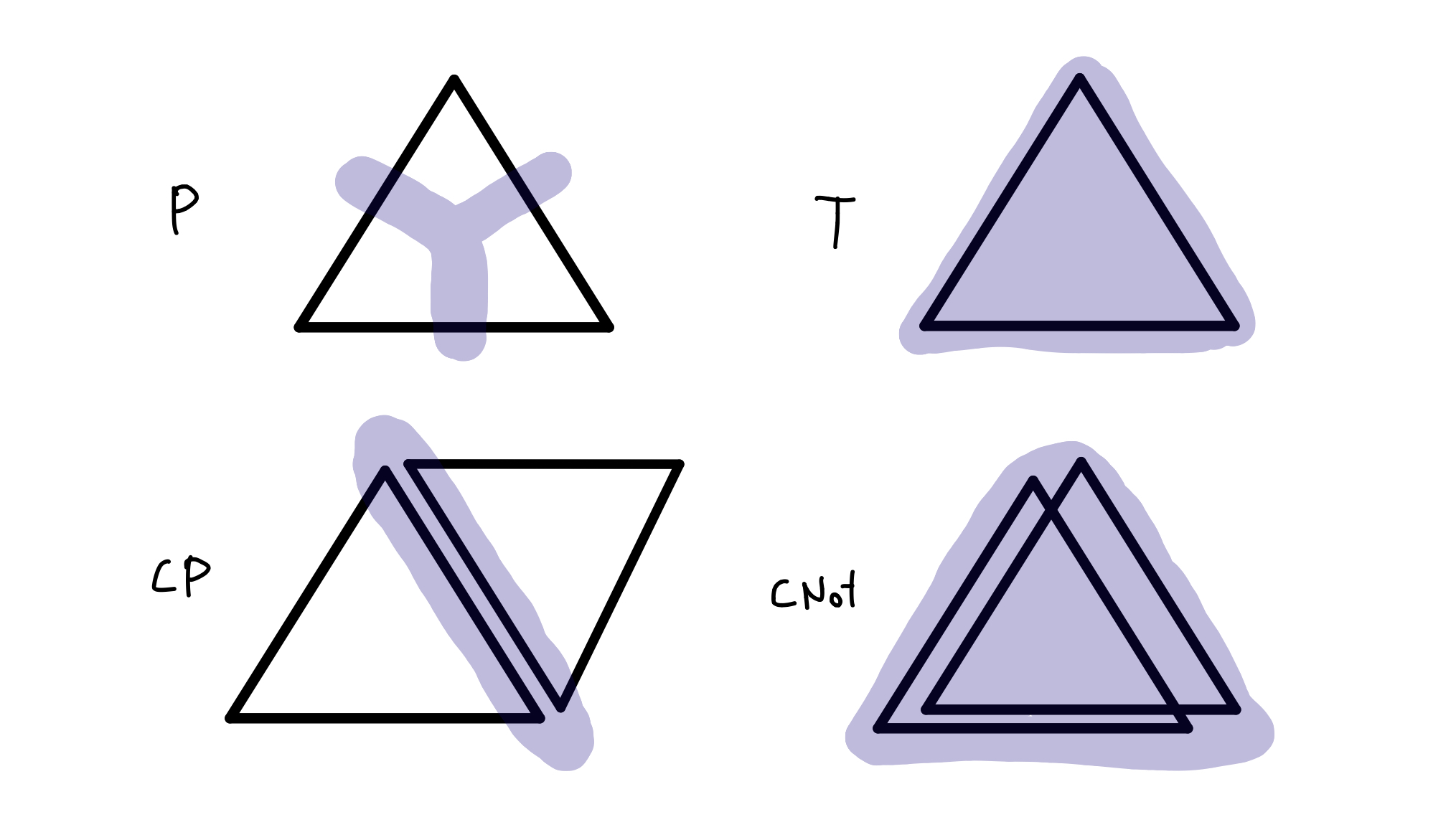}
\caption{The support of transversal gates in tetrahedral codes, with one spatial dimension removed for clarity.}
\label{fig:gates}
\end{figure}

\subsection{Error syndromes}

As in section \ref{sec:T:syndromes}, it is convenient to consider separately $S_X$ and $S_Z$ syndromes.

\subsubsection{Dual graph}

Syndromes are easiest to visualize in terms of the dual graph. This is essentially the 1-skeleton of the dual simplicial complex~\cite{bombin:2015:single-shot}, where faces become edges, with the peculiarity that facets give rise to edges with a single endpoint:

\begin{warning}

The {\bf dual graph} of a 3-colex has
\begin{itemize}
\item
an edge per face of the colex, and
\item
a vertex per cell of the colex.
\end{itemize}
An edge dual to a face $f$ has endpoints on the vertices dual to the cells of which $f$ is a face.

\end{warning}

\noindent In the dual picture vertices are 4-colored (with a color $\kappa_i$, as their dual cells) and edges are 6-colored (with a label $\kappa_i\kappa_j$, as their dual faces). In particular, \emph{the colors of a dual edge are complementary to the colors of its endpoints}, see figure~\ref{fig:dual_graph}.

\begin{figure}
\includegraphics[width=\columnwidth]{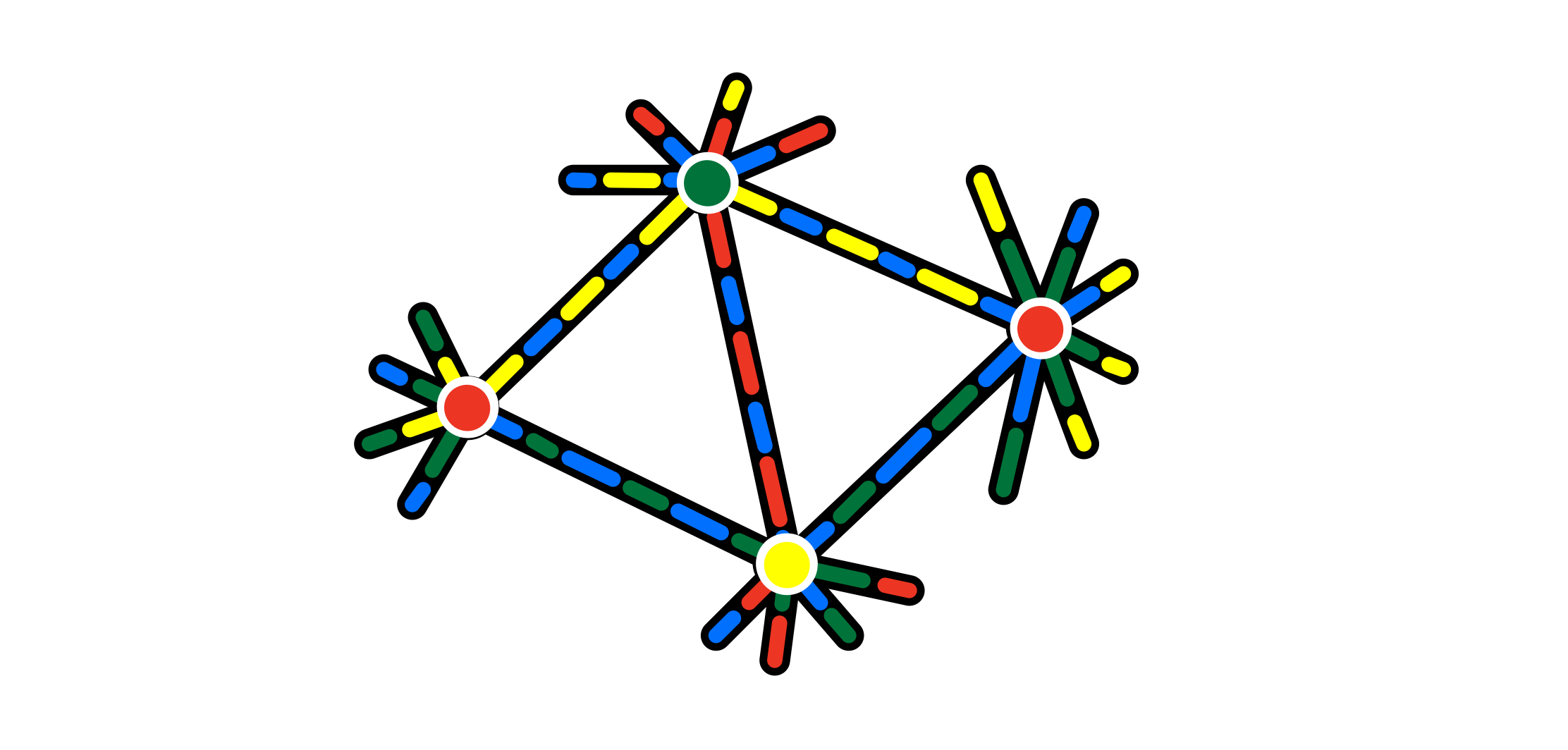}
\caption{The dual graph has 4-colored vertices. Edges are labeled with the two colors complementary to the colors of their endpoints.}
\label{fig:dual_graph}
\end{figure}

\subsubsection{Electric charge}

The generators of $S_X$ are cell operators $X_c$. We identify each syndrome $\sigma$ of $S_X$ with the subset $\xi$ of cells $c$ with $\sigma(c)=-1$, and regard $\xi$ as a $\mathbf Z_2^3$ 'electric' charge configuration. In the dual graph picture $\xi$ is a set of vertices, and each vertex carries a charge corresponding to its color. 
\begin{warning}
The $\mathbf Z_2^3$ {\bf charge group} is generated by the colors $\kappa_i$ via the presentation
\begin{equation}
\kappa_1 + \kappa_1 = \kappa_1+\kappa_2+\kappa_3+\kappa_4 = 0,
\end{equation}
where the $\kappa_i$ are arbitrary but all different. 
\end{warning}
This definition is meaningful because charge is conserved. Indeed, given a colex and a color $\kappa$, let $W(\kappa)$ denote the set of $\kappa$-colored facets and cells. The conservation of electric charge is manifest in the following relations~\cite{bombin:2015:single-shot}
\begin{equation}
\prod_{w\in W(\kappa)} X_w = \prod_{w\in W(\kappa')} X_w.
\label{charge_conservation}
\end{equation}
For tetrahedral codes there are no further relations and thus any subset of cells is a syndrome.

\subsubsection{Magnetic flux}

The generators of $S_Z$ are face operators $Z_f$. We identify a syndrome $\sigma$ of $S_Z$ with the subset $\phi$ of faces $f$ with $\sigma(f)=-1$,  and regard $\phi$ as a $\mathbf Z_2^3$ 'magnetic' flux configuration. In the dual graph picture $\phi$ is a set of edges, each carrying a flux unit corresponding to its label. 
\begin{warning}
The $\mathbf Z_2^3$ {\bf flux group} is generated by the color pairs $\kappa_i\kappa_j$, via the presentation
\begin{equation}
\kappa_1\kappa_2+\kappa_1\kappa_2=\kappa_1\kappa_2+\kappa_2\kappa_3+\kappa_3\kappa_1=0,
\end{equation}
where the $\kappa_i$ are arbitrary but all different. 
\end{warning}
This definition is meaningful because flux is conserved. This becomes apparent by attaching a \emph{monopole configuration} $\partial \phi$ to each flux configuration $\phi$: a mapping from the set of vertices to the flux group taking each vertex $v$ to the sum of the labels of the edges of $\phi$ that are incident on $v$. Importantly, $\partial\phi$ dictates whether $\phi$ is the syndrome of some element of $P_X$~\cite{bombin:2015:single-shot}:

\begin{success}

In a tetrahedral code, a set of dual edges $\phi$ is a syndrome iff it satisfies {\bf Gauss's law}, i.e.
\begin{equation}
\partial \phi = 0.
\end{equation}
\end{success}

\noindent It is worth noting the folloing points: 
\begin{itemize}
\item
Syndromes are characterized by the continuity of color lines (any given color $\kappa$ forms closed loops), see figure~\ref{fig:color_conservation}.

\begin{figure}
\includegraphics[width=\columnwidth]{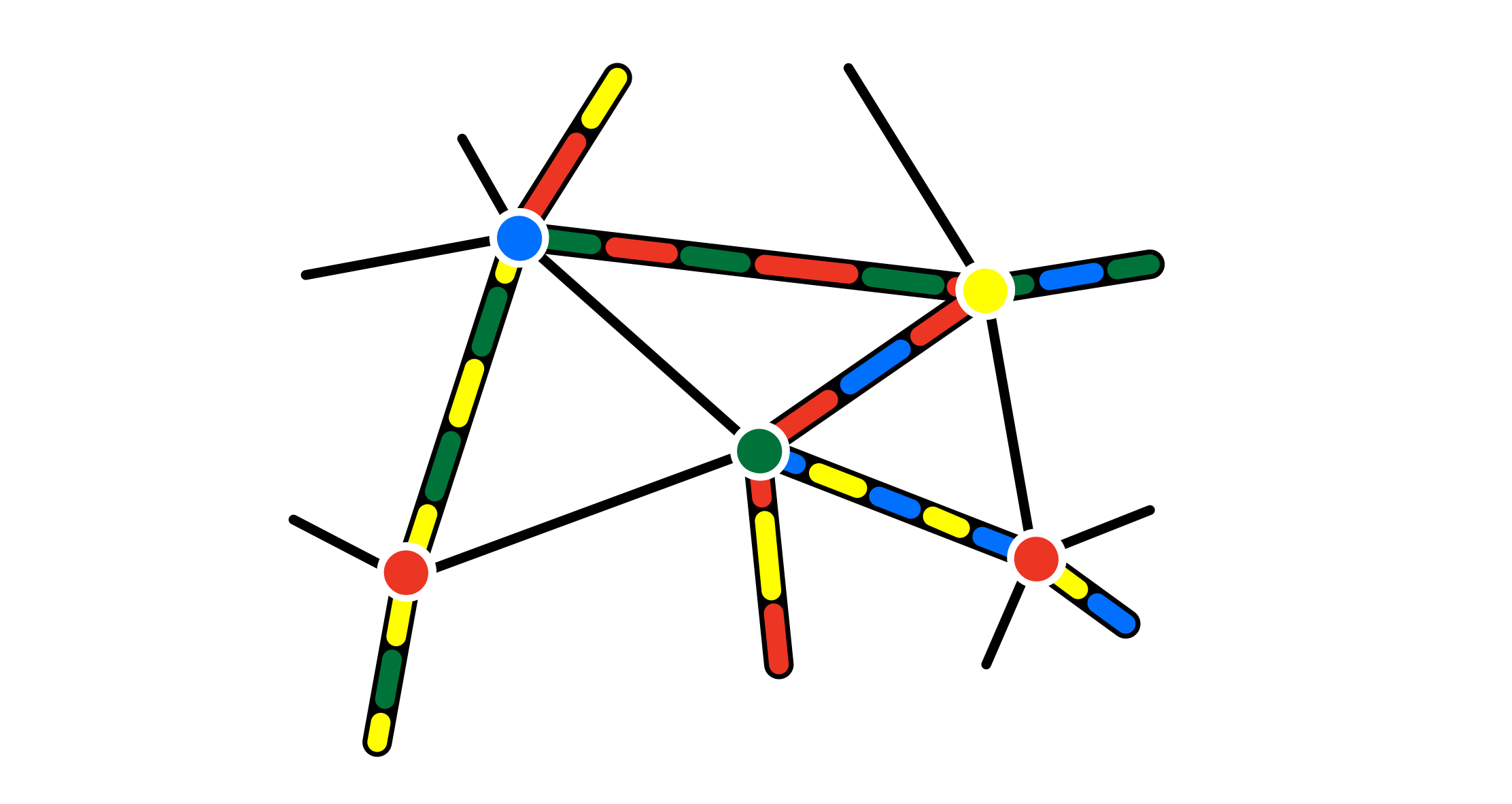}
\caption{In a flux configuration satisfying Gauss's law, any given color forms closed loops.}
\label{fig:color_conservation}
\end{figure}

\item
Electric charges and magnetic monopoles have a very different nature. In the Hilbert space where the 3D color code lives the only possible magnetic flux configurations $\phi$ are syndromes. Flux conservation is kinematic and thus there are no monopoles. In contrast, the absence of electric charges is a property of encoded states, not of the Hilbert space.
\item
When $\phi$ represents noisy outcomes of $Z_f$ measurements, $\partial \phi$ is a 'syndrome of a syndrome', a key concept for single-shot error correction~\cite{bombin:2015:single-shot}. In particular, if $\partial p$ is the syndrome (in $S_Z$) of some $p\in P_X$ then
\begin{equation}
\partial^2 p = 0.
\end{equation}
\item
Any set of faces (or dual edges) $\phi$ can regarded as a binary vector by identifying $\phi$ with its indicator function. In this picture syndromes form a classical code, with two check operators per cell (or inner dual vertex).
\item
The asymmetry between $S_X$ and $S_Z$ has important implications for fault-tolerant error correction. Since $X$-error syndromes, i.e. syndromes of $S_Z$, take the form of connected string-like objects, single-shot error correction tools apply and local $X$ errors can be dropped in favor of correctable $X$ errors with local syndromes. The same is not true for $Z$ errors%
\footnote{
We are \emph{not} considering gauge color code techniques here~\cite{bombin:2015:gauge}.
}.
\end{itemize}
 
\subsection{Topological interaction}

From a condensed matter perspective the 3D color code subspace is the ground state subspace of a topologically ordered system~\cite{bombin:2007:branyons}. In particular, it can be regarded as a string-net condensate or, dually, a membrane-net condensate. In this picture the charge and flux excitations are subject to a topological interaction. This section reviews the operator-related aspects of the physics.

\subsubsection{String operators}

\begin{warning}

The {\bf string operator} for a given color $\kappa$ and a path $s$  on a 3-colex (with no duplicated edges) is
\begin{equation}
Z_{s,\kappa} := \prod_{e\in E(s,\kappa)} Z_e,
\end{equation}
where $E(s,\kappa)$ is the set of edges of $s$ with color $\kappa$.

\end{warning}

\noindent Regarded as an error, $X_{s,\kappa}$ has a syndrome
\begin{equation}
\partial Z_{s,\kappa}=\{c_1\}+\{c_2\},
\end{equation}
where $c_1$ (respectively $c_2$) is the only $\kappa$-cell adjacent to the first (last) vertex of $s$. The string operator transfers a charge $\kappa$ between its endpoints, i.e. it creates a $\kappa$ unit of electric field along its path. In particular, for closed paths $s$ the string operators have trivial syndrome. Moreover, when in addition $s$ has trivial $\mathbf Z_2$ homology its string operators belong to the stabilizer. This shows that the code subspace can be regarded as a string condensate.

\subsubsection{Membrane operators}

\begin{warning}

The {\bf membrane operator} for a given color pair $\kappa_1\kappa_2$ and a surface $m$ on a 3-colex (a 2-manifold-like set of faces) is
\begin{equation}
X_{m,\kappa_1\kappa_2} := \prod_{f\in F(m,\kappa_1\kappa_2)} X_m,
\end{equation}
where $F(m,\kappa_1\kappa_2)$ is the set of faces of $s$ labeled with the color set complementary to $\kappa_1\kappa_2$.

\end{warning}

\noindent Regarded as an error, $X_{s,\kappa}$ has a syndrome
\begin{equation}
\partial X_{m,\kappa_1\kappa_2}=\sum_v \{f_v\},
\end{equation}
where $v$ is the list of vertices along the boundary of $m$, and $f_v$ is the only $\kappa_1\kappa_2$ face adjacent to $v$. The membrane operator creates a unit of $\kappa_1\kappa_2$ magnetic flux along its boundary. In particular, for closed surfaces $m$ the membrane operators have trivial syndrome. Moreover, when in addition $m$ has trivial $\mathbf Z_2$ homology its membrane operators belong to the stabilizer. This shows that the code subspace can be regarded as a membrane condensate.

\subsubsection{Duality}\label{sec:duality}

Given a path $s$ and a surface $m$ that intersect once (in the sense that $s$ approaches and leaves the surface $m$ on different sides), the corresponding string and membrane operators have commutation rules that only depend upon the charge and flux they create. Namely,
\begin{equation}
(Z_{\kappa_1}, X_{m,\kappa_2\kappa_3}) = \langle\kappa_1,\kappa_2\kappa_3\rangle := 
\begin{cases}
-1, &\text{if $\kappa_1\in\kappa_2\kappa_3$},\\
1, &\text{otherwise}.
\end{cases}
\end{equation}
From a condensed matter perspective, this gives rise to topological interactions between the corresponding charge and flux excitations~\cite{bombin:2007:branyons}. 

\begin{figure}
\includegraphics[width=\columnwidth]{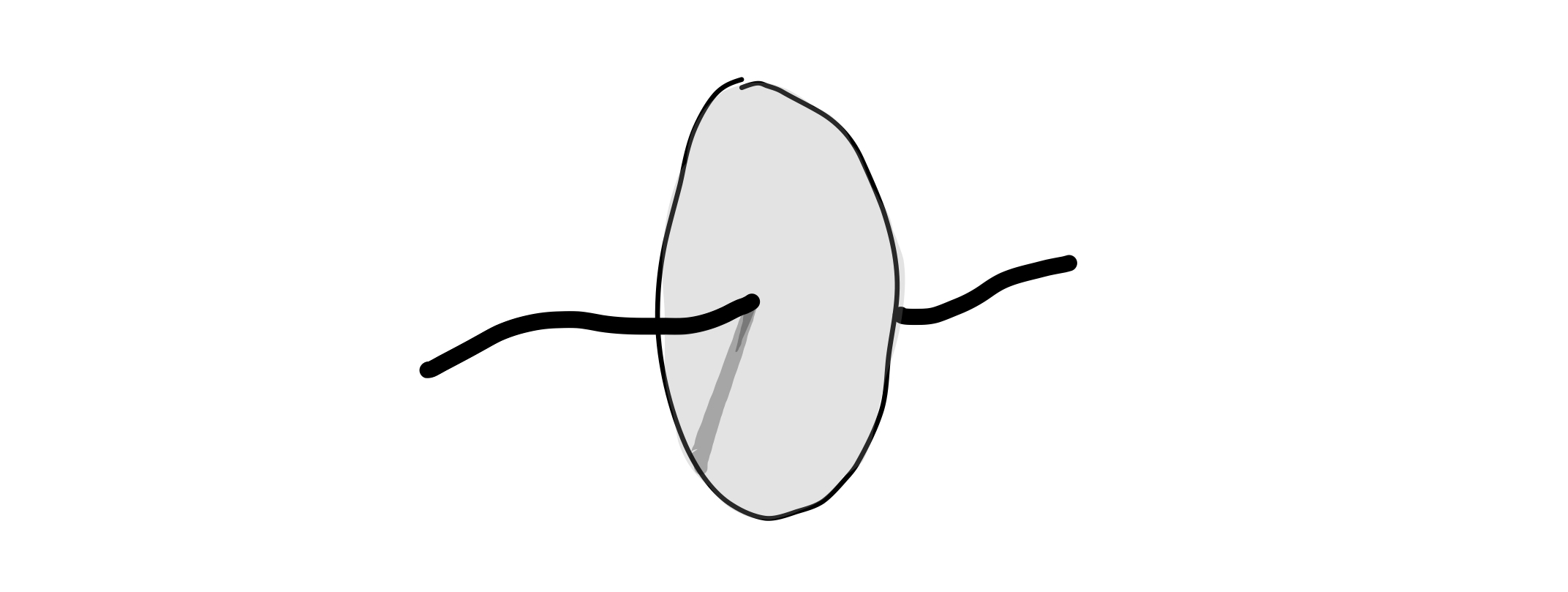}
\caption{A membrane pierced by a string. The corresponding membrane and string operators are subject to commutation rules that depend only on their charge and flux labels.}
\label{fig:pierced_membrane}
\end{figure}

Each color $\kappa$ yields an irrep $\langle \kappa,\cdot \rangle$ of the flux group. These irreps generate the full group of irreps of the flux group, providing a natural duality between the charge and flux groups. This has a useful application. Suppose that some string-like operator $s$ is given,~\emph{i.e.} an operator with support along a string-like region and that commutes with all check operators except those in its endpoint regions. The charge transferred by $s$ can be identified from its commutation relations with the different membrane operators of a surface pierced by its string, as in figure~\ref{fig:pierced_membrane}. And vice versa, the same trick can be applied to a membrane-like operator.

\newcommand{\qset}[1]{Q({#1})}

\section{Error propagation}\label{sec:propagation}

This section studies error propagation for the transversal gates of tetrahedral color codes. The focus is on distributions of correctable $X$ Pauli errors with local syndromes, in the sense of section~\ref{sec:confined}. The technical results of this section can be found in appendices~\ref{sec:noise_Clifford} and~\ref{sec:noise_T}.

Section~\ref{sec:linking} farther illuminates the physics behind the T gate error propagation described here.

\subsection{Correctable errors}

Working with syndrome distributions requires choosing a set of correctable errors. As discussed in section~\ref{sec:confined}, a convenient approach is to separate each syndrome into its connected components and to attach an error separately to each of them. This is possible as long as the connected component of a syndrome is itself a syndrome, as is the case for $X$-error syndromes in tetrahedral codes.

The criteria to attach a correctable error to a connected syndrome, as per the recipe of~\ref{sec:confined}, is to choose the `most local error'. In tetrahedral codes the following result provides a straightforward choice:

\begin{lem}\label{lem:globality}
If an $X$-error syndrome $\phi$ has no faces on a facet $r$ there exist $x\in P_X$ with syndrome $\phi$ and no support in $r$.
\end{lem}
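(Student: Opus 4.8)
The plan is to realize the desired error by correcting an arbitrary solution on the qubits of $r$ using operators that preserve the syndrome. Write $R$ for the set of qubits lying on the facet $r$, and for an $X$ operator let $\pi_R$ denote its restriction to $R$, viewed as a $\mathbf Z_2$ vector on $R$. Since $\phi$ is an $X$-error syndrome there is some $x_0\in P_X$ with syndrome $\phi$, and any operator $x=x_0u$ with $u\in\mathcal Z_X(S)$ still has syndrome $\phi$ (multiplying by an undetectable $X$ operator leaves the $S_Z$ syndrome untouched). Thus it suffices to find such a $u$ with $\pi_R(u)=\pi_R(x_0)$, for then $x$ has no support on $R$.

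First I would translate the hypothesis into a statement local to $r$. Each face $f$ lying on $r$ has $Z_f$ supported inside $R$, so $x_0$ commutes with $Z_f$ iff $\pi_R(x_0)$ does, and this happens iff $f\notin\phi$. Since $\phi$ has no faces on $r$, $\pi_R(x_0)$ commutes with every such $Z_f$. Regarding $r$ as a $2$-colex whose $Z$-stabilizer group is generated by the face operators $Z_f$ it carries, this says precisely that $\pi_R(x_0)$ is an undetectable $X$ operator of this $2$D color code.

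The core step is to show that every undetectable $X$ operator of the $2$-colex $r$ is the $R$-restriction of some $u\in\mathcal Z_X(S)$. Here I would invoke the colex axiom that a cell and the facet $r$, when they share qubits, meet at a single face: this makes $\pi_R(X_c)$ equal to the $2$-colex $X$-face operator $X_{c\cap r}$ for each cell $c$ adjacent to $r$, and as $c$ ranges over all such cells these exhaust the faces of the $2$-colex $r$. Hence the $X$-stabilizer group of $r$ equals $\pi_R(S_X)$. The triangular facet $r$ has disk topology and therefore encodes a single logical qubit, whose logical $X$ is represented by the facet operator $X_r$, which is supported on $R$ and lies in $\mathcal Z_X(S)$. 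Consequently the undetectable $X$ operators of the $2$-colex $r$ are generated by $\pi_R(S_X)$ together with $\pi_R(X_r)$, all of which lie in $\pi_R(\mathcal Z_X(S))$.

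Combining the last two paragraphs, $\pi_R(x_0)\in\pi_R(\mathcal Z_X(S))$, so the required $u$ exists and $x=x_0u$ is the sought error. The main obstacle I anticipate is the core step: one must check that the restrictions of cell operators exhaust \emph{precisely} the $X$-stabilizers of the facet $2$-colex (this is exactly where the single-shared-face axiom enters) and that the one remaining degree of freedom, the logical $X$ of the facet, is accounted for by $X_r$. Everything else reduces to routine $\mathbf Z_2$ linear algebra on the supports of the operators involved.
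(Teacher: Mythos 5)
Your proof is correct, and it reaches the paper's conclusion by a dual route. The paper first normalizes the error to commute with $Z_r$ (absorbing a possible $X_r$ factor at the start), then argues on the $Z$ side: the identity $(Z_{\alpha\cap r},X_r)=(X_\alpha,Z_r)=1$ together with lemma~\ref{lem:syndrome} gives $Z_{\alpha\cap r}\in S$, so by the relation $S_Z\|_r=S_Z^r$ of \eqref{eq:SZr} the set $\alpha\cap r$ is a sum of faces lying in $r$; each such face is then lifted to the cell operator of the unique cell containing it, and multiplying by these cell operators clears the facet. You argue instead on the $X$ side: the hypothesis directly makes $\pi_R(x_0)$ an undetectable $X$ operator of the triangular 2D code on $r$, which you then write as an element of $\pi_R(S_X)\cdot\langle\pi_R(X_r)\rangle$ --- in effect you prove inline the containment $\mathcal Z_{P_X|_r}(S^r)\subseteq \mathcal Z_X(S)|_r$, one direction of the second relation of \eqref{eq:SXr}, which the paper records in appendix~\ref{sec:gates_facets} but does not invoke in its proof of this lemma. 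Both arguments rest on the same two structural facts (cell operators restrict to the $X$ face operators of the facet's 2-colex, via the single-shared-face axiom, and there is exactly one logical qubit in play), and both end by multiplying the error by an element of $S_X\langle X_r\rangle$. What your version buys is self-containedness: no appeal to lemma~\ref{lem:syndrome} or to the commutator-duality step, with the $X_r$ freedom absorbed into the correcting operator $u$ rather than into an initial normalization. What it costs is the need for the 2D input that the triangular code encodes a single logical qubit whose logical $X$ is representable by $X_r$; the paper's route draws the corresponding input from $X_r$ being a non-trivial logical operator of the 3D code.
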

\noindent
For any $\phi$ that does not connect the four facets of the tetrahedron this result defines a correctable class of errors%
\footnote{
As shown in section~\ref{sec:tolerable_cc} an element of $P_X$ with no support on a facet is tolerable, which shows that this is a consistent definition.
}.
This is enough to define, as per the recipe, correctable errors for all syndromes such that none of their connected components has faces in all four facets. These are the \emph{typical} syndromes when operating in a confined regime: connected components with a length comparable to the system size are unlikely.

\subsubsection{Tolerable errors}\label{sec:tolerable_cc}

Recall from section~\ref{sec:T} that the tolerable errors induced by the transversal T gate form a set of correctable errors. Interestingly, tolerable errors form a superset of the correctable errors defined by the connected-component-based approach discussed here. Indeed,
\begin{itemize}
\item
if $X_\alpha$ has no support on a given facet $r$, then we can choose $X_\lambda = X_r$ in lemma~\ref{lem:tolerability}(iii), showing that $X_\alpha$ is tolerable, and 
\item
the product of tolerable errors with mutually disconnected syndromes is tolerable, by the following result and lemma~\ref{lem:factorization}(i).
\end{itemize}

\begin{lem}\label{lem:separability_cc}
If the $X$-error syndromes $\phi_i$ are mutually disconnected, then they are separated.
\end{lem}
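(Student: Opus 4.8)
The plan is to deduce this from Lemma~\ref{lem:separation}, taking as the generators $X_{\beta_j}$ of $S_X$ the cell operators $X_c$. Fix representatives $X_{\alpha_i}$ of the syndromes $\phi_i$. By Lemma~\ref{lem:separation} it then suffices to show that for every cell $c$ there is at most one index $i$ with $Z_{\alpha_i\cap c}\not\in S$. I will obtain this by relating the condition $Z_{\alpha_i\cap c}\in S$ to whether the syndrome $\phi_i$ reaches the dual vertex $c$, and then invoking disconnectedness.

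The key local claim is that if $\phi_i$ has no face lying on $c$, then $Z_{\alpha_i\cap c}\in S$. To see this, recall that a face $f\subseteq c$ belongs to $\phi_i$ precisely when $X_{\alpha_i}$ anticommutes with $Z_f$, i.e.\ when $|\alpha_i\cap f|$ is odd; since $f\subseteq c$ this equals $|(\alpha_i\cap c)\cap f|$. The hypothesis thus says that the $\mathbf Z_2$ vector $\alpha_i\cap c$ is orthogonal to the face vector of every $f\subseteq c$. Because $c$ is a $3$-cell, its boundary is a homologically trivial $2$-sphere, so the associated $2$-colex encodes no logical qubit; equivalently, the $\mathbf Z_2$ space spanned by the faces of $c$ is its own orthogonal complement. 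Hence $\alpha_i\cap c$ is itself a sum of faces of $c$, so $Z_{\alpha_i\cap c}\in\langle Z_f : f\subseteq c\rangle\subseteq S_Z\subseteq S$, as claimed.

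With the claim in hand the argument closes quickly. Suppose, for some cell $c$, two distinct indices $i\neq i'$ both satisfied $Z_{\alpha_i\cap c}\not\in S$ and $Z_{\alpha_{i'}\cap c}\not\in S$. By the contrapositive of the claim, $\phi_i$ and $\phi_{i'}$ would each contain a face of $c$, that is, each would contain a dual edge incident to the dual vertex $c$. But then $c$ would join these edges into a single connected piece, contradicting that $\phi_i$ and $\phi_{i'}$ are mutually disconnected. Hence at most one $i$ has $Z_{\alpha_i\cap c}\not\in S$, the hypothesis of Lemma~\ref{lem:separation} is met, and the $\phi_i$ are separated.

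I expect the main obstacle to be the local claim, specifically the passage from ``$\alpha_i\cap c$ is orthogonal to every face of $c$'' to ``$\alpha_i\cap c$ is a sum of faces of $c$''. This is exactly the self-duality of the $\mathbf Z_2$ face space on the spherical boundary of $c$, which rests on the triviality of the $\mathbf Z_2$ homology of the sphere (no encoded qubit) together with the self-orthogonality of faces in a $2$-colex; one should also confirm that this local geometry is unaffected at cells meeting the tetrahedron's facets, where $c$ is still a ball. A secondary point requiring care is matching the notion of ``mutually disconnected'' with the sharing of a dual vertex, since that is precisely what drives the final contradiction.
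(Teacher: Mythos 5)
Your proof is correct and takes essentially the same route as the paper: reduce to lemma~\ref{lem:separation} with the cell operators $X_c$ as generators of $S_X$, and observe that mutual disconnectedness forces each cell to carry a face of at most one $\phi_i$. Your local claim is precisely the needed direction of the paper's lemma~\ref{lem:syndrome}(i), which the paper cites rather than re-proves, and whose proof rests on the same self-duality of the spherical 2-colex on the cell boundary that you spell out explicitly.
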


\noindent
The recipe of section~\ref{sec:confined} to choose correctable errors was originally introduced ad-hoc to prove the existence of a threshold for single-shot error correction~\cite{bombin:2015:single-shot}. Remarkably, it coincides (were defined) with a set correctable errors naturally defined by an a priori unrelated transversal gate.

Is it possible to apply these ideas to practical error correction? Deciding if an $X$ error is tolerable requires deciding whether a system of binary linear equations has a solution, with the problem size growing like the number of physical qubits. Thus it is unlikely that there is any gain when compared with the straightforward approach based on separating the syndrome in connected components. And leaving computational complexity aside, it is not even clear that this particular way of choosing correctable errors is sensible, say, for small codes.

\subsection{Transversal Clifford gates }

This section briefly discusses the propagation of errors for logical Clifford gates. The emphasis is on the propagation of the syndrome: since errors do not propagate beyond their support, this is enough to characterize the propagated error as long as the source error is localized.

\subsubsection{Standard P gate}

\begin{figure}
\includegraphics[width=\columnwidth]{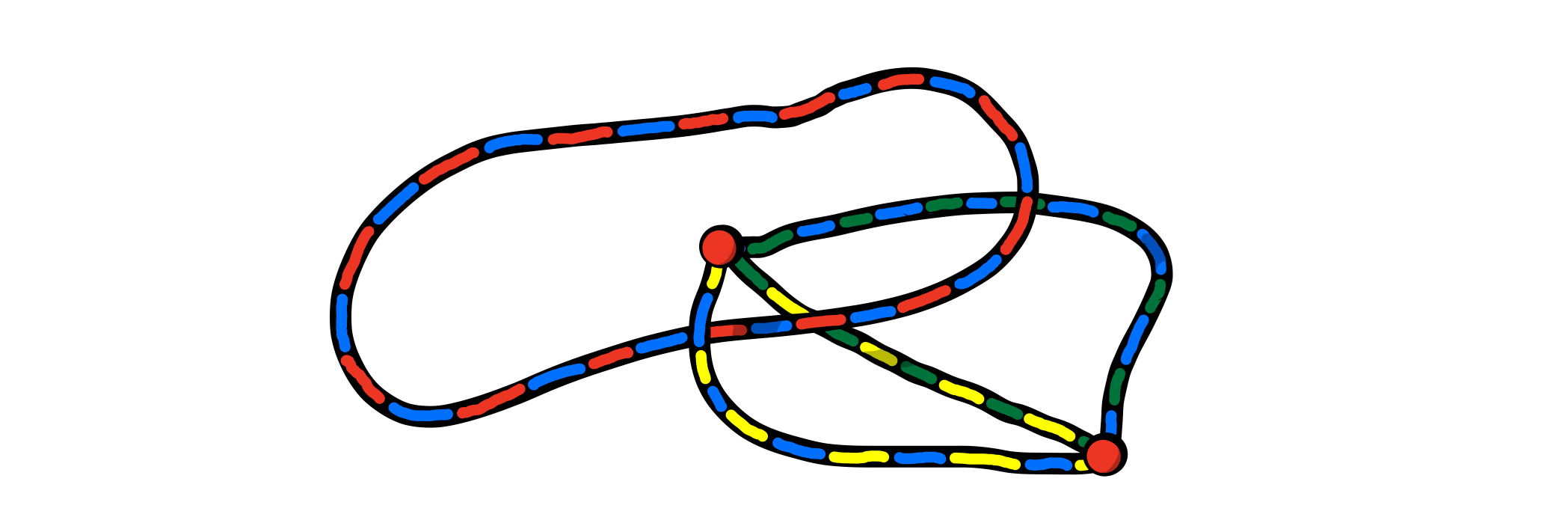}
\caption{A flux configuration and its branching points (colored vertices).}
\label{fig:branching}
\end{figure}

\begin{warning}

$\text{br}(\phi)$ is the set of {\bf branching points} of the flux configuration $\phi$: cells where $\phi$ has an odd number of faces of any given color label.

\end{warning}

\noindent The P gate can be implemented transversally by applying twice the transversal T gate. It commutes with $Z$ errors. As shown in appendix~\ref{sec:noise_Clifford} the propagation of $X$ errors takes the form:
\begin{equation}
X_\alpha \mapsto X_\alpha Z_\beta,\qquad \partial Z_\beta = \text{br}(\partial X_\alpha).
\end{equation}
The relationship between the syndrome $\phi$ and the charges $\text{br}(\phi)$ is depicted in figure~\ref{fig:branching}.

The same relationship between flux and charges occurs in a different scenario: single-shot error correction in gauge color codes~\cite{bombin:2015:single-shot}. In that scenario the flux is the wrongly reconstructed part of the 'gauge syndrome', and the charges are the syndromes of the residual noise. Otherwise the situation is equivalent: if the flux is sufficiently confined the propagated $Z$ errors follow a local distribution.

\subsubsection{Facet P gate}

\begin{figure}
\includegraphics[width=\columnwidth]{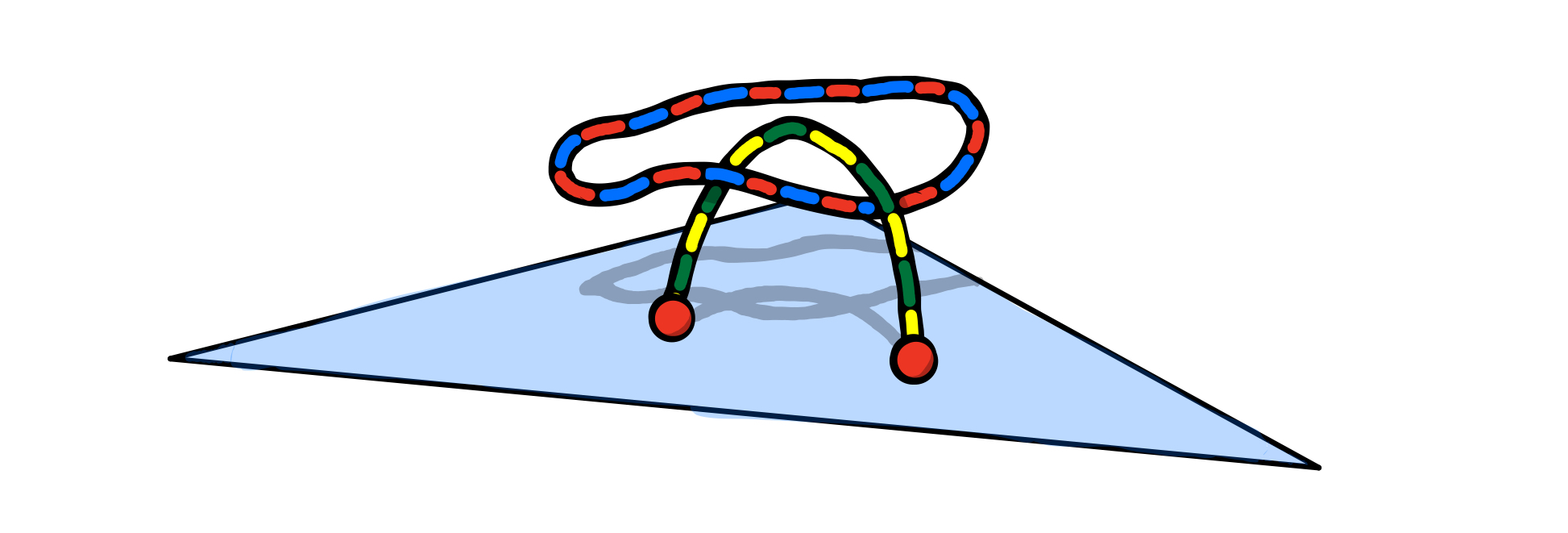}
\caption{A flux configuration and its endpoints on the flue facet.}
\label{fig:endpoints_facet}
\end{figure}

\begin{warning}
$\text{end}_r(\phi)$ is the set of {\bf endpoints} of the flux configuration $\phi$ on the facet $r$: cells sharing a face with $r$ that belongs to $\phi$.
\end{warning}

\noindent 
The P gate can be implemented within a much more restricted support: the support of any logical $X$ operator (which are membrane-like), see appendix~\ref{sec:tetrahedral}. In particular, the P gate can be implemented on a single facet $r$ of the tetrahedron. This gate also commutes with $Z$ errors. As shown in appendix~\ref{sec:noise_Clifford} the propagation of $X$ errors takes the form:
\begin{equation}
X_\alpha \mapsto X_\alpha Z_\beta,\qquad \partial Z_\beta = \text{end}_r(\partial X_\alpha).
\end{equation}
It is worth noting that $Z_\beta$ can always be chosen to have its support contained in the facet $r$. Figure~\ref{fig:endpoints_facet} depicts the relationship between the syndrome $\phi$ and the charges $\text{end}_r(\phi)$.

These relationship between fluxes and charges occurs also in a known scenario: dimensional jumps~\cite{bombin:2016:dimensional}. A dimensional jump involves a form of single-shot error correction in which the flux is the wrongly reconstructed part of an error syndrome, and the charges are the syndromes of the corresponding residual noise (on the 2D color code of the facet $r$). As in that scenario, if the flux is sufficiently confined the propagated $Z$ errors follow a local distribution.

\subsubsection{Facet-to-facet CP gate}

The facet-to-facet CP gate is very closely related to the facet $P$ gate. The only difference is that $Z$ errors are propagated to the other code's facet, rather than to the facet where the original $X$ error lives.

\subsubsection{CNot gate}

The CNot gate propagates (copies) $X$ errors from the source code to the target code, and conversely for $Z$ errors. The resulting distribution of errors can be obtained by composing two such distributions, and this is compatible with confined syndromes~\cite{bombin:2015:single-shot}.

\subsection{Transversal T gate}\label{sec:propagation_T}

This sections discusses the propagation of Pauli error under the transversal $T$ gate. This gate commutes with $Z$ errors, so that only $X$ error propagation requires attention.
Given the definitions and results of section~\ref{sec:T}, the main aim here is to understand, for any given flux configuration $\phi$, the erasure of information due to $H(\phi)$ and the support of the errors in $E(\phi)$, particularly in connection with local syndrome distributions. The proofs are in appendix~\ref{sec:noise_T}.

\subsubsection{Check operator erasure}

As discussed in section~\ref{sec:erasure}, there exist scenarios in which only the syndrome of $\mathcal Z_S(H(\phi))$ is relevant for error correction. For tetrahedral codes there is a subgroup of $\mathcal Z_S(H(\phi))$ that has a simple description and physical interpretation, and that should give similar results in practice.
To describe it, the first step is to factorize $H(\phi)$ using lemma~\ref{lem:separability_cc} and the following result.

\begin{lem}\label{lem:outside}
\label{lem:Zerror}
If $w$ is a cell or facet with no faces in the $X$-error syndrome $\phi$, then $X_w$ commutes with the elements of $E(\phi)$.
\end{lem}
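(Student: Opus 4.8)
The plan is to fix a tolerable representative and reduce the statement to two facts about it. Choose a tolerable $X_\alpha$ with syndrome $\phi$, so that $E(\phi)=E_\alpha$ is well defined by lemmas~\ref{lem:invariance} and~\ref{lem:tolerability}. Since $E_\alpha$ is a coset of $G_\alpha$, for a fixed $z_0\in E_\alpha$ every $z\in E_\alpha$ has the form $z=z_0 s$ with $s\in G_\alpha$, whence $(X_w,z)=(X_w,z_0)\,(X_w,s)$. Thus $X_w$ commutes with all of $E_\alpha$ as soon as (a) $X_w\in\mathcal Z(G_\alpha)$, so that $(X_w,s)=1$ for every $s\in G_\alpha$, and (b) the defining value $(X_w,z_0)=(-1)^{g(w\cap\alpha)/2}$ equals $+1$, i.e. $g(w\cap\alpha)\equiv 0\pmod 4$. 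Note that (b) is read off from the definition of $E_\alpha$ by taking $\gamma=w$, which is legitimate precisely because (a) supplies $X_w\in\mathcal Z(G_\alpha)$.

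For (a) I would first observe that $X_w\in\mathcal Z(S)$, since cell and facet operators are undetectable, so $X_w$ commutes with $S_Z\subseteq G_\alpha$. It then remains to show that $X_w$ commutes with every $Z_{\alpha\cap\beta}$, $X_\beta\in\mathcal Z(S)$. Because $\beta\mapsto\alpha\cap\beta$ distributes over symmetric difference, the map $\beta\mapsto(X_w,Z_{\alpha\cap\beta})$ is $\mathbf Z_2$-linear, so it suffices to check $\beta$ ranging over the cell and facet generators of $\mathcal Z_X(S)$. If $\beta$ does not meet $w$ the commutator is trivial; if $\beta\neq w$ meets $w$ then by the colex axiom they share a single face $f$, and $(X_w,Z_{\alpha\cap\beta})=(-1)^{|f\cap\alpha|}=+1$ because $f\notin\phi$; and for $\beta=w$ one needs $|w\cap\alpha|$ even. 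This last point is the geometric core: each vertex of the $2$-colex $w$ lies on exactly three faces of $w$, so $\sum_{f\subseteq w}|f\cap\alpha|=3\,|w\cap\alpha|\equiv|w\cap\alpha|\pmod 2$, and since no face of $w$ lies in $\phi$ the left-hand side is even, forcing $|w\cap\alpha|$ even. This establishes (a).

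For (b) I would exploit the same threefold incidence, now with weights: double counting gives $\sum_{f\subseteq w} g(f\cap\alpha)=3\,g(w\cap\alpha)$, hence $g(w\cap\alpha)\equiv-\sum_{f\subseteq w}g(f\cap\alpha)\pmod 4$. When $w$ is a cell it carries a $2$-colex on a sphere, which has no logical operators, so the restriction $X_{\alpha\cap w}$, commuting with all face operators of $w$ (as no face of $w$ lies in $\phi$), is a product of $2$-colex face operators; writing $\alpha\cap w=\sum_i f_i$ and expanding $g$ by inclusion--exclusion reduces (b) to weighted mod-$4$ (single face) and mod-$2$ (pairs of faces) orthogonality relations among the $b_q$. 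These are exactly the triorthogonality-type conditions that make $U$ an encoded gate and that underlie the well-definedness of $E_\alpha$. For a facet $w$ the $2$-colex is a disk with a logical operator, and the facet operator $X_w$ must be tracked separately, but the argument is otherwise analogous.

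The hard part is (b): steps (a) and the coset reduction are clean consequences of the colex incidence structure, whereas (b) is where the non-Clifford nature of $U$ genuinely enters, through the mod-$4$ arithmetic of the weights $b_q$. I expect the most economical route is to isolate those arithmetic facts as a single lemma about $g$ on the $2$-colex of a cell and then invoke it here; the sphere topology of cells, which removes any $2$-colex logical, is what lets the reduction to face operators proceed without residual terms.
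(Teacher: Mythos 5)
Your skeleton matches the paper's proof: pick a tolerable $X_\alpha$ with syndrome $\phi$, reduce the claim to (a) $X_w\in\mathcal Z(G_\alpha)$ and (b) $g(w\cap\alpha)\equiv 0\pmod 4$, and settle (b) by writing $\alpha\cap w$ as a sum of faces of the 2-colex $w$ and invoking the mod-4 arithmetic of the weights $b_q$ (the paper cites~\cite{bombin:2015:gauge} for precisely this step). For cells your argument is sound. The genuine gap is the facet case, and it is not a technicality. Your proof that $|w\cap\alpha|$ is even rests on the claim that every vertex of the 2-colex $w$ lies on exactly three faces of $w$. That is true for cells, which are closed 2-colexes, but false for facets, which are 2-colexes \emph{with boundary}: a vertex on an edge of the tetrahedron belongs to two facets and two cells, hence to only two faces of $w$ (and a corner vertex to only one), since no face is shared by two facets. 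Your double count then only controls the parity of $\alpha$ on the interior of $w$, leaving the boundary contribution unresolved.

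Moreover, no repair that ignores tolerability can work, because for facets the statement genuinely needs it beyond the well-definedness of $E(\phi)$. Take $\alpha=r'$, the support of a facet operator $X_{r'}$ (a logical $X$), so that $\phi=\partial X_{r'}=\emptyset$ and the hypothesis ``no faces of $w$ in $\phi$'' holds vacuously for the facet $w=r\neq r'$. Since $Z_r$ and $X_{r'}$ are an anticommuting logical pair, $|r\cap r'|$ is odd, so $X_w$ anticommutes with $Z_{\alpha\cap w}$, and indeed $G_{r'}=\mathcal Z_Z(S)$ contains logical operators. Your step (a) uses nothing about $\alpha$ except its syndrome, so it would ``prove'' this false instance as well; hence the combinatorial argument cannot be correct as stated. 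The paper avoids this by routing the facet case through lemma~\ref{lem:syndrome}(ii): for a facet, absence of syndrome only yields $Z_{\alpha\cap w}\in\mathcal Z(S)$, and it is tolerability of $\alpha$ (no logicals in $G_\alpha$) that upgrades this to $Z_{\alpha\cap w}\in S$, after which lemma~\ref{lem:A1} gives (a) and the restriction to the 2D code of $w$ gives (b). The same issue infects your (b) for facets, where writing $\alpha\cap w$ as a sum of faces of the triangular code again requires excluding the 2D logical operator. (A lesser issue: when $w$ is a facet your generator check also encounters facet--facet pairs, which the colex axiom does not cover --- two facets meet at edges, not faces --- though this could be dodged by taking $w$ itself as the facet generator of $\mathcal Z_X(S)$.)
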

\noindent
By lemma~\ref{lem:separability_cc}, the group $H(\phi)$ takes the form
\begin{equation}
H(\phi)=\prod_i H(\phi_i),
\end{equation}
where $\phi_i$ are the connected components of $\phi$. And by lemma~\ref{lem:outside}, the elements of $H(\phi_i)$ commute with the operators $X_w$ of all cells and facets $w$  with no faces in $\phi_i$.
Thus, the following two kind of check operators generate a subgroup of $\mathcal Z_S(H(\phi))$:
\begin{itemize}
\item
cell operators from cells with no faces in $\phi$.
\item
for each $\phi_i$ and each color combination $\kappa\kappa'$ such that $\phi_i$ has no endpoints in the $\kappa$ and $\kappa'$ facets, the product of all cell operators from $\kappa$ and $\kappa'$ cells with faces of in $\phi_i$.
\end{itemize}
The latter operators commute with the elements of $H(\phi)$ by virtue of the charge conservation relations \eqref{charge_conservation}. This is an expression of the invariance, under $H(\phi)$, of the total charge in the cells connected by $\phi_i$ up to the charge that they can exchange with the facets where $\phi_i$ has endpoints. A clear physical picture emerges: under the action of $H(\phi)$ cells and facets only exchange charge if they belong to the same connected component of $\phi$.

\begin{success}
\center
$H(\phi)$ transports charge only along the syndrome $\phi$.
\end{success}

\subsubsection{Error support}\label{sec:support}

The aim here is to (i) reveal the close geometrical relationship between an $X$-error syndrome $\phi$ and the support of the elements of $E(\phi)$, and (ii) study the locality of propagated errors for local syndrome distributions, via~\ref{thm:T}.

\begin{warning}
$\qset{\phi}$ is the {\bf qubit set} of \emph{some} connected subgraph of the colex such that every cell and facet with a face in $\phi$ contains also at least a qubit of $\qset{\phi}$.
\end{warning}

\begin{figure}
\includegraphics[width=\columnwidth]{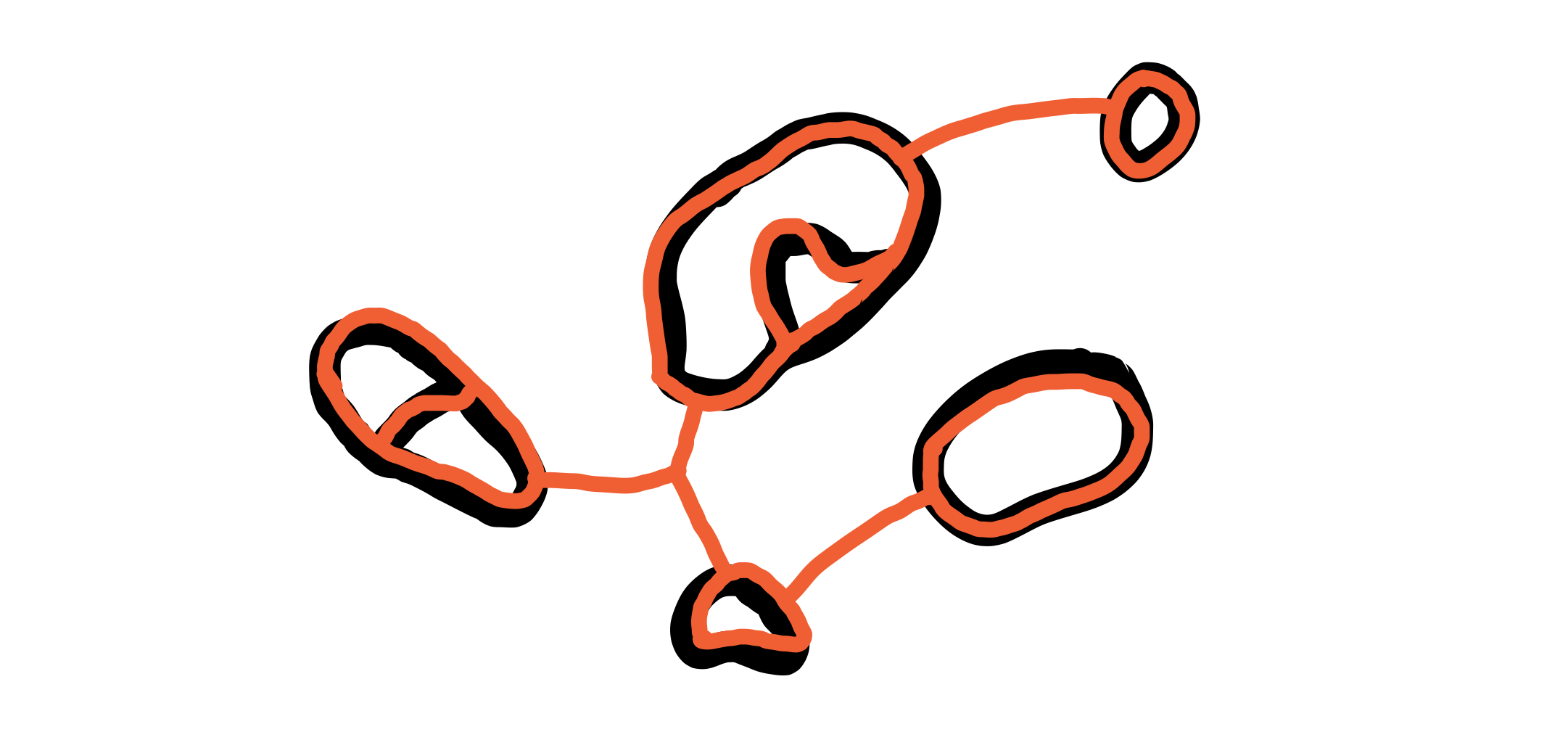}
\caption{A syndrome composed of several connected components (black) and a possible corresponding connected subgraph of the colex (orange).}
\label{fig:Qphi}
\end{figure}

\noindent
This construction is illustrated in figure~\ref{fig:Qphi}. The sets $\qset{\phi}$ are arbitrary within the given constraint. When $\phi$ is connected $\qset{\phi}$ can be a subset of of the qubits of cells with faces in $\phi$.

\begin{lem}\label{lem:Zerror}
For any $X$-error syndrome $\phi$, every element of $E(\phi)$ is equivalent, up to a stabilizer, to an operator with support contained in $\qset{\phi}$.
\end{lem}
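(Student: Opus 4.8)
The plan is to exploit the coset structure of $E(\phi)$ together with the factorization and localization lemmas already established. Fix a tolerable $X_\alpha$ with syndrome $\phi$, so that $E(\phi)=E_\alpha$ and, by lemma~\ref{lem:tolerability}(i), $G_\alpha=H_\alpha=H(\phi)$. Thus $E(\phi)=e_0\,H(\phi)$ for any representative $e_0\in E(\phi)$, and modulo $S_Z$ the shift group $H(\phi)$ is generated by the operators $Z_{\alpha\cap c}$ with $c$ a cell: for a tolerable error the facet contributions $Z_{\alpha\cap r}$ to $G_\alpha$ already lie in $\langle Z_{\alpha\cap c}\rangle\cdot S_Z$, since the $X$-part of $\mathcal Z(S)$ is generated by cells and facets whereas that of $S$ is generated by cells alone. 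Because $(\alpha\cap\beta)+(\alpha\cap\beta')=\alpha\cap(\beta+\beta')$, the assignment $\beta\mapsto Z_{\alpha\cap\beta}$ is a homomorphism, so it suffices to control these generators and a single representative: if each generator and $e_0$ is equivalent up to $S_Z$ to an operator supported in $\qset{\phi}$, then so is every element of $E(\phi)$, as the support of a product is contained in the union of the supports.

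First I would reduce to connected $\phi$. The connected components $\phi_i$ are mutually disconnected, hence separated by lemma~\ref{lem:separability_cc}, so $H(\phi)=\prod_i H(\phi_i)$. Choosing tolerable $X_{\alpha_i}$ with syndrome $\phi_i$ and pairwise disjoint support --- which is possible by localizing each $\alpha_i$ near $\phi_i$ via lemma~\ref{lem:globality} --- the cross terms $Z_{\alpha_i\cap\alpha_j}$ of lemma~\ref{lem:factorization}(ii) vanish, and since $E(\phi)$ and $\prod_i E(\phi_i)$ are cosets of the same group $\prod_i H(\phi_i)$ sharing the element $\prod_i e_i^0$, they coincide modulo $S_Z$. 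As $\qset{\phi}$ is a connected subgraph meeting every cell and facet carrying a face of $\phi$, it can be chosen to contain a $\qset{\phi_i}$ for each component together with connecting paths; it therefore suffices to localize each factor into its own component and then appeal to the connectivity of $\qset{\phi}$ to assemble them.

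For a single connected $\phi$ the heart of the argument is a within-cell localization. I would view $Z_{\alpha\cap c}$ as an operator on the boundary of the cell $c$, whose only available $Z$-stabilizers are the face operators $Z_f$ with $f$ a face of $c$. Since $X_\alpha$ has syndrome $\phi$, a face $f\subseteq c$ lies in $\phi$ exactly when $|\alpha\cap f|$ is odd, so the two-dimensional syndrome of $Z_{\alpha\cap c}$ internal to $c$ is supported precisely on the $\phi$-faces of $c$. Using that $c$ is topologically a ball, standard two-dimensional color-code reasoning shows that the class of $Z_{\alpha\cap c}$ modulo the face operators is fixed by those $\phi$-faces: it is trivial, so $Z_{\alpha\cap c}\in S_Z$, when $c$ has no $\phi$-face, and otherwise can be deformed to a neighbourhood of its $\phi$-faces. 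Each surviving generator is then supported on qubits of cells and facets carrying faces of $\phi$, which by construction meet $\qset{\phi}$, and routing $\qset{\phi}$ through these neighbourhoods places the whole support inside it. A representative $e_0$ is built in the same way, as a string-net along $\qset{\phi}$ realizing the charges that lemma~\ref{lem:outside} forces to sit on the cells and facets touched by $\phi$.

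I expect the main obstacle to be exactly this two-dimensional localization step, and especially its behaviour at the facets: one must track the charge that a connected component of $\phi$ is allowed to exchange with the facets on which it ends --- precisely the bookkeeping summarized by lemma~\ref{lem:outside} and the statement that ``$H(\phi)$ transports charge only along $\phi$''. Treating facet operators on the same footing as cell operators, establishing that a ball-shaped cell carries no extra $Z$-logical freedom, and confirming that the paths needed to keep $\qset{\phi}$ connected introduce no support outside it, is where the genuine care lies; the coset and factorization structure above then reduces everything else to routine support bookkeeping.
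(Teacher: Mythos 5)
Your reduction to connected components contains a genuine error, and it sits exactly at the subtle point of this lemma. You claim that tolerable $X_{\alpha_i}$ with mutually disconnected syndromes $\phi_i$ can be chosen with pairwise disjoint supports ``by localizing each $\alpha_i$ near $\phi_i$ via lemma~\ref{lem:globality}''. Lemma~\ref{lem:globality} does no such thing: it only produces a representative avoiding a single facet on which $\phi$ has no faces; it cannot localize an error near its syndrome. Indeed no such choice exists in general: an $X$ error whose syndrome is a loop is membrane-like, its support spans a surface bounded by that loop, and surfaces bounded by loops with odd linking number necessarily intersect. This is precisely the linking-charge phenomenon of section~\ref{sec:linking}: for linked components (with suitable flux labels) the cross term $Z_{\alpha_i\cap\alpha_j}=z(\phi_i,\phi_j)$ of lemma~\ref{lem:factorization}(ii) is \emph{not} an element of $H(\phi_i+\phi_j)$ --- it transfers a nonzero charge between the components --- so it cannot be discarded, and your assembled operator $\prod_i z_i$ then fails to lie in $E(\phi)$ exactly in the interesting cases. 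The inter-component strings living on the connecting paths of $\qset{\phi}$ are what the connectedness requirement in the definition of $\qset{\phi}$ exists to accommodate, and your argument as structured never produces them.

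Separately, your coset representative $e_0$ is asserted rather than constructed: ``built as a string-net along $\qset{\phi}$ realizing the charges that lemma~\ref{lem:outside} forces'' is essentially a restatement of the lemma, not a proof. That missing construction is the actual content of the paper's argument, which is both shorter and stronger: lemma~\ref{lem:outside} shows that an arbitrary $e\in E(\phi)$ anticommutes only with cell and facet operators $X_w$ of cells/facets having faces in $\phi$, each of which contains a qubit of $\qset{\phi}$; lemma~\ref{lem:tree2} --- proved by closing the tetrahedral code into a boundaryless spherical code with one extra qubit and running the tree induction of lemma~\ref{lem:tree}, which transports charge along a spanning tree --- then yields an operator equivalent to $e$ up to $S_Z$ and supported on a spanning tree of $\qset{\phi}$. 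Applied directly to $e$ this gives the lemma in one step, for \emph{any} admissible $\qset{\phi}$ (the paper treats these sets as arbitrary within their defining constraint), with no generator decomposition, no per-cell 2D localization, no reduction to connected components, and no specially routed choice of $\qset{\phi}$, which is all your proof would establish even if the gaps above were repaired.
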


\noindent Consider again the connected components $\phi_i$ of $\phi$. By lemmas~\ref{lem:factorization} and~\ref{lem:Zerror}, every element of $E(\phi)$ is the product of
\begin{itemize}
\item
for each $i$, an element of $E(\phi_i)$ with support contained in $\qset{\phi_i}$, and
\item
for each pair $i, j$, some $z(\phi_i,\phi_j)\in P_Z$ with support contained in $\qset{\phi_i+\phi_j}$ and, up to an element of $H(\phi_i+\phi_j)$,  of the form
\begin{equation}\label{eq:zij}
Z_{\alpha_i\cap\alpha_j}
\end{equation} 
for any tolerable $X_{\alpha_i}$, $X_{\alpha_j}$ with syndromes $\phi_i$, $\phi_j$, respectively.
\end{itemize}
Let $z(\phi_i,\phi_j)$ be trivial if it belongs to $H(\phi_i+\phi_j)$.
The connected components of $\phi$ can be arranged in clusters: if $z(\phi_i,\phi_j)$ is non-trivial, then $\phi_i$ and $\phi_j$ belong to the same cluster. This partition of the syndromes $\phi_i$ yields \begin{equation}
E(\phi) = \prod_k E(\phi[k]),\qquad \phi[k]:=\sum_{j\in I(k)} \phi_j,
\end{equation}
where $I(k)$ is the $k$-th cluster's set of indices. That is, up to stabilizers every element of $E(\phi)$ has support contained in the union of the sets $\qset{\phi[k]}$.

An assumption about the colex is required to make further progress.

\begin{warning}
If $z(\phi_i,\phi_j)$ is non-trivial there exists a path connecting $\qset{\phi_i}$ to $\qset{\phi_j}$ with length bounded by
\begin{equation}
k_0\min (|\phi_i|, |\phi_j|).
\end{equation}
\end{warning}

\noindent
This assumption is justified by a result of section~\ref{sec:linking}: if $z(\phi_i,\phi_j)$ is nontrivial $\phi_i$ and $\phi_j$ are (topologically) linked. To enforce it for a family of tetrahedral codes for a fixed constant $k_0$, all the codes should share a single uniform local structure, including at facets and corners, as \emph{e.g.} the family described in~\cite{bombin:2015:gauge}.
It is an easy exercise to show that, under this assumption, for each cluster the connected graphs can be chosen so that
\begin{equation}
|\qset{\phi[k]}-\bigcup_{i\in I(i)} \qset{\phi_i}|\leq 
k_0\phi[k].
\end{equation}
A standard argument based on counting connected structures~\cite{dennis:2002:tqm, bombin:2015:single-shot} shows that a local distribution of syndromes $\phi$ yields a local distribution of errors in $P_Z$ if elements of $E(\phi)$ are chosen at random (or otherwise) for each $\phi$.

\section{Linking charge}\label{sec:linking}

As discussed in section~\ref{sec:support}, under a logical $T$ gate an $X$ error gives rise not only to propagated noise in the vicinity of its syndrome, but also to the terms $z(\phi_1,\phi_2)$. This section clarifies their physical meaning.

\subsection{Charge transfer}

\begin{figure}
\includegraphics[width=\columnwidth]{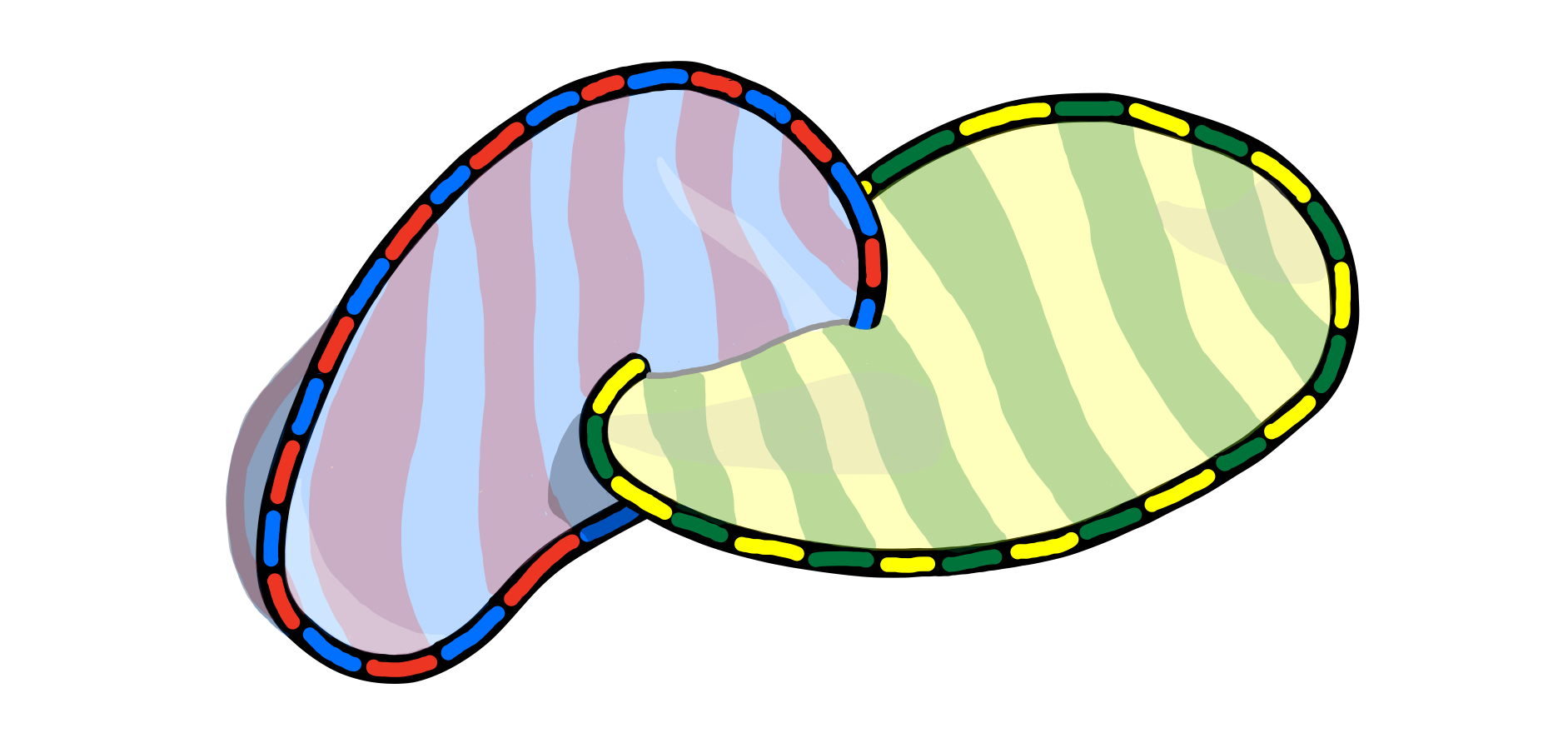}
\caption{A flux configuration consistent of two loops with linking number one, and the corresponding membranes. A grey line indicates the region where the membranes intersect.}
\label{fig:link}
\end{figure}

Let $\phi_1$ and $\phi_2$ be two loops carrying flux $\kappa_{11}\kappa_{12}$ and $\kappa_{21}\kappa_{22}$, respectively, and with linking number one, as depicted in figure~\ref{fig:link}.
Consider two disc-shaped surfaces $m_i$, $i=1,2$, such that
\begin{equation}
\partial X_{m_i,\kappa_{i1}\kappa_{i2}} = \phi_i.
\end{equation}
Specifically, the surfaces are such that they intersect only along a string-like region connecting the two loops, rather than in more complicated ways, see figure~\ref{fig:link}. 
Up to an element of $H(\phi_1+\phi_2)$ the operator $z(\phi_1,\phi_2)$ takes the form 
\begin{equation}
Z_{\alpha_1\cap\alpha_2},\qquad 
X_{\alpha_i} := X_{m_i,\kappa_{i1}\kappa_{i2}}.
\end{equation}
The operator $Z_{\alpha_1\cap\alpha_2}$ has support along the string-like intersection of $m_1$ and $m_2$. By lemma~\ref{lem:outside} its syndrome charges can only reside on cells containing faces of $\phi_1$ or $\phi_2$%
\footnote{
Every element of $E(\phi_1+\phi_2)$ is, up to a stabilizer, the product of an element of $E(\phi_1)$, an element of $E(\phi_2)$ and $z(\phi_1,\phi_2)$, see section~\ref{sec:support}.
},
which only overlap with the support of  $Z_{\alpha_1\cap\alpha_2}$ on its endpoint regions. Therefore $Z_{\alpha_1\cap\alpha_2}$ is a string-like operator. The charge that it transfers can be inferred using the strategy of~\ref{sec:duality}, which requires a third membrane $m_3$ pierced by the string, as in figure~\ref{fig:three_membranes}, and computing, for each color combination,
\begin{equation}\label{eq:intersection}
(Z_{\alpha_1\cap\alpha_2},X_{\alpha_3})=(-1)^{|\alpha_1\cap\alpha_2\cap\alpha_3|},
\end{equation}
where
\begin{equation}
X_{\alpha_3}:=X_{m_3, \kappa_{31}\kappa_{32}}.
\end{equation}

\begin{figure}
\centering
\includegraphics[width=.45\columnwidth]{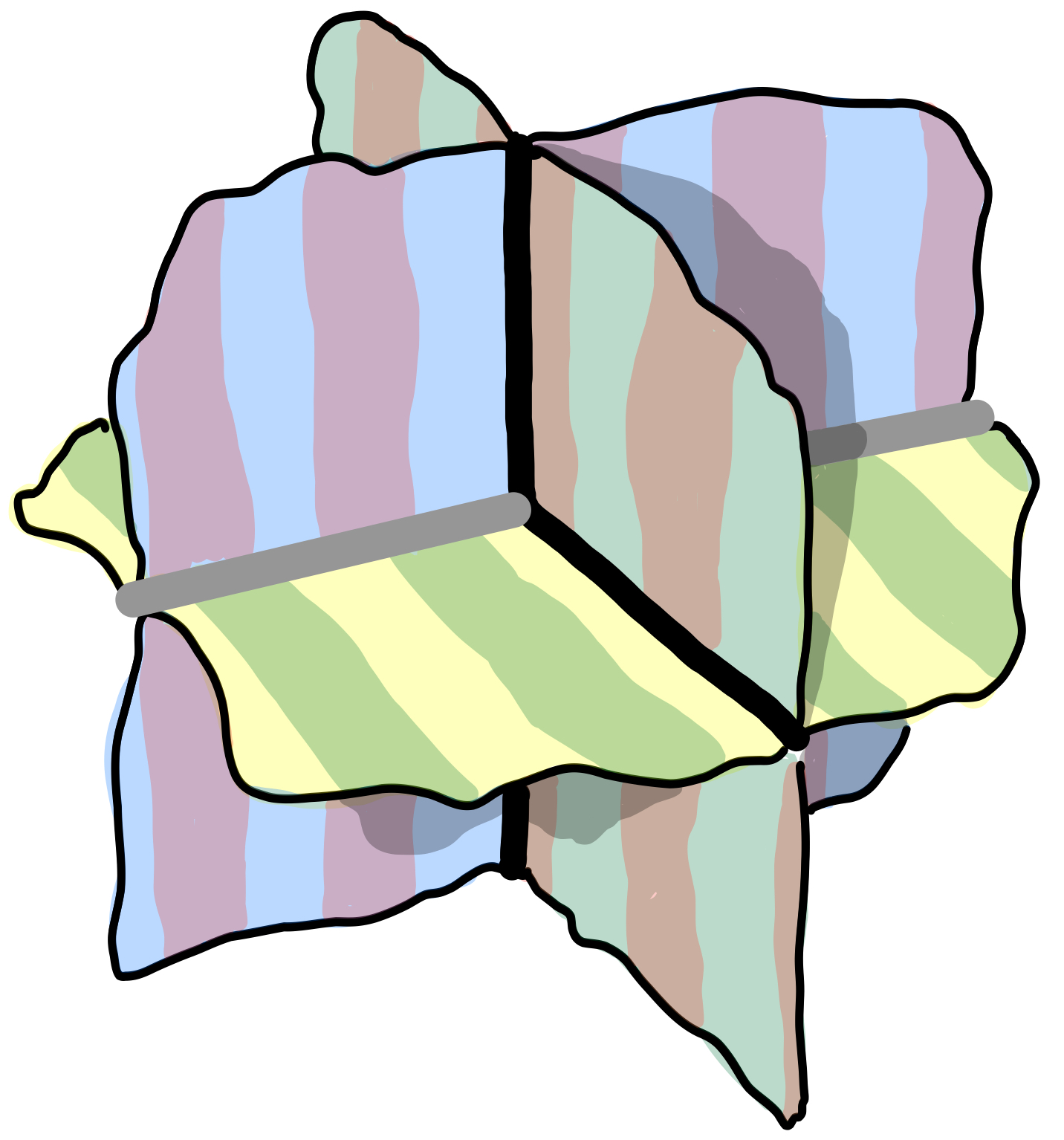}
\caption{The geometry of the three membranes used in the computation of~\eqref{eq:intersection}.}
\label{fig:three_membranes}
\end{figure}

A key aspect of \eqref{eq:intersection} is its topological nature, \emph{i.e.} the transferred charge only depends on the flux labels $\kappa_{11}\kappa_{12}$ and $\kappa_{21}\kappa_{22}$. This is so because (i) we can compute this charge locally anywhere along the intersection of the membranes, (ii) the charge has to be the same along any such string, and (iii) we can do the computation on arbitrary lattices that put together any two given local geometries.

\begin{success}
\center
$z(\phi_1,\phi_2)$ transfers a charge $\lambda(\kappa_{11}\kappa_{12},\kappa_{21}\kappa_{22})$ along the intersection.
\end{success}

\noindent The computation of \eqref{eq:intersection} can thus be perfomed for whatever lattice and membrane geometry is more convenient. The result is that the sign is negative iff the following conditions are satisfied
\begin{equation}
\kappa_{i1}\kappa_{i2}\neq\kappa_{j1}\kappa_{j2},
\qquad i\neq j.
\end{equation}
That is, the charge transferred along the intersection of the membranes is, with $\kappa_i$ all different,
\begin{align}
\lambda(\kappa_1\kappa_2,\kappa_1\kappa_2)&=0,
\\
\lambda(\kappa_1\kappa_2,\kappa_1\kappa_3)&=\kappa_4,
\\
\lambda(\kappa_1\kappa_2,\kappa_3\kappa_4)&=\kappa_1+\kappa_2.
\end{align}

\subsection{Net charge exchange}

Consider arbitrary flux configurations $\phi_i$, $i=1,2$. As long as they are not too cramped together, \emph{i.e.} from a renormalization perspective, they can always be decomposed into some loop-like configurations $\phi_{ij}$
\begin{equation}\label{eq:decompose}
\phi_i = \sum_j \phi_{ij},
\end{equation}
such that $\phi_{ij}$ and $\phi_{i'j'}$ are not connected for $i\neq i'$ and each carry an elementary flux unit, as in the previous section.
Given any $X_{\alpha_{ij}}$ with syndrome $\phi_{ij}$ and $\alpha_i := \sum_j \alpha_{ij}$, the effect of $z(\phi_1,\phi_2)$ can be computed from that of $z(\phi_{1j},\phi_{2k})$ noting that
\begin{equation}
\alpha_1\cap\alpha_2 = \sum_{jk} \alpha_{1j}\cap\alpha_{2k}.
\end{equation}
There is a unique way to extend the above values of $\lambda$ to a morphism
\begin{equation}
\lambda: \text{flux}\times\text{flux}\rightarrow\text{charge}.
\end{equation}
If two loop-like flux configurations carrying charges $h_1$ and $h_2$ have odd linking number they exchange a charge $\lambda(h_1, h_2)$, and if they have even linking number they do not exchange any charge%
\footnote{
The corresponding membranes intersect along a number of string-like regions, each transferring a charge $\lambda(h_1, h_2)$. The parity of the number of such strings connecting the two loops is the same as the parity of the linking number.
}. 

\begin{success}
The charge exchanged by $\phi_1$ and $\phi_2$, given the loop decomposition~\eqref{eq:decompose}, is
\begin{equation}
\sum_{(j,k)\in L} \lambda (h_{1j}, h_{2k})
\end{equation}
where $h_{ij}$ is the flux carried by $\phi_{ij}$, and $L$ is the set of pairs $(j,k)$ such that $\phi_{1j}$ and $\phi_{2j}$ have odd linking number. 
\end{success}

\noindent
By analogy with the linking number, we refer to this exchanged charge as the linking charge of $\phi_1$ and $\phi_2$. It is not a singular phenomenon of 3D color codes: appendix \ref{sec:linking_general} outlines a more general exploration of this topic.

\section {Discussion}

Transversal gates in color codes are a surprisingly rich subject. In addition to the conventional transversal gates that require access to all the qubits, we have introduced here transversal gates that can be performed on lower-dimensional subsets and are of great practical interest~\cite{bombin:2018:colorful}. In studying the propagation of errors, we have found that the transversal T gate defines both a natural set of correctable errors and a `linking charge' for flux excitations.

It would be interesting to characterize the linking charge phenomenon or, more broadly, to classify generalized transversal gates for three-dimensional topological order along the lines discussed in appendix~\ref{sec:linking_general}. It is likely that higher dimensional color codes offer similar insights into the physics of such transformations for higher dimensions.

A question that remains unanswered is the physical origin of the T gate in 3D color codes. If, as posited in appendix~\ref{sec:linking_general}, the T gate is characterized by its linking charge and its trivial action on charge and flux labels, then this must be enough to explain its logical action. The goal is to have a renormalized picture of the T gate based on its physics, in contrast with the microscopic picture, based on a combinatorics and offering no physical insight whatsoever. Eventually, this could be used to define similar gates in other topological codes.

\vspace{\baselineskip}
\noindent
{\bf Acknowledgements.}  I  would  like  to  thank  the  whole  PsiQuantum  fault  tolerance  team, Christopher  Dawson,  Fernando  Pastawski,  Kiran  Mathew,  Naomi Nickerson,  Nicolas  Breuckmann, Andrew Doherty, Jordan Sullivan, and Mihir  Pant for  their  considerable  support and  encouragement. In particular I would like to thank Terry Rudolph,  Nicolas Breuckmann,  Naomi Nickerson,  Fernando Pastawski,  Mercedes Gimeno-Segovia,  Peter  Shadbolt  and  Daniel  Dries  for  many  useful  discussions and/or  very  generous  feedback  at  various  stages  of  this  manuscript.

\appendix

\section{Notation}\label{sec:notation}

\begin{warning}
\begin{itemize}[leftmargin=\baselineskip]
\item
The symmetric difference of sets is denoted $+$.
\item
$\hat U$ is the operator $U\cdot U^\dagger$.
\item
$\partial a$ is the error syndrome of $a$.
\item
$X_q, Z_q$ are the $X$ and $Z$ Pauli operators at qubit $q$. 
\item
$X_\alpha$, $Z_\alpha$, with $\alpha$ (often implicitly) a set of qubits, are\begin{equation}
X_\alpha := \prod_{q\in\alpha} X_q,\qquad
Z_\alpha := \prod_{q\in\alpha} Z_q.
\end{equation}
\item
$P$ is \emph{the} Pauli group, and \emph{a} Pauli group is any of its subgroups.
\item
$P_X$, $P_Z$ are the Pauli groups generated by $X$ and $Z$ operators respectively. 
\item
$a|_x\propto a$ if $a=b\otimes c$ is a Pauli operator on a system $x\otimes y$.
\item
$(a,b)=\pm 1$ is the group commutator of the Pauli operators $a, b$.
\end{itemize}
\vspace{6pt}
\noindent Given sets $A, B$, of Pauli operators:
\begin{itemize}[leftmargin=\baselineskip]
\item
$A|_x$ contains the elements of $A$ restricted to the subsystem $x$, i.e. it is the set $\{a|_x \,|\, a \in A\}$. Notice that $A|_x=\langle i \rangle\, A|_x$.
\item
$A\|_x$ is the subset of elements of $A$ with support in the subsystem $x$, i.e. it is the set $\{a_x \,|\, a_x\otimes \mathbf 1_y \in A\}$.
\item
$\mathcal Z_A(B)$ is the subset of elements of $A$ that commute with the elements of $B$ (with $\mathcal Z_X$ standing for $\mathcal Z_{P_X}$).
\item
$\mathcal L(A)$ is the set of linear combinations of elements of $A$.
\item
$\mathcal D_A$ is the operator
 
\begin{equation}
\mathcal D_{A}:\rho\mapsto \frac 1 {|A|}\sum_{a\in A} a \rho a^\dagger.
\end{equation}

\end{itemize}
\end{warning}

\section{Proofs for section~\ref{sec:T}}\label{sec:Tproof}

The setting for this appendix is the same as in section~\ref{sec:setting}.

\begin{lem}\label{lem:A0}
Given any $X_\alpha\in \mathcal Z(S)$ and a logical $X_\lambda$,
\begin{equation}
H_\alpha = S_Z,\qquad G_\lambda = \mathcal Z_Z(S) = \langle Z_\lambda \rangle S_Z.
\end{equation}
\end{lem}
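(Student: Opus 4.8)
The plan is to establish the two assertions in turn, using the identity $H_\alpha=S_Z$ as a tool in the analysis of $G_\lambda$. Fix a logical $X_\lambda$ (which exists by the single-logical-qubit assumption). I will repeatedly invoke the standard stabilizer fact that, in a single-logical-qubit CSS code, a $Z$-undetectable operator $z\in\mathcal Z_Z(S)$ lies in $S_Z$ if and only if $(z,X_\lambda)=1$: the non-stabilizer elements of $\mathcal Z_Z(S)$ are exactly the logical $Z$ operators, and these all anticommute with $X_\lambda$. This fact uses only axioms (i) and (iii), not \eqref{axiom_code}.

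First I would prove $H_\alpha=S_Z$ for $X_\alpha\in\mathcal Z(S)$. Since $H_\alpha=S_Z\cdot\{Z_{\alpha\cap\beta}\mid X_\beta\in S\}$, it suffices to show $Z_{\alpha\cap\beta}\in S_Z$ whenever $X_\beta\in S$. By \eqref{axiom_code} we have $Z_{\alpha\cap\beta}\in\mathcal Z_Z(S)$ (as $X_\alpha,X_\beta\in\mathcal Z_X(S)$), so by the fact above it is enough to check $(Z_{\alpha\cap\beta},X_\lambda)=(-1)^{|\alpha\cap\beta\cap\lambda|}=1$. To pin down the parity I apply \eqref{axiom_code} once more to $Z_{\alpha\cap\lambda}\in\mathcal Z_Z(S)\subseteq\mathcal Z(S)$; since $X_\beta\in S$ commutes with the entire centralizer $\mathcal Z(S)$, this gives $(Z_{\alpha\cap\lambda},X_\beta)=(-1)^{|\alpha\cap\beta\cap\lambda|}=1$, i.e. $|\alpha\cap\beta\cap\lambda|$ is even. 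Hence $Z_{\alpha\cap\beta}\in S_Z$, and so $H_\alpha=S_Z$.

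For $G_\lambda$ I would classify each generator $Z_{\lambda\cap\beta}$ ($X_\beta\in\mathcal Z(S)$) according to whether $X_\beta$ is a stabilizer or a logical operator. If $X_\beta\in S$, then $Z_{\lambda\cap\beta}\in H_\lambda=S_Z$ by the first part. If $X_\beta$ is logical, the single-logical-qubit assumption gives $\beta=\lambda+s$ with $X_s\in S_X$, and a short symmetric-difference computation yields $\lambda\cap\beta=\lambda+(\lambda\cap s)$, so that $Z_{\lambda\cap\beta}=Z_\lambda\,Z_{\lambda\cap s}$; again $Z_{\lambda\cap s}\in H_\lambda=S_Z$ by the first part, whence $Z_{\lambda\cap\beta}\equiv Z_\lambda \pmod{S_Z}$. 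Taking $\beta=\lambda$ shows $Z_\lambda\in G_\lambda$, so that $G_\lambda=\langle Z_\lambda\rangle S_Z$.

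It remains to identify $\langle Z_\lambda\rangle S_Z$ with $\mathcal Z_Z(S)$. The inclusion $\langle Z_\lambda\rangle S_Z\subseteq\mathcal Z_Z(S)$ is immediate, since $Z_\lambda=Z_{\lambda\cap\lambda}\in\mathcal Z_Z(S)$ by \eqref{axiom_code}. For the reverse, I use \eqref{axiom_code} to write an arbitrary element of $\mathcal Z_Z(S)$ as $Z_{\alpha\cap\beta}$ with $X_\alpha,X_\beta\in\mathcal Z_X(S)$, and run the same case analysis: if either factor is a stabilizer the element lies in $S_Z$ by the first part, and if both are logical the same reduction collapses it to $Z_\lambda$ modulo $S_Z$. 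Thus $\mathcal Z_Z(S)\subseteq\langle Z_\lambda\rangle S_Z$, giving equality. Finally, the single-logical-qubit assumption forces $[\mathcal Z_Z(S):S_Z]=2$, so this equality forces $Z_\lambda\notin S_Z$; that is, $Z_\lambda$ is genuinely a logical $Z$ (equivalently $|\lambda|$ is odd) as a \emph{consequence} rather than an input. I expect the main obstacle to be the both-logical case: getting the symmetric-difference bookkeeping right and recognizing that it reduces, through repeated use of $H_\alpha=S_Z$, to the single residue $Z_\lambda$ modulo $S_Z$. It is worth noting that the transversal-gate assumption (ii) plays no role in this lemma.
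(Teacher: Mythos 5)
Your proof is correct and takes essentially the same route as the paper's: the heart of both arguments is the swap identity $(Z_{\alpha\cap\beta},X_\gamma)=(Z_{\alpha\cap\gamma},X_\beta)$ combined with axiom~\eqref{axiom_code} to force the triple-intersection parity to be even, yielding $H_\alpha=S_Z$, after which the identification $G_\lambda=\mathcal Z_Z(S)=\langle Z_\lambda\rangle S_Z$ is the routine case analysis the paper dismisses as ``easy to check.'' The only cosmetic difference is that you test commutation against the single representative $X_\lambda$ (using the index-two fact for one logical qubit, after first placing $Z_{\alpha\cap\beta}$ in $\mathcal Z_Z(S)$ via~\eqref{axiom_code}), whereas the paper tests against all of $\mathcal Z(S)$ and invokes the double-centralizer property.
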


\begin{proof}[\skproof] For any $X_\beta\in S$, $X_\gamma\in\mathcal Z(S)$
\begin{equation}
(Z_{\alpha\cap\beta}, X_\gamma)= (Z_{\alpha\cap\gamma},X_\beta)=1,
\end{equation}
where the second equality uses $\eqref{axiom_code}$. That is, $Z_{\alpha\cap\beta}\in S$ for any $X_\alpha\in\mathcal Z(S)$ and $X_\beta\in S$, and the above equalities are easy to check.
\end{proof}

\begin{proof}[\skproof\ of lemma~\ref{lem:invariance}]
The first two equations are a trivial consequence of lemma~\ref{lem:A0}. For the third, we need in addition the fact that for any $X_\alpha$, any $X_\gamma\in\mathcal Z(G_\alpha)$ and any $X_\beta\in S$
\begin{equation}
g(\gamma\cap \alpha) \equiv g(\gamma\cap (\alpha+\beta))\mod 4.
\end{equation}
To check this, notice that
\begin{align}
g(\gamma\cap (\alpha+\beta))-g(\gamma\cap \alpha) 
&= g(\gamma\cap\beta)-2g(\gamma\cap\alpha\cap\beta)
\nonumber\\
&\equiv 0
\mod 4,
\end{align}
where the second term vanishes modulo 4 because $(X_\gamma, Z_{\alpha\cap\beta})=1$, and the first by the invariance of the code space under $U$. In particular, since the states $|0\rangle$ and $X_\beta|0\rangle$ pick up the same phase, and the same is true for $X_\gamma|0\rangle$ and $X_{\gamma+\beta}|0\rangle$ because $X_\gamma\in\mathcal Z(S)$, we have
\begin{align}
g(\beta)&\equiv 0\mod 8, \\
g(\beta)-2g(\gamma\cap\beta)&\equiv 0\mod 8. \qedhere
\end{align}
\end{proof}

\begin{lem}\label{lem:A1}
For any $X_\alpha$ and any $X_\beta\in \mathcal Z(S)$,
\begin{align}
Z_{\alpha\cap\beta}\in S&\iff X_\beta\in \mathcal Z(G_\alpha),
\\
Z_{\alpha\cap\beta}\in \mathcal Z(S)&\iff X_\beta\in \mathcal Z(H_\alpha).
\end{align}
\end{lem}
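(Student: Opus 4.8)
The plan is to unfold the condition ``$X_\beta$ commutes with every element of $G_\alpha$'' (resp.\ $H_\alpha$) into a family of scalar commutation relations, and then reinterpret that family as a single membership condition on $Z_{\alpha\cap\beta}$. The only algebraic ingredient needed is the symmetry
\begin{equation}
(X_\beta, Z_{\alpha\cap\gamma}) = (X_\gamma, Z_{\alpha\cap\beta}) = (-1)^{|\alpha\cap\beta\cap\gamma|},
\end{equation}
which is exactly the identity already used in the proof of lemma~\ref{lem:A0}.

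First I would handle the $G_\alpha$ equivalence. Since $X_\beta\in\mathcal Z(S)$ it commutes with $S_Z$, and $S_Z\subseteq G_\alpha$, so $X_\beta\in\mathcal Z(G_\alpha)$ reduces to $(X_\beta, Z_{\alpha\cap\gamma})=1$ for every $X_\gamma\in\mathcal Z(S)$. By the symmetry above this says $(X_\gamma, Z_{\alpha\cap\beta})=1$ for all $X_\gamma\in\mathcal Z_X(S)$, i.e.\ $Z_{\alpha\cap\beta}\in\mathcal Z_Z(\mathcal Z_X(S))$. I would then establish $\mathcal Z_Z(\mathcal Z_X(S))=S_Z$: because $S$ is CSS with a single logical qubit, one can take logical representatives $X_\lambda, Z_\lambda$ of pure type with $\mathcal Z_X(S)=S_X\langle X_\lambda\rangle$ and $\mathcal Z_Z(S)=S_Z\langle Z_\lambda\rangle$; a $Z$-type operator commuting with $S_X$ lies in $\mathcal Z_Z(S)$, and of these only the elements of $S_Z$ also commute with $X_\lambda$, since $Z_\lambda$ anticommutes with $X_\lambda$. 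As $Z_{\alpha\cap\beta}\in P_Z$, membership in $S_Z$ is the same as membership in $S$, which gives the first line.

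The $H_\alpha$ equivalence is the identical argument with $S_X$ in place of $\mathcal Z_X(S)$: $X_\beta\in\mathcal Z(H_\alpha)$ becomes $(X_\gamma, Z_{\alpha\cap\beta})=1$ for all $X_\gamma\in S_X$, i.e.\ $Z_{\alpha\cap\beta}$ commutes with $S_X$; a $Z$-type operator commutes automatically with $S_Z$, so this is precisely $Z_{\alpha\cap\beta}\in\mathcal Z_Z(S)$, equivalently $Z_{\alpha\cap\beta}\in\mathcal Z(S)$ because $Z_{\alpha\cap\beta}\in P_Z$. The only step with genuine content is the double-centralizer identity $\mathcal Z_Z(\mathcal Z_X(S))=S_Z$, which is where the CSS and single-logical-qubit hypotheses enter; everything else is bookkeeping with the commutator symmetry, so I expect no real obstacle.
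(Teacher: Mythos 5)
Your proposal is correct and is essentially the paper's own argument read in the reverse direction: the paper's sketch is the same chain of equivalences, unfolding $Z_{\alpha\cap\beta}\in S$ (resp.\ $\in\mathcal Z(S)$) into commutation with all $X_\gamma\in\mathcal Z(S)$ (resp.\ $S$), applying the symmetry $(Z_{\alpha\cap\beta},X_\gamma)=(Z_{\alpha\cap\gamma},X_\beta)$, and reassembling the result as $X_\beta\in\mathcal Z(G_\alpha)$ (resp.\ $\mathcal Z(H_\alpha)$). The only difference is that you spell out the double-centralizer identity $\mathcal Z_Z(\mathcal Z_X(S))=S_Z$, which the paper leaves implicit in its first equivalence.
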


\begin{proof}[\skproof] For the first relation
\begin{align}
Z_{\alpha\cap\beta}\in S
&\iff
\left(
\forall X_\gamma\in\mathcal Z(S) \quad (Z_{\alpha\cap\beta},X_\gamma)=1
\right)
\nonumber\\
&\iff
\left(
\forall X_\gamma\in\mathcal Z(S) \quad (Z_{\alpha\cap\gamma},X_\beta)=1
\right)
\nonumber\\
&\iff 
X_\beta\in \mathcal Z(G_\alpha),
\end{align}
and the second is analogous exchanging $S$ and $\mathcal Z(S)$.
\end{proof}

\begin{lem}\label{lem:A2}
For any $X_\alpha$, the group $H_\alpha$ contains no logical operators.
\end{lem}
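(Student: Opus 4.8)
The claim is that $H_\alpha := S_Z\cdot\{Z_{\alpha\cap\beta}\,|\,X_\beta\in S\}$ contains no logical operators, i.e. $H_\alpha\cap(\mathcal Z(S)-S)=\varnothing$. My plan is to argue by contradiction: suppose some $z\in H_\alpha$ is a logical operator. Since $z\in P_Z$ and the code has a single logical qubit (axiom~3), $z\equiv Z_\lambda \pmod{S_Z}$ for a chosen logical $Z_\lambda = Z_{\alpha'\cap\lambda'}$-type operator; concretely, using Lemma~\ref{lem:A0}, $\mathcal Z_Z(S)=\langle Z_\lambda\rangle S_Z$, so a $Z$-type logical operator is exactly a nontrivial coset representative of $\mathcal Z_Z(S)/S_Z$. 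Writing $z = s_Z \cdot Z_{\alpha\cap\beta}$ with $s_Z\in S_Z$ and $X_\beta\in S$, the logical-operator assumption forces $Z_{\alpha\cap\beta}$ to itself be a $Z$-logical operator modulo $S_Z$.

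The key step is then to show $Z_{\alpha\cap\beta}\in S$ whenever $X_\beta\in S$, which immediately contradicts $Z_{\alpha\cap\beta}$ being logical. This is precisely the content already extracted in the sketch of Lemma~\ref{lem:A0}: for $X_\beta\in S$ and \emph{any} $X_\gamma\in\mathcal Z(S)$ one has $(Z_{\alpha\cap\beta},X_\gamma)=(Z_{\alpha\cap\gamma},X_\beta)=1$, where the first equality is the elementary identity $|\alpha\cap\beta\cap\gamma|$-counting and the second uses that $X_\beta$ is a stabilizer. Hence $Z_{\alpha\cap\beta}$ commutes with all of $\mathcal Z_X(S)$, and since it is a $Z$-operator it commutes with all of $\mathcal Z_Z(S)$ trivially, so $Z_{\alpha\cap\beta}\in\mathcal Z(S)$; combined with the commutation against the $X$-logical $X_\lambda$ being trivial, $Z_{\alpha\cap\beta}$ is not logical, so $Z_{\alpha\cap\beta}\in S$. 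Therefore every generator $Z_{\alpha\cap\beta}$ of $H_\alpha$ lies in $S$, so $H_\alpha\subseteq S$, and in particular $H_\alpha$ contains no logical operators.

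The only subtlety — and the step I expect to be the mildest obstacle — is making the bookkeeping of ``$Z_{\alpha\cap\beta}\in S$ vs.\ $\in\mathcal Z(S)-S$'' airtight: one must be careful that $\alpha$ here is \emph{arbitrary} (not assumed in $\mathcal Z(S)$), so $Z_{\alpha\cap\beta}$ need not a priori be undetectable. But the commutator identity $(Z_{\alpha\cap\beta},X_\gamma)=(Z_{\alpha\cap\gamma},X_\beta)$ holds for all sets $\alpha,\beta,\gamma$ (it is just $(-1)^{|\alpha\cap\beta\cap\gamma|}$ both ways), and with $X_\beta\in S$ the right side is $1$ for every $X_\gamma\in\mathcal Z(S)$; this is exactly what is needed and does not require $X_\alpha\in\mathcal Z(S)$. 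So in fact the argument shows the stronger statement $H_\alpha\subseteq S_Z\subseteq S$ for every $X_\alpha$, from which the lemma is immediate. I would present it in that streamlined form, citing Lemma~\ref{lem:A0} for the identity $Z_{\alpha\cap\beta}\in S$.
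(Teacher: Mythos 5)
Your first paragraph (reducing a general element $sZ_{\alpha\cap\beta}$ of $H_\alpha$ to the generator $Z_{\alpha\cap\beta}$ being logical) is fine, but the step you flag as ``the mildest obstacle'' is in fact where the argument fails. The claim ``$Z_{\alpha\cap\beta}\in S$ whenever $X_\beta\in S$'' is precisely the first assertion of Lemma~\ref{lem:A0}, and there it carries the hypothesis $X_\alpha\in\mathcal Z(S)$, which cannot be dropped. The identity $(Z_{\alpha\cap\beta},X_\gamma)=(Z_{\alpha\cap\gamma},X_\beta)=(-1)^{|\alpha\cap\beta\cap\gamma|}$ does hold for arbitrary sets, but neither side is automatically $1$: in the paper's sketch the value $1$ comes from axiom~\eqref{axiom_code}, which places $Z_{\alpha\cap\gamma}$ in $\mathcal Z_Z(S)$ only when \emph{both} $X_\alpha$ and $X_\gamma$ are undetectable, so that it commutes with the stabilizer $X_\beta$. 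For arbitrary $\alpha$ this fails, and your ``stronger statement'' $H_\alpha\subseteq S_Z$ is simply false: in a tetrahedral code take $\alpha=\{q\}$ for an interior qubit $q$ and $\beta=c$ a cell containing $q$; then $Z_{\alpha\cap\beta}=Z_q$ anticommutes with $X_c$, so it is a detectable error lying in $H_\alpha$ but not in $S$. (To take $\gamma=\beta$ in your identity: $(Z_{\alpha\cap\beta},X_\beta)=(-1)^{|\alpha\cap\beta|}=-1$ here.) The lemma is genuinely weaker than what you set out to prove: $H_\alpha$ may contain detectable non-stabilizer elements, and only \emph{logical} operators are excluded. Indeed the later ``check operator erasure'' discussion relies on $H(\phi)$ failing to commute with some check operators.

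A correct proof must therefore respect the trichotomy stabilizer / logical / detectable instead of collapsing it. The paper's route: suppose $Z_{\alpha\cap\beta}$ is logical for some $X_\beta\in S$ and set $\gamma:=\alpha\cap\beta$. Lemma~\ref{lem:A0} is now applied not to the arbitrary $\alpha$ but to a logical $X_\lambda$ (for which its hypotheses hold), giving $Z_\gamma\in\mathcal Z_Z(S)=G_\lambda=\langle Z_\lambda\rangle S_Z$ and hence $Z_\gamma Z_\lambda\in S_Z$; combining this with Lemma~\ref{lem:invariance} and Lemma~\ref{lem:A0} yields $H_\gamma=H_\lambda=S_Z$, which contradicts $Z_\gamma=Z_{\gamma\cap\beta}\in H_\gamma$ since $Z_\gamma$ was assumed logical. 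The essential idea you are missing is this transfer of the problem from $\alpha$ to the set $\gamma=\alpha\cap\beta$, where the hypotheses of Lemma~\ref{lem:A0} can legitimately be brought to bear.
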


\begin{proof} Suppose that $Z_{\alpha\cap\beta}$ is logical for some $X_\beta\in S$. Let $\gamma:=\alpha\cap\beta$. Then $Z_\gamma$ is a logical operator and, by lemma~\ref{lem:A0}, $Z_\gamma\in G_\lambda$ for any logical $X_\lambda$. Since\begin{equation}
Z_{\lambda}=Z_{\lambda\cap\lambda}\in G_\lambda
\end{equation}
we have \begin{equation}
Z_\gamma Z_\lambda \in S_Z
\end{equation}
which gives, by lemma~\ref{lem:invariance} and lemma~\ref{lem:A0}, \begin{equation}
H_\gamma = H_\lambda = S_Z,
\end{equation}
which contradicts\begin{equation}
Z_\gamma = Z_{\gamma\cap\beta}\in H_\gamma.\qedhere
\end{equation}
\end{proof}

\begin{proof}[\skproof\ of lemma~\ref{lem:tolerability}]
{\bf (i)} This follows from lemma~\ref{lem:A1}, noticing for the 'only if' that 
 
\begin{equation}
\mathcal Z(G_\alpha), \mathcal Z(H_\alpha)\subseteq\mathcal Z(S_Z).
\end{equation}{\bf (ii)} The 'if' direction follows from (i), since\begin{equation}
G_\alpha = H_\alpha \langle Z_{\alpha\cap\lambda}\rangle=H_\alpha.
\end{equation}
For the 'only if' direction, choose any logical $X_{\lambda'}$. By (ii)\begin{equation}
H_\alpha=G_\alpha = H_\alpha \langle Z_{\alpha\cap\lambda'}\rangle,
\end{equation}
and thus $Z_{\alpha\cap\lambda'}\in H_\alpha$. Therefore there exists $X_\beta\in S$ such that\begin{equation}
Z_{\alpha\cap\lambda'}Z_{\alpha\cap\beta}\in S_Z,\qquad
\end{equation}
and we can take $\lambda = \lambda'+\beta$. 
{\bf (iii)} This follows from lemma~\ref{lem:A1} and (ii).
{\bf (iv)} By lemma~\ref{lem:invariance} and (i), it is enough to check this for a single logical $X_\lambda$. Notice first that\begin{equation}
G_{\alpha+\lambda} = H_{\alpha+\lambda} \langle Z_{(\alpha+\lambda)\cap\lambda}\rangle
= H_{\alpha} \langle Z_{\alpha\cap\lambda}Z_\lambda\rangle.
\end{equation}
where the second equality is by lemma~\ref{lem:invariance}.
If $X_\alpha$ is tolerable, choosing $\lambda$ as in (ii) gives $Z_\lambda\in G_{\alpha+\lambda}$. By lemma~\ref{lem:A0} $Z_\lambda$ is logical and, by lemma~\ref{lem:A2} and (i), $X_{\alpha+\lambda}$ is not tolerable. 
If $X_\alpha$ is not tolerable, by the definition of tolerability and by lemma~\ref{lem:A2}, we can choose $\lambda$ so that $Z_{\alpha\cap\lambda}$ is logical. Then $Z_{\alpha\cap\lambda}$ is equivalent to $Z_\lambda$, and thus $G_{\alpha+\lambda}=H_\alpha$ and $X_{\alpha+\lambda}$ is tolerable by (i).
{\bf (v)} This is a consequence of (iv), because $X_\beta = X_{\beta+\lambda}X_\lambda.$
\end{proof}

\begin{warning}

For any $X_\alpha$ we define the unitary operator\begin{equation}
A_\alpha := \prod_q T_q^{2 b_q},
\end{equation}
where $T_q$ applies the $T$ gate to the $q$-th qubit. The projector onto the subspace $S_Z$ is\begin{equation}
P_0 := \frac 1 {|S_Z|}\sum_{s\in S_Z} s.
\end{equation} 

\end{warning}

\begin{lem}\label{lem:A3}

For any $X_\alpha$\begin{equation}
A^\dagger_\alpha P_0 \in \mathcal L(E_\alpha).
\end{equation}
\end{lem}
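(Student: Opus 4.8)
The plan is to compute the action of $A_\alpha^\dagger P_0$ on a basis of the code space (or rather on the subspace $S_Z$) and recognize the result as a uniform superposition over the coset $E_\alpha$. First I would unpack definitions: $A_\alpha^\dagger = \prod_q T_q^{-2b_q}$ is diagonal in the computational basis, acting on a basis vector $|v\rangle$ (with $v$ a bit string, viewed as a set of qubits) by a phase $i^{-g(v\cap\alpha)}$, since $T^2 = \mathrm{diag}(1,i)$ contributes a factor $i$ for each qubit in $\alpha$ that is set in $v$, with multiplicity $b_q$. Meanwhile $P_0$ projects onto the $+1$ eigenspace of $S_Z$; since $S_Z\subseteq P_Z$ is diagonal, $P_0$ is block-diagonal and $P_0|v\rangle$ is (up to normalization) the uniform superposition over the $S_Z$-orbit of $|v\rangle$ under the action of $S_X$-type flips — more precisely, writing $\mathcal Z_X(S_Z)$ for the $X$-operators commuting with $S_Z$, the range of $P_0$ is spanned by states $\sum_{X_\beta \in S_X} X_\beta |v\rangle$ up to phases. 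The cleanest route is: the range of $P_0$ is the span of $|\psi_v\rangle := \sum_{s\in S_Z} s|v\rangle / |S_Z|$, one for each coset of qubit-sets modulo the ``check'' structure; equivalently $P_0$ is a sum of rank-one projectors onto states indexed by $S_Z$-syndromes.

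The key step is to show $A_\alpha^\dagger P_0 A_\alpha^{-\dagger}$... no — rather, to show that $A_\alpha^\dagger P_0$, as an element of the group algebra of $P_Z$ (it is a linear combination of $Z$-type Paulis, since it is diagonal and $P_0\in\mathcal L(S_Z)\subseteq\mathcal L(P_Z)$, while $A_\alpha^\dagger$ diagonal means $A_\alpha^\dagger P_0 \in \mathcal L(P_Z)$), lies in $\mathcal L(E_\alpha)$. Here $E_\alpha$ is a coset of $G_\alpha$ in $P_Z$ characterized by prescribed commutators with every $X_\gamma\in\mathcal Z(G_\alpha)$, namely $(z,X_\gamma) = (-1)^{g(\gamma\cap\alpha)/2}$. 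So the plan is: (1) observe $A_\alpha^\dagger P_0$ is a linear combination of elements of $P_Z$; (2) compute its conjugation-commutator with an arbitrary $X_\gamma$ for $X_\gamma\in\mathcal Z(G_\alpha)$ and show it equals $(-1)^{g(\gamma\cap\alpha)/2}$ times itself, i.e. $X_\gamma (A_\alpha^\dagger P_0) X_\gamma = (-1)^{g(\gamma\cap\alpha)/2} A_\alpha^\dagger P_0$; and (3) observe that any $z\in\mathcal L(P_Z)$ satisfying $X_\gamma z X_\gamma = \epsilon_\gamma z$ for a fixed character $\epsilon$ on $\mathcal Z(G_\alpha)$, and which moreover is invariant (up to nothing) under right-multiplication by $G_\alpha$ — here using that $P_0 S_Z = P_0$ and that the $Z_{\alpha\cap\beta}$ generators of $G_\alpha$ also fix $A_\alpha^\dagger P_0$, which needs checking — must be supported on a single coset, and that coset is $E_\alpha$. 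Point (2) is the computational heart: conjugating $A_\alpha^\dagger$ by $X_\gamma$ shifts $v\mapsto v+\gamma$ in the phase, producing $i^{-g((v+\gamma)\cap\alpha)+g(v\cap\alpha)} = i^{-g(\gamma\cap\alpha)+2g(v\cap\gamma\cap\alpha)}$; the term $2g(v\cap\gamma\cap\alpha) \bmod 4$ is controlled by whether $X_\gamma$ commutes with $Z_{\alpha\cap\beta}$-type operators (this is exactly the congruence already established in the proof of lemma~\ref{lem:invariance}, and is why $X_\gamma\in\mathcal Z(G_\alpha)$ is the right hypothesis), leaving the uniform phase $i^{-g(\gamma\cap\alpha)} = (-1)^{-g(\gamma\cap\alpha)/2}$, using $g(\gamma\cap\alpha)$ even since $X_\gamma\in\mathcal Z(G_\alpha)\subseteq\mathcal Z(S_Z)$ combined with $b_q$ odd forces $|\gamma\cap\alpha|$ even.

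I expect the main obstacle to be step (3): precisely pinning down that the double condition — fixed commutators with $\mathcal Z(G_\alpha)$ plus invariance under right multiplication by $G_\alpha$ — singles out exactly the coset $E_\alpha$ and nothing smaller, so that $A_\alpha^\dagger P_0$ (which we will have shown is nonzero and lies in $\mathcal L$ of that coset) is genuinely in $\mathcal L(E_\alpha)$ rather than in $\mathcal L$ of some sub-coset or with unwanted cross terms. This is essentially a statement about the Pauli group modulo its center: $\mathcal L(P_Z)$ decomposes into joint ``eigenspaces'' under conjugation by $P_X$, each one-dimensional and spanned by a single $Z$-Pauli; imposing the character $\epsilon_\gamma$ only on the subgroup $\mathcal Z(G_\alpha)$ of $X$-operators picks out a coset of the annihilator, which one checks is $G_\alpha$ by lemma~\ref{lem:A1}, and the $G_\alpha$-invariance then forces the coefficients on that coset to be constant. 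The remaining ingredients — that $A_\alpha^\dagger P_0$ is fixed on the right by each generator $Z_{\alpha\cap\beta}$ of $G_\alpha$ with $X_\beta\in\mathcal Z(S)$, i.e. that $A_\alpha^\dagger P_0 Z_{\alpha\cap\beta} = A_\alpha^\dagger P_0$ — should follow because $Z_{\alpha\cap\beta}$ differs from an element of $S_Z$ by the phase $i$-bookkeeping already controlled in lemma~\ref{lem:invariance}, together with $P_0 S_Z = P_0$; but verifying this cleanly for all generators (including that the resulting phases are trivial, not merely uniform) is where care is needed.
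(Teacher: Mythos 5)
Your steps (1) and (2) are essentially the paper's proof: the paper likewise establishes $X_\gamma A_\alpha^\dagger X_\gamma P_0 = \sigma(X_\gamma)\,A_\alpha^\dagger P_0$ with $\sigma(X_\gamma)=i^{g(\gamma\cap\alpha)}$ for $X_\gamma\in\mathcal Z_X(G_\alpha)$, the residual $Z_{\gamma\cap\alpha}$ being absorbed by $P_0$ because Lemma~\ref{lem:A1} puts it in $S$ exactly when $X_\gamma\in\mathcal Z(G_\alpha)$. (The paper then phrases the coset identification via syndrome projectors $P_\nu$ of $\mathcal Z_X(G_\alpha)$ and a reference element $z\in E_\alpha$, where you decompose directly into $Z$-Paulis; these are the same argument.)

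The genuine problem is the second half of your step (3) and your closing paragraph. The right-invariance $A_\alpha^\dagger P_0\, Z_{\alpha\cap\beta}=A_\alpha^\dagger P_0$ for \emph{all} generators of $G_\alpha$ is false: $P_0 Z_{\alpha\cap\beta}=P_0$ only when $Z_{\alpha\cap\beta}\in S_Z$, which by Lemma~\ref{lem:A1} happens precisely for $X_\beta\in\mathcal Z(G_\alpha)$, not for general $X_\beta\in\mathcal Z(S)$. If you pursue that sub-goal you will get stuck. Fortunately it is also unnecessary: $\mathcal L(P_Z)$ splits into one-dimensional conjugation eigenspaces spanned by the individual $Z$-Paulis, so the character condition from step (2) alone forces every $Z_\mu$ appearing with nonzero coefficient in the diagonal operator $A_\alpha^\dagger P_0$ to satisfy $(Z_\mu,X_\gamma)=(-1)^{g(\gamma\cap\alpha)/2}$ for all $X_\gamma\in\mathcal Z_X(G_\alpha)$, i.e.\ to lie in $E_\alpha$ by definition; that already gives $A_\alpha^\dagger P_0\in\mathcal L(E_\alpha)$, which is all the lemma claims. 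Correspondingly, your expectation that the coefficients are constant on the coset (``a uniform superposition over $E_\alpha$'') is not true and not needed: the paper's Theorem~\ref{thm:T} writes $A_\alpha^\dagger P_0=zKP_0$ with $K$ a \emph{general} element of $\mathcal L(G_\alpha)$, and uniformity only emerges after the $\mathcal D_{S_X}$ twirl via Lemma~\ref{lem:A5}. Drop the right-invariance requirement and your argument closes.
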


\begin{proof} Given a character (a syndrome) $\nu$ of $\mathcal Z_X(G_\alpha)$ let $P_\nu$ be the projector onto the corresponding syndrome subspace, i.e.\begin{equation}
P_\nu := \frac 1 {|\mathcal Z_X(G_\alpha)|} \sum_{x\in \mathcal Z_X(G_\alpha)} \nu(x)\,x.
\end{equation}
Consider the function\begin{equation}
\sigma(X_\beta):= i^{g(\alpha\cap\beta)},\qquad X_\beta\in \mathcal Z(G_\alpha).
\end{equation}
For any $X_\beta\in \mathcal Z_X(G_\alpha)$\begin{equation}
X_\beta A^\dagger_\alpha X_\beta P_0 = \sigma(X_\beta) Z_{\alpha\cap\beta} A^\dagger_\alpha P_0 =
\sigma(X_\beta) A_\alpha^\dagger P_0,
\end{equation}
where the second equality is by lemma~\ref{lem:A1}. Setting\begin{equation}
a_{\mu,\nu}:=P_\mu A_\alpha^\dagger P_\nu P_0 
\end{equation}
for any $x\in \mathcal Z_X(G_\alpha)$ we get\begin{equation}
a_{\mu,\nu} = \mu(x)\nu(x)P_\mu x A^\dagger_\alpha x P_\nu P_0= 
\sigma(x)\mu(x)\nu(x) a_{\mu,\nu}
\end{equation}
where we have used that $P_0$ and $P_\nu$ commute (because $\mathcal Z_X(G_\alpha)\subseteq \mathcal Z(S_Z)$). Therefore $a_{\mu,\nu}$ is zero unless $\sigma$ is the caracter\begin{equation}
\sigma = \mu\nu.
\end{equation}
Inserting the identity twice we have\begin{equation}
A^\dagger_\alpha P_0 = \sum_{\mu,\nu} a_{\mu,\nu}=\sum_\nu a_{\nu\sigma,\nu}.
\end{equation}
For any $z\in E_\alpha$\begin{equation}
zA^\dagger_\alpha P_0 = \sum_{\nu} z P_{\sigma\nu} A^\dagger_\alpha P_0 P_{\nu} = \sum_{\nu} P_{\nu} z A^\dagger_\alpha P_0 P_{\nu}
\end{equation}
and thus\begin{equation}
zA^\dagger_\alpha P_0 \in \mathcal L(\mathcal Z_Z(\mathcal Z_X(G_\alpha))) = \mathcal L(G_\alpha).
\end{equation}
Finally, since $z^2= \mathbf 1$, using coset notation we have\begin{equation}
A^\dagger_\alpha P_0 \in \mathcal L(zG_\alpha) = \mathcal L(E_\alpha).\qedhere
\end{equation}
\end{proof}

\begin{lem}\label{lem:A4}

Let $\rho$ be an encoded state.
\begin{enumerate}[label=\roman*)]
\item
For any $X_\alpha$ and any $X_\beta\in S\cap \mathcal Z(G_\alpha)$\begin{equation}
\langle A_\alpha X_\beta A_\alpha^\dagger\rangle_\rho
= (-1)^{g(\alpha\cap \beta)/2}.
\end{equation}
\item
For any tolerable $X_\alpha$ and any $X_\beta\in S- \mathcal Z(G_\alpha)$ \begin{equation}
\langle A_\alpha X_\beta A_\alpha^\dagger\rangle_\rho
= 0.
\end{equation}
\end{enumerate}
\end{lem}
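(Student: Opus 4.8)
The plan is to compute the expectation value $\langle A_\alpha X_\beta A_\alpha^\dagger\rangle_\rho$ directly, using the explicit form of $A_\alpha = \prod_q T_q^{2b_q}$ and the fact that on a qubit $T^2 = S$ (the phase gate), so that $S X S^\dagger = Y = iXZ$ (up to the obvious powers). Conjugating $X_\beta$ by $A_\alpha$ therefore produces $A_\alpha X_\beta A_\alpha^\dagger = i^{g(\alpha\cap\beta)} Z_{\alpha\cap\beta} X_\beta$, where the phase $i^{g(\alpha\cap\beta)}$ collects one factor of $i$ for each qubit in $\alpha\cap\beta$ weighted by $b_q$ (and $b_q$ odd means only the parity matters, giving $g(\alpha\cap\beta)=\sum_{q\in\alpha}b_q$ restricted to $\beta$). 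This is essentially the computation already appearing in the proof of lemma~\ref{lem:A3} via the function $\sigma(X_\beta)=i^{g(\alpha\cap\beta)}$. So the first step is to establish the operator identity
\begin{equation}
A_\alpha X_\beta A_\alpha^\dagger = i^{g(\alpha\cap\beta)}\,Z_{\alpha\cap\beta}\,X_\beta .
\end{equation}

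For part (i), assume $X_\beta\in S\cap\mathcal Z(G_\alpha)$. By lemma~\ref{lem:A1}, $X_\beta\in\mathcal Z(G_\alpha)$ forces $Z_{\alpha\cap\beta}\in S$; combined with $X_\beta\in S$ this means $Z_{\alpha\cap\beta}X_\beta\in S$, so its expectation on the encoded state $\rho$ is $1$. Hence $\langle A_\alpha X_\beta A_\alpha^\dagger\rangle_\rho = i^{g(\alpha\cap\beta)}$. It remains to see that $i^{g(\alpha\cap\beta)}$ is real, i.e. equals $(-1)^{g(\alpha\cap\beta)/2}$, which is exactly the statement that $g(\alpha\cap\beta)\equiv 0 \bmod 2$ — indeed $g(\alpha\cap\beta)$ must be even for the left-hand side to be Hermitian (since $X_\beta$ and $A_\alpha X_\beta A_\alpha^\dagger$ are both Hermitian), but more concretely this parity statement is already contained in the modular arithmetic used in the proof of lemma~\ref{lem:invariance} (the relations $g(\beta)\equiv 0\bmod 8$ and $g(\beta)-2g(\gamma\cap\beta)\equiv 0 \bmod 8$), so I would invoke that.

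For part (ii), assume $X_\alpha$ tolerable and $X_\beta\in S - \mathcal Z(G_\alpha)$. Now $Z_{\alpha\cap\beta}X_\beta$ anticommutes with some check in $S_Z$ (since $X_\beta\notin\mathcal Z(G_\alpha)$ and $G_\alpha\supseteq S_Z$; more precisely $Z_{\alpha\cap\beta}\notin S$ by lemma~\ref{lem:A1}, and $Z_{\alpha\cap\beta}X_\beta$, being $X_\beta$ up to a $Z$-type factor, has a syndrome detected by some $S_Z$ generator). The only way this product could have nonzero expectation on an encoded state is if it were a logical operator — but then $Z_{\alpha\cap\beta}X_\beta$ would be a logical operator in the $X_\beta$-coset, and since $X_\beta\in S$, the $Z$-part $Z_{\alpha\cap\beta}$ would itself have to be logical, i.e. $Z_{\alpha\cap\beta}\in G_\alpha$ would be a logical operator, contradicting tolerability of $X_\alpha$. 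Hence $\langle A_\alpha X_\beta A_\alpha^\dagger\rangle_\rho = i^{g(\alpha\cap\beta)}\langle Z_{\alpha\cap\beta}X_\beta\rangle_\rho = 0$.

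The main obstacle is the bookkeeping of the phase: one must be careful that $g(\alpha\cap\beta)$, which a priori is an integer defined mod nothing, is in fact even in case (i) (so that $i^{g(\alpha\cap\beta)}=\pm 1$) and that in case (ii) the vanishing argument does not secretly require this evenness. The cleanest route is to first dispatch the Hermiticity observation — $A_\alpha X_\beta A_\alpha^\dagger$ is self-adjoint, so its expectation is real, which already pins the phase to $\pm 1$ and reduces (i) to identifying the sign as $(-1)^{g(\alpha\cap\beta)/2}$ via the explicit $T^2$-conjugation. After that, both parts are short: (i) is "the operator is a stabilizer times a known phase" and (ii) is "the operator would have to be logical, contradicting tolerability." I would present the operator identity once, then handle (i) and (ii) as two short paragraphs, citing lemma~\ref{lem:A1} for the $Z_{\alpha\cap\beta}\in S$ equivalence and lemma~\ref{lem:A2} together with the definition of tolerability for the contradiction in (ii).
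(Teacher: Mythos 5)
Your proof is correct and follows essentially the same route as the paper's: conjugate $X_\beta$ through $A_\alpha$ to obtain $i^{g(\alpha\cap\beta)}X_\beta Z_{\alpha\cap\beta}$, reduce to $\langle Z_{\alpha\cap\beta}\rangle_\rho$, settle (i) via lemma~\ref{lem:A1} (with the parity of $g(\alpha\cap\beta)$ fixed by self-adjointness), and settle (ii) by showing $Z_{\alpha\cap\beta}\notin\mathcal Z(S)$ so that it anticommutes with a stabilizer. The only immaterial differences are that the paper reaches $Z_{\alpha\cap\beta}\notin\mathcal Z(S)$ directly from the second equivalence of lemma~\ref{lem:A1} together with $H_\alpha=G_\alpha$, rather than by excluding the logical case via the definition of tolerability as you do, and that the anticommuting check you need is an element of $S_X$, not $S_Z$.
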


\begin{proof}
An easy computation gives for $X_\beta\in S$\begin{equation}
\langle A_\alpha X_\beta A_\alpha^\dagger\rangle_\rho
= i^{g(\alpha\cap \beta)}
\langle X_\beta Z_{\alpha\cap\beta} \rangle_\rho
= i^{g(\alpha\cap \beta)}
\langle Z_{\alpha\cap\beta} \rangle_\rho.
\end{equation}
If $X_\beta\in \mathcal Z(G_{\alpha})$ then $Z_{\alpha\cap\beta}\in S$ by lemma~\ref{lem:A1} and thus\begin{equation}
\langle Z_{\alpha\cap\beta} \rangle_\rho=1.
\end{equation} 
(Notice that $g(\alpha\cap\beta)$ is even because $X_\beta$ is self-adjoint.) If $X_\beta\not\in \mathcal Z(G_{\alpha})$ and $X_\alpha$ is tolerable then by lemma~\ref{lem:A1}\begin{equation}
Z_{\alpha\cap\beta}\not\in \mathcal Z(S)
\end{equation} 
and thus there exists some $s\in S_X$ such that 
\begin{equation}
(s,Z_{\beta\cap\alpha})=-1,
\end{equation}
which gives
\begin{align}
\langle Z_{\alpha\cap\beta} \rangle_{U\rho U^\dagger} = 
\langle Z_{\alpha\cap\beta} s \rangle_{U\rho U^\dagger} 
&= - \langle sZ_{\alpha\cap\beta} \rangle_{U\rho U^\dagger} 
\nonumber\\
&= - \langle Z_{\alpha\cap\beta} \rangle_{U\rho U^\dagger},
\end{align}
i.e. the expectation value vanishes. 
\end{proof}

\begin{lem}\label{lem:A5} 
(Here $S$ is arbitrary.)
Let $S$ be a stabilizer and $G$ a Pauli group such that\begin{equation}
\mathcal Z_G(S) \propto G\cap S.
\end{equation}
If $\rho$ is an encoded state of $S$ such that for certain $K\in \mathcal L(G)$ and any $s\in S$\begin{equation}
\langle K^\dagger s K \rangle_\rho
= \begin{cases} 
1, & \text{if $s\in \mathcal Z_{S}(G)$},\\
0, & \text{otherwise}.
\end{cases}
\end{equation}
then\begin{equation}
\mathcal D_{S}(K\rho K^\dagger)=\mathcal D_G(\rho).
\end{equation}
\end{lem}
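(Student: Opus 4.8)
The plan is to expand both depolarizing channels using the Fourier/character machinery over the relevant Pauli groups, and to match the two expressions term by term using the hypothesis on expectation values.

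First I would write out both sides explicitly. The left-hand side is $\mathcal D_S(K\rho K^\dagger) = \frac{1}{|S|}\sum_{s\in S} s K\rho K^\dagger s^\dagger$. The right-hand side is $\mathcal D_G(\rho) = \frac{1}{|G|}\sum_{g\in G} g\rho g^\dagger$. The natural way to compare them is to sandwich both against an arbitrary pair of Pauli operators, or equivalently to decompose everything in the Pauli basis; but a cleaner route is to use the fact that $\mathcal D_S$ is (up to normalization) the projector onto the commutant-like subspace and can be expressed via the group average. I would first observe that because $K\in\mathcal L(G)$ and $G$ is a Pauli group, conjugation by $s\in S$ turns $K$ into $\pm K$-type operators according to the commutation characters: more precisely, for $g\in G$, $sgs^\dagger = (s,g)\, g$, so $\mathcal D_S(K\rho K^\dagger)$ only depends on the components of $K\rho K^\dagger$ that are invariant under all of $S$ in the appropriate sense.

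The key step is to use the hypothesis $\langle K^\dagger s K\rangle_\rho = [s\in\mathcal Z_S(G)]$ (writing the indicator bracket informally; in LaTeX one would phrase it as the stated case distinction). Expanding $K = \sum_i c_i g_i$ with $g_i\in G$, we get $K^\dagger s K = \sum_{i,j}\bar c_i c_j\, g_i^\dagger s g_j$, and since $g_i^\dagger s g_j = (s,g_j)\, g_i^\dagger g_j s$ one can reorganize this as a sum over the coset structure of $G$ relative to $S$. The condition $\mathcal Z_G(S)\propto G\cap S$ is exactly what makes the "diagonal" part of $K\rho K^\dagger$ behave like an element that, after averaging over $S$, reproduces the average over $G$: elements of $G$ that commute with $S$ are (up to phase) already in $S$, so averaging over $S$ does not lose them, while elements of $G$ that anticommute with some $s\in S$ get killed by the $S$-average on one side and, via the vanishing expectation hypothesis, contribute nothing on the other. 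I would make this precise by computing $\operatorname{tr}(P\,\mathcal D_S(K\rho K^\dagger))$ and $\operatorname{tr}(P\,\mathcal D_G(\rho))$ for an arbitrary Pauli $P$ and checking equality: on the left this picks out $\frac{1}{|S|}\sum_{s}\operatorname{tr}(s^\dagger P s\, K\rho K^\dagger)$, which is nonzero only when $P$ commutes with all of $S$, in which case it equals $\operatorname{tr}(P K\rho K^\dagger)$; one then uses the hypothesis together with $\mathcal Z_G(S)\propto G\cap S$ and the fact that $\rho$ is $S$-encoded to rewrite $\operatorname{tr}(PK\rho K^\dagger)$ as $\frac{1}{|G|}\sum_g \operatorname{tr}(P g\rho g^\dagger)$.

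The main obstacle I expect is bookkeeping the phases and the coset combinatorics cleanly: one must track factors of $i$ and $-1$ coming from commutators $(s,g)$ and from $g_i^\dagger g_j$ not being Hermitian in general, and one must invoke the hypothesis $\langle K^\dagger s K\rangle_\rho = 0$ for $s\notin\mathcal Z_S(G)$ in precisely the right place so that the cross terms vanish rather than merely being reshuffled. The condition $\mathcal Z_G(S)\propto G\cap S$ should be used to guarantee that the surviving terms on the left are indexed by $G\cap S$ in a way that, combined with $S$-invariance of $\rho$, telescopes to the full $G$-average; getting this correspondence exactly right — rather than up to an unwanted normalization factor like $|G\cap S|/|S|$ versus $1/|G|$ — is where the care is needed, and it is essentially forced by comparing traces against the identity $P=\mathbf 1$ as a consistency check.
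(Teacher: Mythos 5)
Your overall strategy --- comparing $\operatorname{tr}\bigl(P\,\mathcal D_S(K\rho K^\dagger)\bigr)$ with $\operatorname{tr}\bigl(P\,\mathcal D_G(\rho)\bigr)$ over a Pauli basis --- is viable and is essentially a dual reformulation of the paper's argument, but there is a genuine gap at the central step. After the $\mathcal D_S$-average kills all $P\notin\mathcal Z(S)$, your plan requires evaluating $\langle K^\dagger P K\rangle_\rho$ for \emph{every} $P\in\mathcal Z(S)$, whereas the stated hypothesis only controls this quantity for $P=s\in S$. For logical $P\in\mathcal Z(S)-S$ the hypothesis is silent, and $\langle K^\dagger PK\rangle_\rho$ a priori depends on the logical content of $\rho$; you cannot simply ``use the hypothesis \dots\ to rewrite $\operatorname{tr}(PK\rho K^\dagger)$'' there. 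The real difficulty is therefore not phase bookkeeping, as you suggest, but that the given data does not directly determine the quantity you need to compute.

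To close the gap you end up reconstructing the paper's two-step proof. First, expanding $K=\sum_i c_i g_i$ with $g_i\in G$, observe that in $\operatorname{tr}(K^\dagger PK\rho)$ only pairs with $g_i^\dagger g_j\in\mathcal Z(S)$, i.e.\ $g_i^\dagger g_j\in\mathcal Z_G(S)\propto G\cap S$, survive (any Pauli outside $\mathcal Z(S)$ has vanishing expectation on an encoded state), and such $g_i^\dagger g_j$ act on $\rho$ as phases; hence $\langle K^\dagger PK\rangle_\rho=\bigl(\sum_\mu d_\mu\,(P,k_\mu)\bigr)\langle P\rangle_\rho$, where the $k_\mu$ represent the cosets of $\mathcal Z_G(S)$ in $G$ and the coefficients $d_\mu$ depend only on $K$. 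This is the content of the paper's step (a), where it appears as the operator identity $\mathcal D_S(K\rho K^\dagger)=\sum_\mu c_\mu\,k_\mu\rho k_\mu^\dagger$. Second, one needs the duality between $S/\mathcal Z_S(G)$ and $G/\mathcal Z_G(S)$ --- the characters $\mu\mapsto(s,k_\mu)$ with $s\in S$ exhaust the dual group --- so that character orthogonality applied to the hypothesis (which now reads $\sum_\mu d_\mu(s,k_\mu)=1$ or $0$ according to whether $s\in\mathcal Z_S(G)$) forces all $d_\mu$ to equal $1/|G/\mathcal Z_G(S)|$. Only then does $\sum_\mu d_\mu(P,k_\mu)$ reduce to the indicator of $P\in\mathcal Z(G)$ for every $P\in\mathcal Z(S)$, with the normalization $1/|G|$ emerging automatically rather than being checked against $P=\mathbf 1$.
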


\begin{proof}[Sketch of proof] Consider the quotient groups\begin{equation}
S_G := \frac S{\mathcal Z_S(G)},
\qquad
G_S := \frac G {\mathcal Z_G(S)}.
\end{equation}
The irreps (characters) $\mu$ of $S_G$ are morphisms\begin{equation}
\mu: S_G\rightarrow \pm 1.
\end{equation}
Every element $a$ of $G$ is assigned such an irrep via the map\begin{equation}
a\mapsto (a,\cdot)
\end{equation}
It easy to check that this is a biyection from $G_S$ to the group of irreps of $S_G$. That is, $S_G$ and $G_S$ are dual groups.

We choose a set of representatives $k_\mu\in G$ of the quotient group $G_S$, with \begin{equation}
(k_\mu, \cdot)=\mu,
\end{equation}
and proceed in two steps.

\noindent{\bf a)}  First we show that there exist $c_\mu \in \mathbf C$ such that\begin{equation}
\rho':=\mathcal D_S(K\rho K^\dagger) = \sum_\mu c_\mu k_\mu \rho k_\mu^\dagger.
\end{equation}
The state $K\rho K^\dagger$ is a linear combination of terms of the form\begin{equation}
a\rho b^\dagger,\qquad a,b\in G.
\end{equation}
If $ab\not\in \mathcal Z(S)$, there exists $s\in S$ such that $(a,s)=-(b,s)$, so that
\begin{align}
\mathcal D_S(a \rho b^\dagger)
&=\frac 1 2 \mathcal D_S(a \rho b^\dagger+sa\rho b^\dagger s)
\nonumber\\
&=\frac 1 2 \mathcal D_S(a \rho b^\dagger-as\rho s b^\dagger)=0
\end{align}
Otherwise $ab\in \mathcal Z(S)$ and there exists some irrep $\mu$ and $s,s'\in \mathcal Z(S)$ such that\begin{equation}
a=k_\mu s, \qquad b=k_\mu s'.
\end{equation}
Moreover, since $a, b, k_\mu\in G$ we have $s,s'\in \mathcal Z_G(S) \propto G\cap S$ and thus\begin{equation}
\mathcal D_S(a \rho b^\dagger)=
\mathcal D_S(k_\mu s \rho {s'}^\dagger k_\mu^\dagger)\propto
\mathcal D_S(k_\mu \rho k_\mu^\dagger) = k_\mu \rho k_\mu^\dagger.
\end{equation}

\noindent{\bf b)} Armed with the above expresion for $\rho'$ we have for any $s\in S$
\begin{align}
\langle K^\dagger s K \rangle_{\rho} 
= \langle s \rangle_{\rho'} 
&= \sum_\mu c_\mu \langle k_\mu^\dagger s k_\mu\rangle_\rho
\nonumber\\
&= \sum_\mu c_\mu \mu(s) \langle s\rangle_\rho
= \sum_\mu c_\mu \mu(s),
\end{align}
where the sum is over the characters $\mu$ of $S_G$. By a well known property of characters:\begin{equation}
c_\mu = \frac 1{|S_G|}\sum_{s\in S_G} \langle K^\dagger s K\rangle_\rho \,\mu(s)
=\frac 1 {|S_G|}=\frac 1 {|G_S|}.
\end{equation}
Finally, using again that $Z_G(S)\propto S\cap G$,\begin{equation}
\rho' 
= \frac 1 {|G_S|}\sum_\mu k_\mu \rho k_\mu^\dagger
= \frac 1 {|G|}\sum_{g\in G} g \rho g^\dagger.\qedhere
\end{equation}
\end{proof}

\begin{proof}[Proof of theorem~\ref{thm:T}]
We set $X_\alpha = x$ and consider two separate cases.

\noindent {\bf($X_\alpha$ tolerable)} Since $A_\alpha P_0$ cannot be zero, by lemma~\ref{lem:A3} there exist $z\in E_\alpha$ and $K \in\mathcal L(G_\alpha)$ such that\begin{equation}
A_\alpha^\dagger P_0
= z K P_0.
\end{equation}
Combining this with the fact that $U\rho U^\dagger$ is an encoded state of $S_Z$ and using lemma~\ref{lem:A4} we get for any $s\in S$
\begin{align}
\langle K^\dagger s K \rangle_{U\rho U^\dagger}
&= (z, s) \langle A_\alpha s A_\alpha^\dagger \rangle_{U\rho U^\dagger}
\nonumber\\
&=\begin{cases} 
1, & \text{if $s\in Z_S(G_\alpha)\subseteq S_Z \mathcal Z_X(G_{\alpha})$},\\
0, & \text{otherwise}.
\end{cases}
\end{align}
A few manipulations give
\begin{align}
(\mathcal D_{S_X}\circ \hat U\circ \hat X_\alpha) (\rho) 
&= (\mathcal D_S\circ \hat U\circ \hat X_\alpha) (\rho)
\nonumber\\
&= 
(\mathcal D_S \circ \hat X_\alpha \circ \hat U \circ \hat A^\dagger_\alpha) (\rho)
\nonumber\\
&= (\hat X_\alpha \circ \hat z \circ\mathcal D_S \circ \hat K\circ \hat U) (\rho).
\end{align}
Since $X_\alpha$ is tolerable\begin{equation}
\mathcal Z_{G_\alpha}(S) = G_\alpha\cap S
\end{equation}
and thus we can apply lemma~\ref{lem:A5} to the encoded state $U\rho U^\dagger$, recovering the result.

\noindent {\bf($X_\alpha$ not tolerable)} By lemma~\ref{lem:tolerability} we can choose a logical operator $X_\lambda\in\mathcal Z(H(\phi))$. By lemma~\ref{lem:tolerability} (iv), $X_{\alpha+\lambda}$ is tolerable. For $w$ we can choose any logical operator that does the job, and in particular we set\begin{equation}
w = U^\dagger X_\lambda U X_\lambda.
\end{equation}
The result follows from the first case: since $X_\lambda \rho X_\lambda$ is an encoded state
\begin{align}
(\mathcal D_{S_X}\circ \hat U\circ \hat X_\alpha) (\rho) 
&= (\mathcal D_{S_X}\circ \hat U\circ \hat X_{\alpha+\lambda}\circ\hat X_\lambda) (\rho)
\nonumber\\
&= (\hat X_{\alpha+\lambda} \circ \mathcal D_{E_{\alpha+\lambda}}\circ \hat U \circ \hat X_\lambda)(\rho) 
\nonumber\\
&= (\hat X_{\alpha} \circ \mathcal D_{E_{\alpha+\lambda}}\circ \hat U \circ \hat w)(\rho),
\end{align}
where, in the last equality, $\hat X_\lambda$ commutes with $\mathcal D_{E_{\alpha+\lambda}}$ because $G_{\alpha+\lambda}=H_{\alpha}$.
\end{proof}

\begin{proof}[\skproof\ of lemma~\ref{lem:factorization}]
{\bf (i)} Trivially
\begin{equation}
G_\alpha\subseteq \prod_i G_{\alpha_i},
\end{equation}
and, via lemma~\ref{lem:tolerability}, 
\begin{equation}
G_{\alpha_i}=H_{\alpha_i}\subseteq H_\alpha\subseteq G_\alpha.
\end{equation}
The result follows, again via lemma~\ref{lem:tolerability}, because\begin{equation}
G_\alpha = \prod_i G_{\alpha_i} = \prod_i H_{\alpha_i}=H_\alpha.
\end{equation}
{\bf (ii)} This follows combining 
\begin{equation}
\mathcal Z(G_\alpha)= \bigcap_i \mathcal Z(G_{\alpha_i}),
\end{equation}
with
\begin{equation}
\frac {g(\alpha\cap \gamma)} 2
= \sum_i \frac {g(\alpha_i\cap \gamma)} 2 - \sum_{i\neq j} g(\alpha_i\cap\alpha_j\cap \gamma)
\end{equation}
and, for any $\beta$,
\begin{equation}
g(\beta) \equiv |\beta|
\mod 2.\qedhere
\end{equation}
\end{proof}

\begin{proof}[Proof of lemma~\ref{lem:separation}]
Let $i=1,\dots,n$ and $j=1,\dots,m$. The condition on the generators $X_{\beta_j}$ amounts to the existence of a function
\begin{equation}
I:\{1,\dots,m\}\rightarrow \{1,\dots,n\}
\end{equation}
such that
\begin{equation}
i\neq I(j) \implies Z_{\alpha_i\cap \beta_j}\in S.
\end{equation}
Denote by $I^{-1}[i]$ the preimage of $i$. When $j\in I^{-1}[i]$ there exists some $s\in S_Z$ such that
\begin{equation}
Z_{\alpha_i\cap \beta_j}
= s \prod_{k=1}^n Z_{\alpha_k\cap \beta}
= s Z_{\alpha\cap\beta_j}.
\end{equation}
Then:
\begin{align}
H_{\alpha_i} 
&= S_Z\cdot\langle Z_{\alpha_i\cap\beta_j} \rangle_{j=1}^m 
= S_Z\cdot\langle Z_{\alpha_i\cap\beta_j} \rangle_{j\in I^{-1}[i]} 
\nonumber\\
&= S_Z\cdot\langle Z_{\alpha\cap\beta_j} \rangle_{j\in I^{-1}[i]} 
\subseteq S_Z\cdot\langle Z_{\alpha\cap\beta_j} \rangle_{j=1}^k = H_\alpha.
\end{align}
The converse, i.e.
\begin{equation}
H_\alpha \subseteq \prod_i H_{\alpha_i},
\end{equation}
is trivial. 
\end{proof}

\begin{proof}[\skproof\ of lemma~\ref{lem:A6}]
 Trivially 
 \begin{equation}
G_\alpha G_{\alpha+\omega} = G_\alpha G_{\omega},
\end{equation}
and given
\begin{equation}
X_\gamma\in \mathcal Z(G_\alpha G_{\alpha+\omega})=\mathcal Z(G_\alpha)\cap \mathcal Z(G_\omega)\cap\mathcal Z(G_{\alpha+\omega}),
\end{equation}
we have
\begin{align}
(X_\gamma, e_{\alpha}e_{\alpha+\omega}) 
&=
(-1)^{g(\gamma\cap\alpha)/2+g(\gamma\cap(\alpha+\omega))/2}
\nonumber\\&=
(-1)^{g(\gamma\cap\omega)/2+|\gamma\cap\alpha\cap\omega|}
\\&=
(X_\gamma, e_\omega Z_{\alpha\cap\omega}).\qedhere
\end{align}
\end{proof}

\section{Gates in tetrahedral codes}\label{sec:tetrahedral}

This appendix discusses the implementation of transversal gates in tetrahedal codes. For the CNot gate see~\cite{bombin:2007:3dcc}.

\subsection{T gate}

The vertices of a tetrahedral colex are bicolorable, i.e. we can partition them in two sets so that vertices connected by an edge belong to different sets. Given such a bipartion, we assign accordingly to each qubit $q$ a sign
\begin{equation}
b_q = \pm 1.
\end{equation}
The logical $T$ gate has the transversal implementation~\cite{bombin:2015:gauge}:
\begin{equation}\label{eq:logical_T}
U = \bigotimes_q T^{k_0b_q}, \qquad k_o:\equiv \sum_q b_q \mod 8.
\end{equation} 

\subsection{P gate}\label{sec:P}

The logical $P$ gate can be implemented as $U^2$, with $U$ as above, but this is not the only way. Clearly, given any stabilizer $X_\alpha$, with $\alpha$ some set of qubits, the gate $X_\alpha U X_\alpha$ implements the logical $T$ gate. Therefore the $P$ gate has the transversal implementation
\begin{equation}
U X_\alpha U X_\alpha \propto \prod_{q\in\bar\alpha} T_q^{2k_0b_q},
\end{equation}
where $T_q$ is the $T$ gate on the $q$-th qubit and $\bar\alpha$ is the complement of $\alpha$. $X_\alpha$ is a stabilizer iff $X_{\bar\alpha}$ implements the logical $X$ operator. Thus, we can choose $P$ to have the same support as any logical $X$ operator, and these operators are membrane like.

\subsection{Gates on facets}\label{sec:gates_facets}

CP gates can be implemented facet-to-facet. This is due to the matryoshka-like relationship between simplicial color codes~\cite{bombin:2013:self} of different dimensions, of which triangular codes~\cite{bombin:2006:2dcc} and tetrahedral codes are the low dimension cases. 

Let $S=S_XS_Z$ be the stabilizer of the (3D) tetrahedral code and $S^r=S^r_XS^r_Z$ the stabilizer of the (2D) triangular code on a facet $r$. Then
\begin{equation}\label{eq:SXr}
S_X|_r\propto S_{X}^r,\qquad
\mathcal Z_X(S)|_r\propto \mathcal Z_{P_X|_r}(S^r),
\end{equation}
or dually%
\footnote{Using the relation $\mathcal Z(A)|_r = \mathcal Z_{r}(A||_r)$, see~\cite{bombin:2018:colorful}.}
\begin{equation}\label{eq:SZr}
S_Z\|_r = S_{Z}^r,\qquad
\mathcal Z_Z(S)\|_r = \mathcal Z_{P_Z|_r}(S^r).
\end{equation}
In fact, it is not only the $CP$ gate that can be performed on the boundary. On a tetrahedral colex of dimension $D$ we can define a color code such that on a boundary element of dimension $d$ a gate of the $d$-th level of the Pauli hierarchy is transversal (the first level being the Pauli group, the second the Clifford group, and so on). The general observation is as follows.

\begin{lem}\label{lem:subsystem}
Let $S=S_XS_Z$ be a CSS code defined on a set of qubits, and $S^r=S^r_XS^r_Z$ another CSS code defined on a subset $r$ of those qubits such that the above relations are satisfied. Let $U$ be a unitary on the subsystem $r$ that commutes with $Z$ operators. If $U$ implements a logical gate $\bar U$ on the second code, then it also implements $\bar U$ on a logical subsystem of the first one.
\end{lem}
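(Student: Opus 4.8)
The plan is to reduce the statement to a short computation with code states in the computational basis. First, ``commutes with $Z$ operators'' should be read as commuting with every single-qubit $Z_q$, $q\in r$, so $U$ is diagonal in the computational basis of $r$: $U=\sum_w e^{i\theta_w}\,|w\rangle\!\langle w|$ for some phase function $\theta$ on bitstrings on $r$ (extended by the identity on the other qubits). Because such a $U$ commutes with all logical $\bar Z$ operators of $S^r$, the gate $\bar U$ it induces is diagonal in the logical computational basis; and for a CSS code a diagonal unitary preserves the code space if and only if $\theta$ is constant on the coset $u+C_X^r$ for every $u\in (C_Z^r)^\perp$, its value there being the logical phase assigned to the corresponding code state. (Here $C_X^r,C_Z^r\subseteq\mathbf F_2^{|r|}$ are the classical codes underlying $S^r_X,S^r_Z$; the ``only if'' holds because $U|\overline u\rangle$ is supported on the single coset $u+C_X^r$, on which the only code state is $|\overline u\rangle$.) So the first step is to record that ``$U$ implements $\bar U$ on $S^r$'' is equivalent to this constancy, with value $\phi(\overline u)$ say.

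Next I would reformulate the four hypotheses as identities of classical codes. Writing $C_X,C_Z\subseteq\mathbf F_2^n$ for the codes underlying $S_X,S_Z$, so that $\mathcal Z_X(S)$ has supports $C_Z^\perp$, the relation $S_X|_r\propto S_X^r$ says $C_X|_r=C_X^r$ and $\mathcal Z_X(S)|_r\propto\mathcal Z_{P_X|_r}(S^r)$ says $C_Z^\perp|_r=(C_Z^r)^\perp$; the other two relations are the $\perp$-duals of these via $(V|_r)^\perp=V^\perp\|_r$. From these two facts one gets the exact sequence of $\mathbf F_2$-spaces
\begin{equation}
0\;\longrightarrow\; C_Z^\perp\|_{\bar r}\,\big/\,C_X\|_{\bar r}\;\longrightarrow\; C_Z^\perp\,\big/\,C_X\;\longrightarrow\;(C_Z^r)^\perp\,\big/\,C_X^r\;\longrightarrow\;0,
\end{equation}
the last nontrivial arrow being restriction to $r$: it is onto because $C_Z^\perp|_r=(C_Z^r)^\perp$, it carries $C_X$ onto $C_X^r$, and its kernel is exactly the classes representable on $\bar r$. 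Since code states of $S$ are indexed by $C_Z^\perp/C_X$ and those of $S^r$ by $(C_Z^r)^\perp/C_X^r$, a choice of splitting gives a tensor decomposition $\mathcal H_{\mathrm{code}}(S)\cong\mathcal H_A\otimes\mathcal H_B$ with $\mathcal H_A$ identified with $\mathcal H_{\mathrm{code}}(S^r)$ via restriction, and $\mathcal H_B$ the ``outside-$r$'' logical qubits; this $\mathcal H_A$ is the logical subsystem in the claim.

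The heart of the argument is then immediate. For a code state $|\bar v\rangle=|C_X|^{-1/2}\sum_{\alpha\in C_X}|v+\alpha\rangle$ of $S$, with $v\in C_Z^\perp$, applying $U$ multiplies $|v+\alpha\rangle$ by $e^{i\theta((v+\alpha)|_r)}=e^{i\theta(v|_r+\alpha|_r)}$. Now $v|_r\in C_Z^\perp|_r=(C_Z^r)^\perp$ while $\alpha|_r$ runs over $C_X|_r=C_X^r$, so by the constancy from the first step all these phases equal a single value $e^{i\phi(\overline{v|_r})}$, depending only on the $\mathcal H_A$-component of $\bar v$. Hence $U|\bar v\rangle=e^{i\phi(\overline{v|_r})}|\bar v\rangle$: $U$ preserves $\mathcal H_{\mathrm{code}}(S)$ and acts there as $\bar U\otimes\mathbf 1$ with respect to the decomposition above, which is the assertion.

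I expect the only genuine work to be bookkeeping. The subtlety is that $C_X$ is not the direct sum of $C_X\|_r$ and $C_X\|_{\bar r}$, so some care is needed when grouping the sum over $\alpha\in C_X$ by the restriction $\alpha|_r$, and one must check that it is precisely the four hypotheses that are being used --- $S_X|_r\propto S_X^r$ to keep $\alpha|_r$ inside one $C_X^r$-coset, and $\mathcal Z_X(S)|_r\propto\mathcal Z_{P_X|_r}(S^r)$ to put $v|_r$ in $(C_Z^r)^\perp$ so the first step applies; verifying exactness of the displayed sequence is the other careful-but-routine point. The same argument can alternatively be phrased coordinate-free using the restriction identities $\mathcal Z(A)|_r=\mathcal Z_r(A\|_r)$ of~\cite{bombin:2018:colorful}, but the computational-basis form above seems the most transparent.
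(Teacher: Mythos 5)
Your proposal is correct and follows essentially the same route as the paper's own sketch: both arguments rest on $U$ being diagonal in the computational basis, the induced phase being constant on each coset $u+C_X^r$ (the paper's $y+f(\bar x)$, $y\in V$), and the restriction to $r$ of a code word of $S$ landing in exactly one such coset so that every term of a code state of $S$ acquires the same phase $\phi(\overline{v|_r})$. Your coset/exact-sequence bookkeeping just makes explicit the step the paper compresses into ``it is easy to derive from the assumptions that the logical qubits are divided in two subsystems.''
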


\begin{proof}[\skproof] The encoded computational states of the second code take the form, for some vector space $V$ and linear map $f$
\begin{equation}
|\bar x\rangle = \sum_{y\in V} |y+f(\bar x)\rangle.
\end{equation}
Each $f(\bar x)$ belongs to a different representative of the quotient space over $V$. Thus, up to a phase, $U$ takes the form
\begin{equation}
U|y+f(\bar x)\rangle = \phi(\bar x)|y+f(\bar x)\rangle, \qquad y\in V.
\end{equation}
That is, $\bar U$ is also diagonal on the logical computational base $\bar x$ and has entries $\phi(\bar x)$.

For the first code the physical qubits are divided in the subset $r$ and its complement $\bar r$. It is easy to derive from the assumptions that the logical qubits are divided in two subsystems, and there is some subspace $W$ and a linear map $f'$ such that
\begin{equation}
|\bar x\oplus \bar x'\rangle = \sum_{y\oplus y'\in W} |y+f(\bar x)\rangle|y'+f'(\bar x\oplus \bar x')\rangle,
\end{equation}
where $y\in V$ for every $y\oplus y'\in W$. Then, as stated,
\begin{equation}
U\otimes \mathbf 1|\bar x\oplus \bar x'\rangle = \phi(\bar x)|\bar x\oplus \bar x'\rangle.\qedhere
\end{equation}
\end{proof}

\section{Clifford gates and errors}\label{sec:noise_Clifford}

This appendix discusses error propagation for logical Clifford gates on tetrahedal codes.

\subsection{General approach}

Clifford gates preserve the Pauli group, and this makes much easier to study the propagation of Pauli errors. Offen the computation is straightfoward, but sometimes it is faster to take a different route. A transversal Clifford gate induces a mapping on syndromes
\begin{equation}
(\xi,\phi)\mapsto (\xi',\phi').
\end{equation}
If any given Pauli operator $p$ has support on a set of qubits not containing the support of any logical operator, one can:
\begin{itemize}
\item
compute the syndrome $(\xi,\phi)$ of $p$,
\item
obtain from it the syndrome $(\xi',\phi')$ of $UpU^\dagger$, and
\item
recover $UpU^\dagger$ as the unique error, up to stabilizers, with syndrome $(\xi',\phi')$ and support contained in the support of $p$.
\end{itemize}
The procedure can be applied separately to arbitrary factors of a Pauli error.

\subsection{Standard P gate}

The `standard' P gate is the square of the logical T gate defined in~\ref{eq:logical_T}. It leaves $Z$ operators invariant, and transforms an $X$ \emph{face} operator as follows
\begin{equation}
U X_f U^\dagger = X_f Z_f.
\end{equation}
For any cell $c$, of some color $\kappa$, choose a color $\kappa'\neq \kappa$ and let $F$ the set of faces that $c$ shares with $\kappa'$ colored cells. The cell operator takes the form
\begin{equation}
X_c = \prod_{f\in F} X_f.
\end{equation}
Thus we have
\begin{equation}
U\left( X_c \prod_{f\in F} Z_f \right)U^\dagger = X_f,
\end{equation}
which leads to
\begin{equation}
(\xi,\phi) \mapsto (\xi + \text{br}(\phi), \phi),
\end{equation}
with $\text{br}$ as in the main text.

\subsection{Facet P gate}

This form of the $P$ gate is obtained restricting the unitary $U$ of the previous section to a given facet $r$, see appendix~\ref{sec:tetrahedral}. For cells not in contact with the facet $r$ the cell operator $X_c$ does not change. If a cell $c$ has qubits on $r$ then the intersection is a face $f$ and thus
\begin{equation}
U X_c Z_f U^\dagger = X_f,
\end{equation}
which leads to
\begin{equation}
(\xi,\phi) \mapsto (\xi + \text{end}_r(\phi), \phi),
\end{equation}
with $\text{end}_r$ as in the main text.

\section{Proofs for section~\ref{sec:propagation}}\label{sec:noise_T}

Throughout this appendix $S=S_XS_Z$ is the stabilizer of a tetrahedral code.

\begin{warning}
Given an $X$-error syndrome $\phi$, $W_\phi$ is the set of cells and facets with at least one of their faces in $\phi$,
\end{warning}

\begin{lem}\label{lem:syndrome}
Given $X_\alpha$ with syndrome $\phi$ (i) for any cell $c$
\begin{equation}
c\not\in W_\phi
\iff
Z_{\alpha\cap c}\in S,
\end{equation}
and (ii) for any facet $r$\begin{equation}
r\not\in W_\phi
\iff
Z_{\alpha\cap r}\in \mathcal Z(S),
\end{equation}
\end{lem}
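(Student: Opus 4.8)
The plan is to compute, for a cell $c$ or a facet $r$, the commutator of $Z_{\alpha\cap c}$ (resp. $Z_{\alpha\cap r}$) with the $X$-type generators of $S$ and of $\mathcal Z(S)$, and to recognize the resulting signs as components of the syndrome $\phi=\partial X_\alpha$. The key algebraic identity throughout is that for any two qubit sets $\alpha,\beta$ one has $(Z_{\alpha\cap c},X_\beta)=(-1)^{|\alpha\cap c\cap\beta|}$, and in particular $(Z_{\alpha\cap c},X_c)=(-1)^{|\alpha\cap c|}$ counts the parity of the support of $X_\alpha$ on the cell $c$. Since the $S_Z$-syndrome of $X_\alpha$ assigns to each face $f$ the value $(-1)^{|\alpha\cap f|}$, the condition $c\notin W_\phi$ — no face of $c$ lies in $\phi$ — says exactly that $X_\alpha$ has even overlap with every face of $c$.

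First I would prove (i). For the direction ``$\Leftarrow$'': if $Z_{\alpha\cap c}\in S=S_XS_Z$, then since $Z_{\alpha\cap c}\in P_Z$ it actually lies in $S_Z$, hence commutes with nothing forced but more to the point is a product of face operators $Z_f$; comparing supports (a cell $c$ meets each of its neighbouring cells in a single face, and these faces partition, up to the relevant colouring, the qubits of $c$) forces $\alpha\cap c$ to be a union of full faces of $c$, so every face $f\subseteq c$ has $|\alpha\cap f|$ equal to $0$ and $|\alpha\cap c|$ even — actually it is cleaner to argue via commutators: $Z_{\alpha\cap c}\in S$ implies $Z_{\alpha\cap c}$ commutes with every element of $\mathcal Z_X(S)$, and one shows that the face operators $X_f$ for $f\subseteq c$ can be completed to generators of $S_X$ together with $\mathcal Z_X(S)$-representatives in such a way that commutation with all of them forces $|\alpha\cap f|$ even for each face $f$ of $c$, i.e. $c\notin W_\phi$. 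For ``$\Rightarrow$'': if $c\notin W_\phi$, then $X_\alpha$ has even overlap with each face of $c$; using the colex structure of the cell $c$ (it is itself a $2$-colex) one decomposes $\alpha\cap c$, modulo the face operators $Z_f$ with $f\subseteq c$, into a $Z$-operator supported on a set meeting every face evenly, and a cycle-counting / Gauss-law argument on the $2$-colex $c$ shows such a set is a sum of face boundaries, hence $Z_{\alpha\cap c}\in S_Z\subseteq S$. Part (ii) is the same argument run ``one level up'': a facet $r$ is a $2$-colex whose stabilizer-plus-logical structure is $\mathcal Z(S^r)$ via the matryoshka relations \eqref{eq:SXr}--\eqref{eq:SZr}, so $Z_{\alpha\cap r}\in\mathcal Z(S)$ corresponds to $Z_{\alpha\cap r}|_r\in\mathcal Z_{P_Z|_r}(S^r)$, which by the same face-parity count on the $2$-colex $r$ is equivalent to $\alpha$ meeting every face of $r$ evenly, i.e. $r\notin W_\phi$.

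The main obstacle I expect is making the ``$\Leftarrow$'' directions fully rigorous: one must show that having even overlap with every face of $c$ (resp.\ $r$) is not merely necessary but sufficient for $Z_{\alpha\cap c}$ to be a genuine product of face operators, which requires knowing that the only $\mathbf Z_2$-cycles on the $2$-colex $c$ (resp.\ $r$) that bound are all the cycles — i.e.\ that the relevant homology vanishes — together with the continuity-of-colour-lines characterization of $Z$-syndromes already recorded in the ``Gauss's law'' discussion. For cells this is immediate since a cell is a topological ball; for facets one uses that a triangular facet of the tetrahedron is likewise a disc, so again trivial homology. The bookkeeping of colours (which faces of $c$ one sums over, and the redundancy \eqref{charge_conservation} among them) is the only genuinely fiddly point, and I would handle it by fixing, as in the ``Standard P gate'' computation of appendix~\ref{sec:noise_Clifford}, a reference colour $\kappa'\neq\kappa(c)$ and working with the faces $c$ shares with $\kappa'$-cells, which already generate $X_c$ and the local part of $S_Z$.
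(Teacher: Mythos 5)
Your overall strategy --- restrict to the cell or facet $w$, recognize the restriction as a 2D color code $S^w$, and use its self-duality together with the relations \eqref{eq:SXr}--\eqref{eq:SZr} to translate the face-parity condition $w\notin W_\phi$ into membership of $Z_{\alpha\cap w}$ in $S$ or $\mathcal Z(S)$ --- is exactly the paper's. The paper runs it as a single chain of equivalences for cells and facets at once: $w\notin W_\phi$ iff $X_{\alpha\cap w}$ commutes with $S^w_Z$, iff (self-duality of $S^w$) $Z_{\alpha\cap w}\in\mathcal Z_{P_Z|_w}(S^w)$, iff (by \eqref{eq:SZr}, which holds for cells as well as facets, the cell case having no logical qubit) $Z_{\alpha\cap w}\in S$, resp.\ $\mathcal Z(S)$.

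The one step of your write-up that would not survive being made rigorous is the route you label ``cleaner'' for the ``$\Leftarrow$'' direction of (i): deducing $|\alpha\cap f|$ even for each face $f\subseteq c$ from the fact that $Z_{\alpha\cap c}\in S$ commutes with $\mathcal Z_X(S)$, after ``completing'' the face operators $X_f$ to a generating set. Single-face operators $X_f$ are \emph{not} elements of $\mathcal Z_X(S)$ --- two faces with complementary color labels meeting at a vertex share exactly one qubit, so $X_f$ anticommutes with some $Z_{f'}$; only cell and membrane operators survive --- so commutation with $\mathcal Z_X(S)$ and $S_X$ only yields cell- and membrane-overlap parities, not the individual face parities you need, and the proposed completion does not obviously supply them. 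Your first, support-based argument is the one that works, but note that it silently uses the nontrivial fact that an element of $S_Z$ supported inside $c$ is a product of face operators \emph{of} $c$, i.e.\ $S_Z\|_c=S^c_Z$ --- precisely the cell version of \eqref{eq:SZr} that the paper invokes. (Also, such a product meets each face of $c$ in an \emph{even} set, not an empty one as you wrote; evenness is all that is needed.) Once you use \eqref{eq:SXr}--\eqref{eq:SZr} for cells as well as facets, the color bookkeeping and the redundancy \eqref{charge_conservation} that you flag as the fiddly point disappear entirely.
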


\begin{proof}[\skproof] Any cell or facet $w$ can be regarded as a 2D color code with stabilizer $S^w$. For facets, we already encountered it in appendix~\ref{sec:gates_facets}.
In the case of cells there are no logical qubits, but the same relations \eqref{eq:SXr} and \eqref{eq:SZr} hold. Since $S^w_Z$ is generated by the face operators of $w$, the condition $w\not\in W_\phi$ is equivalent to
\begin{equation}
X_w\in \mathcal Z(S^w),
\end{equation}
which in turn, since $S^w$ is self-dual, is equivalent to
\begin{equation}
Z_{w\cap\alpha}\in \mathcal Z_K(S^w), \qquad K := P_Z|_w.
\end{equation}
The result follows using~\eqref{eq:SZr}.
\end{proof}

\begin{proof} [\skproof\ of lemma~\ref{lem:globality}]
Take any $X_\alpha$ with syndrome $\phi$ and that commutes with $Z_r$%
\footnote{If $x$ anticommutes with $Z_r$, $xX_r$ commutes with $Z_r$ and $\partial x = \partial (xX_r)$.
}. 
Notice that
\begin{equation}
(Z_{\alpha\cap r}, X_r)=(X_\alpha, Z_r)=1.
\end{equation}
Since $X_r$ is a non-trivial logical operator, by lemma~\ref{lem:syndrome}
\begin{equation}
Z_{\alpha\cap r} \in S.
\end{equation}
The facet can be regarded as a 2D color code with stabilizer $S^r$ as in section~\ref{sec:gates_facets}. By the first equation of~\eqref{eq:SZr} there exist faces $f_i$ in $r$ such that
\begin{equation}
Z_{\alpha\cap r} = \prod_i Z_{f_i}.
\end{equation}
By the first equation of~\eqref{eq:SXr} for each face $f_i$ there exists $x_i\in S_X$ such that%
\footnote{
In fact, $x_i$ can be the cell operator of the only cell with $f_i$ as a face.
} 
taking
\begin{equation}
x =  X_\alpha \prod_i x_i
\end{equation}
gives as desired
\begin{equation}
x|_r \propto X_{\alpha\cap r}\prod_i X_{f_i} = 1.\qedhere
\end{equation}
\end{proof}

\begin{proof}[Proof of lemma~\ref{lem:separability_cc}] Choose any $X_{\alpha_i}$ with syndrome $\phi_i$. It suffices to show that the condition of lemma~\ref{lem:separation} holds. Indeed, cell operators generate $S_X$ and, by lemma~\ref{lem:syndrome}, for any cell $c$ there exists a single $\alpha_i$ with
\begin{equation}
Z_{\alpha_i\cap c}\not\in S.\qedhere
\end{equation} 
\end{proof}

\begin{lem}\label{lem:tree}
Given a color code without boundaries, a $Z$-error syndrome $\xi$ and a tree $t$ on the colex that contains at least a qubit for each of the cells in $\xi$, there exists $z\in P_Z$ with syndrome $\xi$ and support on $t$.
\end{lem}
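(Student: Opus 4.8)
The plan is to assemble $z$ from elementary operators that are automatically supported on the vertices of the tree: the single-qubit operators $Z_v$ for $v$ a vertex of $t$, and the color-$\kappa$ string operators $Z_{s,\kappa}$ for $s$ a path contained in $t$. Two facts recalled from the main text do most of the work. First, the charge-conservation relations \eqref{charge_conservation}, applied with $W(\kappa)$ the set of $\kappa$-colored cells (there are no facets, the code being boundaryless), say that a $Z$-error syndrome $\xi$ has the same parity of cells in each of the four color classes. Second, as recalled in the discussion of string operators, if $s$ is a path of the colex (with no repeated edges) whose first vertex lies in a $\kappa$-cell $c$ and whose last vertex lies in a $\kappa$-cell $c'$, then $Z_{s,\kappa}$ is supported on the vertices of $s$ and has syndrome $\{c\}+\{c'\}$.

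First I would remove the parity. Fix any vertex $v_0$ of $t$; since the code has no boundary, $v_0$ lies in a unique cell $d_i$ of each color $\kappa_i$, and $\partial Z_{v_0}=\{d_1\}+\{d_2\}+\{d_3\}+\{d_4\}$. If every color class of $\xi$ has odd size, I replace $\xi$ by $\xi+\partial Z_{v_0}$, which has even color classes and is again met by $t$ (the old cells of $\xi$ by hypothesis, the $d_i$ because $v_0$ is a vertex of $t$); having solved the problem for this new syndrome, multiplying the resulting operator by $Z_{v_0}$ recovers a solution for $\xi$. So I may assume that every color class of $\xi$ has even size.

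Now, for each color $\kappa$, pair up the $\kappa$-cells of $\xi$ arbitrarily. For such a pair $\{c,c'\}$: both $c$ and $c'$ are met by $t$, so $t$ contains a vertex $v\in c$ and a vertex $v'\in c'$, and since $t$ is a tree --- hence connected --- the $v$--$v'$ path $s$ in $t$ has no repeated edges and is therefore a path of the colex whose first and last vertices lie in $c$ and $c'$ respectively. Hence $Z_{s,\kappa}$ has syndrome $\{c\}+\{c'\}$ and support contained in the vertices of $t$. Taking $z$ to be the product of all these string operators over all pairs and all colors (together with $Z_{v_0}$ if the parity reduction was invoked) yields an element of $P_Z$ supported on $t$ with syndrome $\sum(\{c\}+\{c'\})=\xi$.

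I do not expect a genuine obstacle here; the argument is essentially bookkeeping. The two points worth a moment of care are that the auxiliary vertex $v_0$ used in the parity step must be chosen inside $t$, and that a path of $t$ joining a qubit of $c$ to a qubit of $c'$ genuinely produces the two-cell syndrome $\{c\}+\{c'\}$ rather than something larger --- which is exactly the string-operator identity quoted above, together with the fact that a vertex of a boundaryless 3-colex belongs to exactly one cell of each color.
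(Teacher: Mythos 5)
Your proof is correct, but it takes a genuinely different route from the paper's. The paper proceeds by induction on the number of vertices of $t$: the base case (a single qubit $q$) is settled by charge conservation, which forces $\xi$ to be either empty or the full set of four cells containing $q$, i.e.\ $\partial Z_q$; the inductive step peels off a leaf $q$, notes that at most one cell of $\xi$ can be met only by $q$, and absorbs it with $Z_q$ before recursing on the smaller tree. You instead build $z$ explicitly: you use charge conservation only to equalize the color-class parities (correctly noting that all four parities agree, so a single $Z_{v_0}$ with $v_0\in t$ fixes the all-odd case), then pair up same-colored cells of $\xi$ and join each pair by the unique path in $t$, invoking the string-operator syndrome formula $\partial Z_{s,\kappa}=\{c_1\}+\{c_2\}$ quoted in the main text. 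The trade-off is that the paper's induction is self-contained and needs nothing beyond the base-case charge count, whereas yours leans on the string-operator identity (stated but not proved in the text); in exchange your construction is more transparent about what $z$ actually is --- a product of color-$\kappa$ string operators along tree paths --- which makes the support claim immediate rather than emerging from the recursion. The two points you flag as needing care ($v_0$ chosen inside $t$, and the two-cell syndrome of a tree path whose endpoints lie in the designated cells) are exactly the right ones, and both check out: the path in a tree has no repeated edges, and each vertex of a boundaryless 3-colex lies in exactly one cell of each color, so the endpoint cells are the intended ones.
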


\begin{proof}[\skproof] We proceed by induction on the number of qubits (vertices) of the tree $t$. Base case: If the tree contains a single qubit $q$ then, by charge conservation, either $\xi$ is empty and $z=1$, or $\xi$ contains the four neighbors of $q$ and $z=Z_q$. Inductive step: If the tree has $n>1$ qubits, choose a leaf $q$ and let $t'$ be the tree obtained from $t$ by removing $q$. There is at most a single cell $c$ that contains $q$ but no qubits of $t'$. If there is no such $c$ or $c\not\in \xi$ the result follows by the inductive assumption. Otherwise the syndrome
\begin{equation}
\xi' = \xi + \partial Z_q
\end{equation}
is such that, by inductive assumption, there exists $z'\in P_Z$ with syndrome $\xi'$ and support on $t'$, and we take $z=z'Z_q$.
\end{proof}

\begin{lem}\label{lem:tree2}
Given a tetrahedral color code, an error $z'\in P_Z$ and a tree $t$ that contains at least
\begin{itemize}
\item
a qubit for each of the cells in $\partial z'$, and
\item
a qubit for each of the facets $r$ such that
\begin{equation}
(X_r, z') = -1,
\end{equation}
\end{itemize}
there exists $z\in P_Z$ with support in $t$ and such that
\begin{equation}
zz' \in S_Z.
\end{equation}
\end{lem}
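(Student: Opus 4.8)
The plan is to mimic the proof of lemma~\ref{lem:tree}, running an induction on the number of qubits of $t$, but carrying along \emph{two} pieces of data attached to $z'$: its cell syndrome $\partial z'$ and the set of \emph{flagged} facets $R:=\{r \,|\, (X_r,z')=-1\}$. The key point is that these two data determine $z'$ modulo $S_Z$: if $\partial z'=\partial z''$ then $z'z''\in\mathcal Z_Z(S)=\langle Z_r\rangle S_Z$, and then $(X_r,z'z'')=+1$ (which holds exactly when $z'$ and $z''$ have the same flagged set) forces $z'z''\in S_Z$ rather than into the logical coset. So the goal is just to exhibit \emph{some} operator supported on $t$ with cell syndrome $\partial z'$ and flagged set $R$, and the hypothesis on $t$ is precisely that it ``covers'' this data.

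For the inductive step, pick a leaf $q$ of $t$, let $q'$ be its tree-neighbour, let $\kappa$ be the colour of the edge $qq'$, let $w$ be the $\kappa$-coloured cell or facet containing $q$, and set $t':=t\setminus\{q\}$. The three cells/facets of colours $\neq\kappa$ through $q$ also contain $q'$, hence meet $t'$, so $w$ is the only cell/facet that $q$ covers afresh. If $w$ meets $t'$, or $w$ is a cell not in $\partial z'$, or $w$ is a facet not in $R$, then $z'$ already satisfies the hypothesis with respect to $t'$ and the inductive hypothesis closes the case. Otherwise $w$ misses $t'$ and is either a cell of $\partial z'$ or a flagged facet; replacing $z'$ by $\zeta:=z'Z_q$ then either deletes $w$ from the cell syndrome or unflags it (because $q\in w$), while only touching the other three cells/facets through $q$, all of which meet $t'$. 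Hence $\zeta$ satisfies the hypothesis for $t'$; taking an inductive $z_0$ supported on $t'$ with $z_0\zeta\in S_Z$, the operator $z:=z_0Z_q$ is supported on $t$ and $zz'=z_0\zeta\in S_Z$, as wanted.

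The main obstacle is the base case $|t|=1$, say $t=\{q\}$. Here every cell of $\partial z'$ and every facet of $R$ must be one of the at most four cells/facets $w_1,\dots,w_4$ incident to $q$ (one per colour), and at least one $w_i$ is a cell, since a topological tetrahedron has no point lying on all four of its facets. One must show that either $z'\in S_Z$ (take $z=\mathbf 1$) or $z=Z_q$ works, i.e. $\partial z'$ is the set of all cell-type $w_i$ and $R$ is the set of all facet-type $w_i$. I would extract this from the charge-conservation relations~\eqref{charge_conservation}: for colours $\kappa,\kappa'$ they exhibit $X_{r_\kappa}X_{r_{\kappa'}}\in S_X$ as a product of $\kappa$- and $\kappa'$-cell operators, whence $(X_{r_\kappa},z')(X_{r_{\kappa'}},z')=(-1)^{n_\kappa+n_{\kappa'}}$ with $n_\kappa$ the number of $\kappa$-cells in $\partial z'$. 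Since $\partial z'$ involves only the $w_i$ and $R$ only the facet-type $w_i$, this system of relations forces all cell-type $w_i$ to carry a common value of $n$, and all facet-type $w_i$ to be flagged exactly when that common value is $1$; the value $0$ gives $z'\in S_Z$ and the value $1$ gives $z=Z_q$, so both alternatives are as required. The remaining bookkeeping — that deleting a leaf leaves a tree, that $\partial Z_q$ on cells and the facets flipped by $Z_q$ are exactly the cells/facets through $q$, and that $\zeta$ inherits the covering hypothesis in the inductive step — is routine.
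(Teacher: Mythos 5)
Your proof is correct, but it takes a genuinely different route from the paper's. The paper disposes of this lemma in one line: it closes the tetrahedral colex by adding a single qubit, turning it into a boundaryless spherical code in which the four facets become four ordinary cells, so that the flagged facets simply join the cell syndrome and lemma~\ref{lem:tree} applies verbatim. You instead re-run the induction of lemma~\ref{lem:tree} directly on the open tetrahedral code, carrying the flag set $R$ alongside $\partial z'$ as a second invariant, and you supply the two ingredients this makes necessary: (i) the observation that $(\partial z',R)$ determines $z'$ modulo $S_Z$, via $\mathcal Z_Z(S)=\langle Z_r\rangle S_Z$ and the fact that the logical coset flips every facet flag; and (ii) a base case resolved by the charge-conservation relations~\eqref{charge_conservation}, which correctly force the combined cell/facet incidence data at a single qubit $q$ to be either empty or exactly the cells and facets through $q$. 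Your leaf-removal step is also sound: the three cells/facets of colours other than that of the edge $qq'$ all contain $q'$, so only the $\kappa$-coloured one can lose coverage, and multiplying by $Z_q$ toggles precisely the cells and facets through $q$. What the paper's reduction buys is brevity and reuse of lemma~\ref{lem:tree} as a black box; what yours buys is self-containedness — it avoids invoking the colex-closing construction of~\cite{bombin:2007:3dcc} — at the cost of redoing the induction and a somewhat delicate base case.
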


\begin{proof}[\skproof] 
By adding a single qubit to a tetrahedral code it is possible to make it into a trivial (spherical) code with no boundaries~\cite{bombin:2007:3dcc}. The new code only contains four new cells, one per facet of the tetrahedron. This result is then just an application of lemma~\ref{lem:tree}.
\end{proof}

\begin{lem}[Restatement of lemma~\ref{lem:outside}]
For any $X$-error syndrome $\phi$,
\begin{equation}
\langle X_w \,|\, w\not\in W_\phi \rangle\subseteq \mathcal Z(E(\phi)).
\end{equation}
\end{lem}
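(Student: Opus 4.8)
The plan is to show that for any cell or facet $w\notin W_\phi$, the operator $X_w$ commutes with every element of $E(\phi)$. Recall that $E(\phi)=E_\alpha$ for any tolerable $X_\alpha$ with syndrome $\phi$, and that $E_\alpha$ is a coset of $G_\alpha$ inside $P_Z$ defined by its commutation relations with $\mathcal Z(G_\alpha)$. So I would fix a tolerable $X_\alpha$ with syndrome $\phi$ and argue in two steps: first that $X_w\in\mathcal Z(G_\alpha)$, and second that the prescribed commutator value $(-1)^{g(w\cap\alpha)/2}$ is trivial, i.e.\ $g(w\cap\alpha)\equiv 0\bmod 4$.

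For the first step I would invoke lemma~\ref{lem:syndrome}. If $w$ is a cell $c\notin W_\phi$, the lemma gives $Z_{\alpha\cap c}\in S$, and then lemma~\ref{lem:A1} (first equivalence, with $X_\beta=X_c\in S\subseteq\mathcal Z(S)$) gives $X_c\in\mathcal Z(G_\alpha)$. If $w$ is a facet $r\notin W_\phi$, the lemma gives $Z_{\alpha\cap r}\in\mathcal Z(S)$; since $X_r\in\mathcal Z(S)$, lemma~\ref{lem:A1} directly yields $X_r\in\mathcal Z(H_\alpha)$, and because $X_\alpha$ is tolerable we have $H_\alpha=G_\alpha$ by lemma~\ref{lem:tolerability}(i), so again $X_r\in\mathcal Z(G_\alpha)$.

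For the second step, now that $X_w\in\mathcal Z(G_\alpha)$, the definition of $E_\alpha$ says that any $z\in E(\phi)$ satisfies $(z,X_w)=(-1)^{g(w\cap\alpha)/2}$, so it remains to check $g(w\cap\alpha)\equiv 0\bmod 4$. This is where the $2$D color code structure of $w$ does the work. Viewing $w$ as its own CSS code $S^w$ (as in appendix~\ref{sec:gates_facets} and the proof of lemma~\ref{lem:syndrome}), the condition $w\notin W_\phi$ means $Z_{\alpha\cap w}$ is undetectable by the faces of $w$, hence lies in $\mathcal Z(S^w)$; since $S^w$ is self-dual this places $X_{\alpha\cap w}$ in $\mathcal Z(S^w)$ too. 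I would then argue that the restriction of the transversal gate $U=\bigotimes_q T^{b_q}$ to $w$, namely $\bigotimes_{q\in w}T^{b_q}$, leaves the code space of $S^w$ invariant up to a phase — this follows from lemma~\ref{lem:subsystem} together with the matryoshka relations — and, applying the phase-consistency argument used in the proof of lemma~\ref{lem:invariance} (comparing the phases picked up by $|0\rangle$ and $X_{\alpha\cap w}|0\rangle$ under $A_{\alpha\cap w}$, which agree since $X_{\alpha\cap w}\in\mathcal Z(S^w)$), one gets $g(\alpha\cap w)=\sum_{q\in\alpha\cap w}b_q\equiv 0\bmod 4$. Hence $(z,X_w)=1$.

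The main obstacle I expect is the second step: making the phase argument on the sub-code $S^w$ fully rigorous, in particular checking that the restricted $T$-gate genuinely preserves the $2$D code space (so that lemma~\ref{lem:invariance}'s reasoning transfers) and handling the case $w$ a cell, where there are no logical qubits but one still needs the analogue of axiom~2 locally. The first step is a routine application of lemmas~\ref{lem:syndrome}, \ref{lem:A1}, and~\ref{lem:tolerability}.
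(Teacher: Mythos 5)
Your first step coincides with the paper's proof exactly: fix a tolerable $X_\alpha$ with syndrome $\phi$, use lemma~\ref{lem:syndrome} to get $Z_{\alpha\cap w}\in S$ (for facets, routing through tolerability as you do via $H_\alpha=G_\alpha$, or equivalently via the absence of logical operators in $G_\alpha$), and conclude $X_w\in\mathcal Z(G_\alpha)$ by lemma~\ref{lem:A1}, so that $(z,X_w)=(-1)^{g(\alpha\cap w)/2}$ for all $z\in E(\phi)$. The divergence is in the second step, where the paper simply cites the known 2-colex fact that $g(\beta)\equiv 0\bmod 4$ whenever $Z_\beta$ is a product of face operators of $w$, while you attempt to derive it.

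Your derivation as sketched has a genuine flaw: the restriction $\bigotimes_{q\in w}T^{b_q}$ of the transversal gate does \emph{not} leave the code space of $S^w$ invariant. The facet code $S^r$ is a 2D triangular color code, whose transversal gates are only Clifford; if a single power of $T$ preserved its code space the whole construction of going to 3D for the $T$ gate would be unnecessary. Lemma~\ref{lem:subsystem} cannot rescue this, since it runs in the opposite direction (it lifts a gate already known to act correctly on $S^w$ up to the full code; it says nothing about restrictions of the global $U$). Consequently the mod-8 phase-consistency argument of lemma~\ref{lem:invariance} does not transfer as stated. The repair is cheap, though, because you only need the congruence mod 4: run your phase argument with the transversal $P$ gate $\bigotimes_{q\in w}T^{2b_q}$, which \emph{does} preserve the code space of $S^w$ (this is the standard transversal $P$ gate of 2D color codes, and the same statement holds for cells, where $S^w$ encodes nothing). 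Comparing the phases it assigns to $|0\rangle$ and $X_{\alpha\cap w}|0\rangle$ then gives $2\,g(\alpha\cap w)\equiv 0\bmod 8$, i.e.\ $g(\alpha\cap w)\equiv 0\bmod 4$, which is exactly the fact the paper imports by citation. With that substitution your argument closes; without it, the second step does not go through.
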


\begin{proof}[\skproof] Let $X_\alpha$ be tolerable and have syndrome $\phi$. From lemma~\ref{lem:syndrome}, for $w\not\in W_\phi$, since $X_\alpha$ is tolerable
\begin{equation}
Z_{w\cap\alpha}\in S.
\end{equation} 
By lemma~\ref{lem:A1}, 
\begin{equation}
X_w\in \mathcal Z(G_\alpha),
\end{equation} 
so that for any $z\in E_\alpha$\begin{equation}
(z,X_w)=(-1)^{g(\alpha\cap\omega)/2}.
\end{equation}
Finally, $Z_{w\cap\alpha}$ is an element of $S^w$, the stabilizer of the 2D color code of $w$ (see the proof of lemma~\ref{lem:syndrome}), and thus the above phase is trivial~\cite{bombin:2015:gauge}.
\end{proof}

{\bf Lemma~\ref{lem:Zerror}} follows combining lemmas~\ref{lem:outside} and~\ref{lem:tree2}.

\section{Local gates in 3D topological codes}\label{sec:linking_general}

Local gates are%
\footnote{
A much more proper but lengthy name is locality-preserving unitaries~\cite{beverland:2016:protected}.
},
for the purpose of this appendix, unitaries that generalize transversal gates via the following key property: they do not spread errors beyond a certain fixed distance. Here I take an \emph{intuitive} look at the physics of local gates in 3D topological models for which excitations can be understood in terms of topologically interacting, localized, charges and fluxes.

\subsection{Invariant operator algebras}

Given a region $R$, let $\mathcal B_R$ be the algebra of operators with support on $R$, and $P_R$ a projector onto states with no excitations in a neighborhood of $R$. The neighborhood should be large enough so that the ground state subspace is invariant under the algebra
\begin{equation}
\mathcal A_R:=\{P_RbP_R\,|\,b\in \mathcal B_R\}.
\end{equation}

We are interested in regions $R$ with the topology of a 'thick' loop or a `thick' closed 2-manifold. The geometry of these regions should be defined on a scale much larger than (i) that relevant to excitations, and (ii) that relevant to the local unitaries of interest. The expectation is that if $R'$ is a yet thicker version of $R$ then
\begin{equation}
\mathcal A_{R'}=\{P_{R'}aP_{R'}\,|\,a\in \mathcal A_R\},
\end{equation}
and this correspondence between $\mathcal A_{R'}$ and $\mathcal A_R$ is one-to-one. Then for any local unitary $U$ we can choose $R'$ thick enough so that
\begin{equation}
\mathcal A_{R'} = \{P_{R'}UaU^\dagger P_{R'}\,|\,a\in \mathcal A_R\}.
\end{equation}
That is, $U$ induces an automorphism of $\mathcal A_R$. Such automorphisms are a great tool for investigating local gates~\cite{beverland:2016:protected}.

\subsection{Symmetries}

Let us focus on regions $R$ contained in the bulk of the system. We expect that 
\begin{itemize}
\item
if $R$ is loop-like, in particular the boundary of a disc-like region, then $\mathcal A_R$ is generated by the projectors $P_f$ onto states with a total flux $f$ flowing through the disc, and
\item
if $R$ is sphere-like, in particular the boundary of a ball-like region, then $\mathcal A_R$ is generated by the projectors $P_c$ onto states with a total charge $c$ inside the ball.
\end{itemize}
For such algebras~\cite{beverland:2016:protected} the automorphism induced by a local gate $U$ takes the form of a \emph{permutation} of the flux / charge labels. The permutation is independent of the loop/sphere under consideration, as can be easily derived by taking into account that the vacuum sector is always mapped to itself for any given loop/sphere%
\footnote
{\emph{E.g.} given two balls, one containing the other, consider states with excitations only inside the inner ball or outside the outer one.
}. Moreover, the permutation of charges and fluxes has to be a \emph{symmetry} of the abstract charge and flux model: topological interactions and the fusion of charges or fluxes remain invariant under the permutation.

Every local unitary $U$ induces a symmetry of the topological charge and flux model. Is this enough to characterize local gates, at least inasmuch as bulk properties are concerned? Definitely no%
\footnote{
That is, unless a symmetry is meant to encompass the linking charge effects discussed here. 
}:
the transversal T gate of 3D color codes induces a trivial symmetry and, yet, has nontrivial effects on bulk excitations.

\subsection{Linking charge}

In the previous section we considered loop-like and sphere-like regions $R$, but concluded that the information that can be gathered from them is limited. A natural step to overcome this difficulty is to consider 2-manifold-like regions $R$ with higher genus. In particular, we consider regions $R$ that are the boundary of some 'solid' region, such as a solid torus.

What is $\mathcal A_R$ for such a region $R$? Unlike in a sphere, we can now find loops within $R$ capable of enclosing non-trivial flux configurations (that avoid entering inside $R$).
Let us assume that states with no excitations in $R$ are divided into sectors $f$ where for every such loop the flux is well defined.
Let $P_{f}\in \mathcal A_R$ be the projectors onto those sectors. 
We assume that the algebras
\begin{equation}
\mathcal A_{R,f} := \{P_f a P_f \,|\,a\in \mathcal A_R\},
\end{equation}
are generated by the projectors $P_c$ onto sectors with a total charge $c$ in the region enclosed by $R$.

Consider a local gate $U$ inducing a trivial permutation of both charges and fluxes, in the sense of the previous section. Such a local gate $U$ induces an automorphism of $\mathcal A_{R,f}$. In particular $U$ induces, for each flux sector $f$, a permutation $\pi_f$ of the charge enclosed by $R$. As in the previous section, this permutation is the same for any two regions $R$ and flux configurations $f$ that can be smoothly deformed into each other. The all important difference with respect to the permutations in the previous section is that, for general values of $f$, there is no reason for the trivial sector to remain invariant. This motivates defining the \emph{linking charge}
\begin{equation}
\lambda_f := \pi_f (1)
\end{equation}
where $1$ represent the trivial charge. As we argue next, the values $\lambda_f$ on a torus completely fix $\pi_f$, not only for the torus but also for higher genus surfaces.

\begin{figure}[h]
\includegraphics[width=\columnwidth]{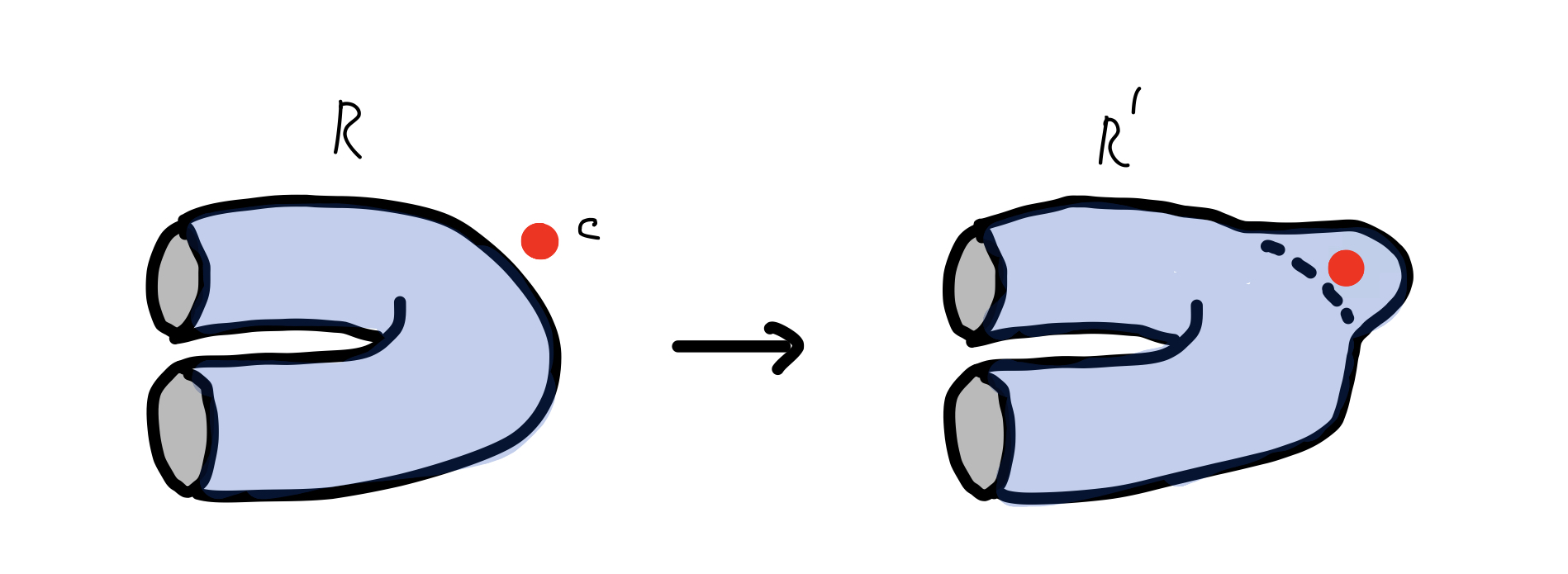}
\caption{Modifying a surface-like region $R$ by adding a `bump' to the region enclosed by $R$.}
\label{fig:bump}
\end{figure}

Consider a slight deformation $R'$ of our region $R$, obtained by adding a 'bump' to the region enclosed by $R$, see figure~\ref{fig:bump}. The flux configuration sectors $f$ of $R'$ and $R$ are trivially identified: these sectors can be defined using loops that avoid the subregion where $R'$ and $R$ differ. For a given sector $f$, consider a state such that $R$ encloses a trivial charge but $R'$ encloses a charge $c$. Since the charge $c$ is contained within the bump, which is sphere-like, it does not change under $U$. Then we can compute in two different ways the charge inside $R'$ after $U$ is applied to such a state, which yields:
\begin{equation}
\pi_f(c)= \lambda_f\times c,
\end{equation}
where $\times$ denotes charge fusion. We conclude that $\lambda_f$ is abelian (it has well defined fusion with any other charge) and that the permutation adds a charge $\lambda_f$.

To see that the linking charge of a torus fixes the linking charge for higher genus surfaces, consider the  two geometries of figure~\ref{fig:tori}. Any flux configuration that avoids the surface on the first geometry (left) can be deformed into a flux configuration that avoids the two tori of the second geometry. In going from the first geometry to the second, we potentially lose information about the flux sectors $f$. Since the whole deformation of the flux can be done in the region bounded by the first surface, it has to be the case that this information about flux sectors is irrelevant to compute the linking charge: by adding the linking charges computed for each torus separately, we can recover the linking charge for the first surface.

\begin{figure}
\includegraphics[width=\columnwidth]{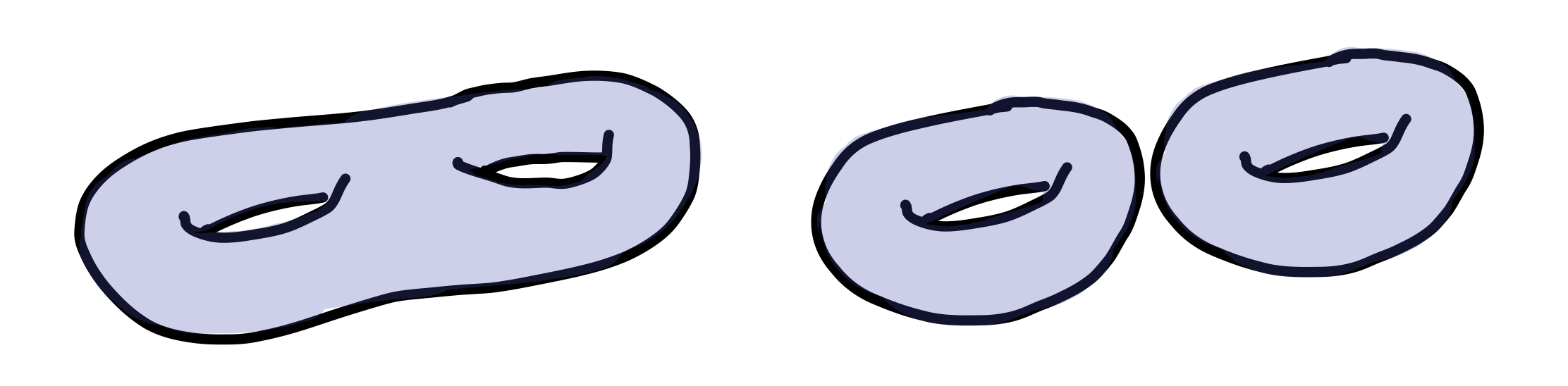}
\caption{A genus-2 region and two tori.}
\label{fig:tori}
\end{figure}

\subsection{Towards a classification of local gates}

It is unclear if symmetries and linking charges are enough to characterize local gates, at least regarding their action in the bulk. We can, however, consider as an example 3D color codes. There is a gate that induces both a trivial symmetry and trivial linking charges: the transversal phase gate P. However, it turns out that P can be performed in a 2D sub-region of the system. Thus a potential scenario consists of a hierarchy of gates, where
\begin{itemize}
\item
truly 3D gates can be characterized by symmetry and linking charges,
\item
some gates can be performed in 2D (and maybe characterized with the same tools as local gates on 2D TQFTs), and
\item
1D gates are string operators carrying abelian charge.
\end{itemize}
This is by analogy with the case of 2D TQFTs, where local gates are fully characterized, up to abelian string operators, by the symmetry transformation they induce%
\footnote{
This can be shown with arguments analogous to the ones presented in this appendix.
}.

\bibliography{refs}

\begin{thebibliography}{13}%
\makeatletter
\providecommand \@ifxundefined [1]{%
 \@ifx{#1\undefined}
}%
\providecommand \@ifnum [1]{%
 \ifnum #1\expandafter \@firstoftwo
 \else \expandafter \@secondoftwo
 \fi
}%
\providecommand \@ifx [1]{%
 \ifx #1\expandafter \@firstoftwo
 \else \expandafter \@secondoftwo
 \fi
}%
\providecommand \natexlab [1]{#1}%
\providecommand \enquote  [1]{``#1''}%
\providecommand \bibnamefont  [1]{#1}%
\providecommand \bibfnamefont [1]{#1}%
\providecommand \citenamefont [1]{#1}%
\providecommand \href@noop [0]{\@secondoftwo}%
\providecommand \href [0]{\begingroup \@sanitize@url \@href}%
\providecommand \@href[1]{\@@startlink{#1}\@@href}%
\providecommand \@@href[1]{\endgroup#1\@@endlink}%
\providecommand \@sanitize@url [0]{\catcode `\\12\catcode `\$12\catcode
  `\&12\catcode `\#12\catcode `\^12\catcode `\_12\catcode `\%12\relax}%
\providecommand \@@startlink[1]{}%
\providecommand \@@endlink[0]{}%
\providecommand \url  [0]{\begingroup\@sanitize@url \@url }%
\providecommand \@url [1]{\endgroup\@href {#1}{\urlprefix }}%
\providecommand \urlprefix  [0]{URL }%
\providecommand \Eprint [0]{\href }%
\providecommand \doibase [0]{http://dx.doi.org/}%
\providecommand \selectlanguage [0]{\@gobble}%
\providecommand \bibinfo  [0]{\@secondoftwo}%
\providecommand \bibfield  [0]{\@secondoftwo}%
\providecommand \translation [1]{[#1]}%
\providecommand \BibitemOpen [0]{}%
\providecommand \bibitemStop [0]{}%
\providecommand \bibitemNoStop [0]{.\EOS\space}%
\providecommand \EOS [0]{\spacefactor3000\relax}%
\providecommand \BibitemShut  [1]{\csname bibitem#1\endcsname}%
\let\auto@bib@innerbib\@empty
\bibitem [{\citenamefont {Lidar}\ and\ \citenamefont
  {Brun~(editors)}(2013)}]{lidar:2013:quantum}%
  \BibitemOpen
  \bibfield  {author} {\bibinfo {author} {\bibfnamefont {D.}~\bibnamefont
  {Lidar}}\ and\ \bibinfo {author} {\bibfnamefont {T.}~\bibnamefont
  {Brun~(editors)}},\ }\href@noop {} {\emph {\bibinfo {title} {Quantum Error
  Correction}}}\ (\bibinfo  {publisher} {Cambridge University Press},\ \bibinfo
  {address} {New York},\ \bibinfo {year} {2013})\BibitemShut {NoStop}%
\bibitem [{\citenamefont {Dennis}\ \emph {et~al.}(2002)\citenamefont {Dennis},
  \citenamefont {Kitaev}, \citenamefont {Landahl},\ and\ \citenamefont
  {Preskill}}]{dennis:2002:tqm}%
  \BibitemOpen
  \bibfield  {author} {\bibinfo {author} {\bibfnamefont {E.}~\bibnamefont
  {Dennis}}, \bibinfo {author} {\bibfnamefont {A.}~\bibnamefont {Kitaev}},
  \bibinfo {author} {\bibfnamefont {A.}~\bibnamefont {Landahl}}, \ and\
  \bibinfo {author} {\bibfnamefont {J.}~\bibnamefont {Preskill}},\ }\href@noop
  {} {\bibfield  {journal} {\bibinfo  {journal} {J. Math. Phys.}\ }\textbf
  {\bibinfo {volume} {43}},\ \bibinfo {pages} {4452} (\bibinfo {year}
  {2002})}\BibitemShut {NoStop}%
\bibitem [{\citenamefont {Bombin}\ \emph {et~al.}(2013)\citenamefont {Bombin},
  \citenamefont {Chhajlany}, \citenamefont {Horodecki},\ and\ \citenamefont
  {Martin-Delgado}}]{bombin:2013:self}%
  \BibitemOpen
  \bibfield  {author} {\bibinfo {author} {\bibfnamefont {H.}~\bibnamefont
  {Bombin}}, \bibinfo {author} {\bibfnamefont {R.~W.}\ \bibnamefont
  {Chhajlany}}, \bibinfo {author} {\bibfnamefont {M.}~\bibnamefont
  {Horodecki}}, \ and\ \bibinfo {author} {\bibfnamefont {M.}~\bibnamefont
  {Martin-Delgado}},\ }\href@noop {} {\bibfield  {journal} {\bibinfo  {journal}
  {New J. Phys.}\ }\textbf {\bibinfo {volume} {15}},\ \bibinfo {pages} {55023}
  (\bibinfo {year} {2013})}\BibitemShut {NoStop}%
\bibitem [{\citenamefont
  {Bombin}(2015{\natexlab{a}})}]{bombin:2015:single-shot}%
  \BibitemOpen
  \bibfield  {author} {\bibinfo {author} {\bibfnamefont {H.}~\bibnamefont
  {Bombin}},\ }\href@noop {} {\bibfield  {journal} {\bibinfo  {journal} {Phys.
  Rev. X}\ }\textbf {\bibinfo {volume} {5}},\ \bibinfo {pages} {031043}
  (\bibinfo {year} {2015}{\natexlab{a}})}\BibitemShut {NoStop}%
\bibitem [{\citenamefont {Bombin}\ and\ \citenamefont
  {Martin-Delgado}(2007{\natexlab{a}})}]{bombin:2007:3dcc}%
  \BibitemOpen
  \bibfield  {author} {\bibinfo {author} {\bibfnamefont {H.}~\bibnamefont
  {Bombin}}\ and\ \bibinfo {author} {\bibfnamefont {M.}~\bibnamefont
  {Martin-Delgado}},\ }\href@noop {} {\bibfield  {journal} {\bibinfo  {journal}
  {Phys. Rev. Lett.}\ }\textbf {\bibinfo {volume} {98}},\ \bibinfo {pages}
  {160502} (\bibinfo {year} {2007}{\natexlab{a}})}\BibitemShut {NoStop}%
\bibitem [{\citenamefont {Bombin}(2015{\natexlab{b}})}]{bombin:2015:gauge}%
  \BibitemOpen
  \bibfield  {author} {\bibinfo {author} {\bibfnamefont {H.}~\bibnamefont
  {Bombin}},\ }\href@noop {} {\bibfield  {journal} {\bibinfo  {journal} {New J.
  Phys.}\ }\textbf {\bibinfo {volume} {17}},\ \bibinfo {pages} {083002}
  (\bibinfo {year} {2015}{\natexlab{b}})}\BibitemShut {NoStop}%
\bibitem [{\citenamefont {Bombin}\ and\ \citenamefont
  {Martin-Delgado}(2006)}]{bombin:2006:2dcc}%
  \BibitemOpen
  \bibfield  {author} {\bibinfo {author} {\bibfnamefont {H.}~\bibnamefont
  {Bombin}}\ and\ \bibinfo {author} {\bibfnamefont {M.}~\bibnamefont
  {Martin-Delgado}},\ }\href@noop {} {\bibfield  {journal} {\bibinfo  {journal}
  {Phys. Rev. Lett.}\ }\textbf {\bibinfo {volume} {97}},\ \bibinfo {pages}
  {180501} (\bibinfo {year} {2006})}\BibitemShut {NoStop}%
\bibitem [{\citenamefont {Bombin}\ and\ \citenamefont
  {Martin-Delgado}(2007{\natexlab{b}})}]{bombin:2007:branyons}%
  \BibitemOpen
  \bibfield  {author} {\bibinfo {author} {\bibfnamefont {H.}~\bibnamefont
  {Bombin}}\ and\ \bibinfo {author} {\bibfnamefont {M.}~\bibnamefont
  {Martin-Delgado}},\ }\href@noop {} {\bibfield  {journal} {\bibinfo  {journal}
  {Phys. Rev. B}\ }\textbf {\bibinfo {volume} {75}},\ \bibinfo {pages} {75103}
  (\bibinfo {year} {2007}{\natexlab{b}})}\BibitemShut {NoStop}%
\bibitem [{\citenamefont {Bombin}()}]{bombin:2018:colorful}%
  \BibitemOpen
  \bibfield  {author} {\bibinfo {author} {\bibfnamefont {H.}~\bibnamefont
  {Bombin}},\ }\href@noop {} {\enquote {\bibinfo {title} {2d quantum
  computation with 3d topological codes},}\ }\bibinfo {note} {ArXiv:
  quant-ph/1810.?????}\BibitemShut {Stop}%
\bibitem [{\citenamefont {Rudolph}(2017)}]{rudolph:2017:optimistic}%
  \BibitemOpen
  \bibfield  {author} {\bibinfo {author} {\bibfnamefont {T.}~\bibnamefont
  {Rudolph}},\ }\href@noop {} {\bibfield  {journal} {\bibinfo  {journal} {APL
  Photonics}\ }\textbf {\bibinfo {volume} {2}},\ \bibinfo {pages} {030901}
  (\bibinfo {year} {2017})}\BibitemShut {NoStop}%
\bibitem [{\citenamefont {Bombin}(2016)}]{bombin:2016:dimensional}%
  \BibitemOpen
  \bibfield  {author} {\bibinfo {author} {\bibfnamefont {H.}~\bibnamefont
  {Bombin}},\ }\href@noop {} {\bibfield  {journal} {\bibinfo  {journal} {New J.
  Phys.}\ }\textbf {\bibinfo {volume} {18}},\ \bibinfo {pages} {043038}
  (\bibinfo {year} {2016})}\BibitemShut {NoStop}%
\bibitem [{\citenamefont {Knill}\ \emph {et~al.}(1996)\citenamefont {Knill},
  \citenamefont {Laflamme},\ and\ \citenamefont
  {Zurek}}]{knill:1996:threshold}%
  \BibitemOpen
  \bibfield  {author} {\bibinfo {author} {\bibfnamefont {E.}~\bibnamefont
  {Knill}}, \bibinfo {author} {\bibfnamefont {R.}~\bibnamefont {Laflamme}}, \
  and\ \bibinfo {author} {\bibfnamefont {W.}~\bibnamefont {Zurek}},\
  }\href@noop {} {\bibfield  {journal} {\bibinfo  {journal} {arXiv:
  quant-ph/9610011}\ } (\bibinfo {year} {1996})}\BibitemShut {NoStop}%
\bibitem [{\citenamefont {Beverland}\ \emph {et~al.}(2016)\citenamefont
  {Beverland}, \citenamefont {Buerschaper}, \citenamefont {Koenig},
  \citenamefont {Pastawski}, \citenamefont {Preskill},\ and\ \citenamefont
  {Sijher}}]{beverland:2016:protected}%
  \BibitemOpen
  \bibfield  {author} {\bibinfo {author} {\bibfnamefont {M.}~\bibnamefont
  {Beverland}}, \bibinfo {author} {\bibfnamefont {O.}~\bibnamefont
  {Buerschaper}}, \bibinfo {author} {\bibfnamefont {R.}~\bibnamefont {Koenig}},
  \bibinfo {author} {\bibfnamefont {F.}~\bibnamefont {Pastawski}}, \bibinfo
  {author} {\bibfnamefont {J.}~\bibnamefont {Preskill}}, \ and\ \bibinfo
  {author} {\bibfnamefont {S.}~\bibnamefont {Sijher}},\ }\href@noop {}
  {\bibfield  {journal} {\bibinfo  {journal} {Journal of Mathematical Physics}\
  }\textbf {\bibinfo {volume} {57}},\ \bibinfo {pages} {022201} (\bibinfo
  {year} {2016})}\BibitemShut {NoStop}%
\end{thebibliography}%

\end{document}